\documentclass[11pt]{article}
\usepackage[ruled,linesnumbered]{algorithm2e} 
\usepackage{amsmath,amsthm,amssymb,color}
\usepackage{xcolor}
\usepackage{latexsym,bbm,xspace,graphicx,float}
\definecolor{newblue}{rgb}{0.19, 0.55, 0.91}
\usepackage[colorlinks,citecolor=newblue,bookmarks=true,pagebackref=true, urlcolor=blue, linkcolor=blue, linktoc=page]{hyperref}
\usepackage[capitalise, nameinlink]{cleveref}
\SetKwRepeat{Do}{do}{while}
\usepackage[utf8]{inputenc} 
\usepackage[T1]{fontenc}    
\usepackage{hyperref}       
\usepackage{url}            
\usepackage{booktabs}       
\usepackage{amsfonts}       
\usepackage{nicefrac}       
\usepackage{microtype}      
\usepackage{xcolor}         

\usepackage{cs270}

\SetCommentSty{mycommfont}
\usepackage{comment}


\newcommand{\polylog}[1]{\textnormal{polylog}\,{#1}\xspace}
\newcommand{\Ot}{\ensuremath{\widetilde{O}}} 


\begin{document}
\title{Space Complexity of Minimum Cut Problems in Single-Pass Streams}
\date{December 6, 2024}
\author{Matthew Ding\footnote{(\href{mailto:matthewding@berkeley.edu}{matthewding@berkeley.edu}) Department of Electrical Engineering and Computer Sciences, University of California, Berkeley. Supported in part by NSF CCF-1951384.}
\and
Alexandro Garces\footnote{(\href{mailto:agarces2@mit.edu}{agarces2@mit.edu}) Department of Mathematics, Massachusetts Institute of Technology. Supported in part by NSF CNS-2150186.}
\and
Jason Li\footnote{(\href{mailto:jmli@alumni.cmu.edu}{jmli@cs.cmu.edu}) Computer Science Department, Carnegie Mellon University. This work was done in part as a Research Fellow at the Simons Institute for the Theory of Computing.}
\and
Honghao Lin\footnote{(\href{mailto:honghaol@andrew.cmu.edu}{honghaol@andrew.cmu.edu}) Computer Science Department, Carnegie Mellon University. Supported in part by a Simons Investigator Award, NSF CCF-2335412, and a CMU Paul and James Wang Sercomm Presidential Graduate Fellowship.} 
\and
Jelani Nelson\footnote{(\href{mailto:minilek@berkeley.edu}{minilek@berkeley.edu}) Department of Electrical Engineering and Computer Sciences, University of California, Berkeley. Supported in part by NSF CCF-1951384 and NSF CCF-2427808.}
\and
Vihan Shah
\footnote{(\href{mailto:vihan.shah@uwaterloo.ca}{\text{vihan.shah@uwaterloo.ca}}) Department of Computer Science, University of Waterloo. Supported in part by Sepehr Assadi's Sloan Research Fellowship and NSERC Discovery Grant.} 
\and
David P. Woodruff\footnote{(\href{mailto:dwoodruf@andrew.cmu.edu}{dwoodruf@andrew.cmu.edu}) Computer Science Department, Carnegie Mellon University. Supported in part by a Simons Investigator Award and NSF CCF-2335412.}
}

\clearpage\maketitle
\thispagestyle{empty}

\begin{abstract}
     We consider the problem of finding a minimum cut of a weighted graph presented as a single-pass stream. While graph sparsification in streams has been intensively studied, the specific application of finding minimum cuts in streams is less well-studied. To this end, we show upper and lower bounds on minimum cut problems in insertion-only streams for a variety of settings, including for both randomized and deterministic algorithms, for both arbitrary and random order streams, and for both approximate and exact algorithms. One of our main results is an $\widetilde{O}(n/\varepsilon)$ space algorithm with fast update time for approximating a spectral cut query with high probability on a stream given in an arbitrary order. Our result breaks the $\Omega(n/\varepsilon^2)$ space lower bound required of a sparsifier that approximates all cuts simultaneously. Using this result, we provide streaming algorithms with near optimal space of $\widetilde{O}(n/\varepsilon)$ for minimum cut and approximate all-pairs effective resistances, with matching space lower-bounds. The amortized update time of our algorithms is $\widetilde{O}(1)$, provided that the number of edges in the input graph is at least $(n/\varepsilon^2)^{1+o(1)}$. We also give a generic way of incorporating sketching into a recursive contraction algorithm to improve the post-processing time of our algorithms. In addition to these results, we give a random-order streaming algorithm that computes the {\it exact} minimum cut on a simple, unweighted graph using $\widetilde{O}(n)$ space. Finally, we give an $\Omega(n/\varepsilon^2)$ space lower bound for deterministic minimum cut algorithms which matches the best-known upper bound up to polylogarithmic factors.  
\end{abstract}

\thispagestyle{empty}
\newpage
\thispagestyle{empty}
\tableofcontents
\newpage
\clearpage
\setcounter{page}{1}

\section{Introduction}
We consider the graph streaming model, which is a key model for computations on massive graph datasets that has been extensively studied over the past couple of decades (see, e.g., \cite{McGregor2014} for a survey).
We specifically study the problem of finding minimum cuts in adversarial and random-order streams, which has been less studied than the problem of general cut sparsification of graphs. Besides being theoretically interesting, finding minimum cuts is also a problem of practical interest. For example, they allow for the calculation of social network metrics such as influence \cite{wang2020efficient} and are used to quantify the robustness of power networks \cite{HendrickxPowerNetwork} and road networks \cite{SugiuraRoadNetwork} to failures. 

\subsection{Adversarial Streams}

The key method so far for solving minimum cut with low-memory is the usage of cut sparsifiers. The notion of cut sparsifiers was introduced by Benczur and Karger~\cite{BK96} and has been extremely influential. Given a weighted graph $G = (V, E, w)$ with $n = |V|$ vertices and $m = |E|$ edges, and polynomially bounded edge weights $w:E \rightarrow \mathbb{R}_+$, together with an accuracy parameter $\eps > 0$, a cut sparsifier of $G$ is a sparse subgraph $H$ on the same vertex set $V$ but with (possibly) different edge weights such that the weight of every cut in $G$ is $(1+\eps)$-approximated by the weight of the corresponding cut in $H$. For two sets $S, T \subseteq V$, we use $E(S, T) = \{(u, v) \in E: u \in S, v \in T\}$ to denote the set of edges between $S$ and $T$ in graph $G$ and $w_G(S, T) = \sum_{e \in E(S, T)} w_e$ to denote the total weight of edges between $S$ and $T$ in graph $G$.  Formally, we have the following definition,

\begin{definition}[For-All Cut Sparsifier]
$H$ is a $(1 + \eps)$ for-all cut sparsifier of $G$ if and only if the following holds for all $\varnothing \subset S \subset V$:
\[w_H(S, V \setminus S) \approx_\eps w_G(S, V \setminus S)\]
where $a \approx_\eps b$ is defined as $(1-\eps)\cdot b \leq a \leq (1+\eps)\cdot b$.
\end{definition}

Specifically, \cite{BK96} shows that a cut sparsifier always exists with $O(n \log n/\eps^2)$ edges. This bound was improved to $O(n/\eps^2)$ edges by \cite{BSS12} and \cite{alon1997edge,AndoniCKQWZ16,carlson2019optimal} proved a lower bound of $\Omega(n(\log n)/\eps^2)$ bits (even in the case when the cut-sparsifier $H$ is a not
necessarily a subgraph of $G$).
The existence bound was also extended to the stronger notion of spectral sparsifiers (\cite{st04, ST11, SS11, BSS12}), where the quadratic form associated with the Laplacian of graph $H$ provides a $(1 + \eps)$-approximation to that of $G$. Let $L_G$ denote the Laplacian matrix of graph $G$. For every subset $S \subseteq V$, let $x_S$ be the binary indicator vector of $S$ such that $x_i = 1$ if $i \in S$ and $x_i = 0$ if $i \notin S$. Then we have $x_S^ \top L_{G} x_S = w_G(S, V \setminus S)$. The definition of the spectral sparsifier extends the assumption that $x$ is a binary vector to an arbitrary vector.

\begin{definition}[For-All Spectral Sparsifier]
Let $G$ and $H$ be two weighted undirected graphs. Fix $0 < \eps < 1$. We say $H$ is a $(1 + \eps)$ for-all spectral sparsifier of $G$ iff the following holds:
\[(1-\eps)\cdot L_G \preceq L_H \preceq (1+\eps) \cdot L_G\]
where we use $A \preceq B$ to denote $\forall x \in \R^{n}, x^\top A x \leq x^\top B x$.
\end{definition}
This construction has had a tremendous impact on cut problems in graphs, see, e.g., \cite{BK96, BK02, KL02, She09, Mad10}.
However, for very small values of $\eps$, the $1/\eps^2$ dependence in cut sparsifiers may be prohibitive on large-scale graphs. Motivated by this, the work of \cite{AndoniCKQWZ16} relaxed the cut sparsification problem to that of outputting a data structure $\mathrm{sk}_G$ such that given any {\it fixed} cut $S \subseteq V $, the value of $\mathrm{sk}_G(S)$ is within a $(1 + \eps)$ factor of the cut value of $S$ in $G$, with probability at least $2/3$\footnote{This can be amplified to high probability, which we define to mean with probability at least $1 - 1/\mathrm{poly}(n)$. This is done by independently repeating the data structure $O(\log n)$ times and outputting the median estimate. This also lets us estimate the cut value for $\textrm{poly}(n)$ cuts simultaneously.}. This relaxed notion of the problem is called the ``for-each'' model, which should be contrasted with the previous ``for-all'' model of cut and spectral sparsification. 

Surprisingly, \cite{AndoniCKQWZ16} showed that such a data structure exists for cut sparsification with $\mathrm{poly}(n)$-bounded integer edge weights of size $\widetilde{O}(n/\eps)$\footnote{Throughout we use $\widetilde{O}(\cdot)$ to hide $\polylog(n)$ factors.}, which is optimal up to polylogarithmic factors. In a follow-up work of~\cite{jambulapati2018efficient}, the authors extended the $\widetilde{O}(n/\eps)$ upper bound to the for-each model for spectral sparsifiers. Another very interesting work is ~\cite{CGP+18}, which shows that such a sketch can be chosen as a reweighted subgraph of $G$. We specifically deal with these graphical for-each sparsifiers:
\begin{definition}[Graphical For-Each Spectral Sparsifier]
Let $G$ and $H$ be two weighted undirected graphs. Fix $0 < \eps < 1$. We say $H$ is a $(1 + \eps)$ graphical for-each spectral sparsifier of $G$ if and only if for each $x \in \R^n$, with probability at least $2/3$:
\[x^\top L_H x \approx_\eps x^\top L_G x\]
\end{definition}

In a line of work, see, e.g., \cite{McGregor2014, KLM+17, spectral_stream_2020}, efficient algorithms for sparsifiers in the for-all model in a graph stream were constructed. The state-of-the-art work~\cite{spectral_stream_2020} gives a single-pass algorithm for spectral sparsification in dynamic streams which uses space $\widetilde{O}(n/\eps^2)$. 
\begin{lemma} [\cite{spectral_stream_2020}]
\label{lem:for_all}
    Given an input graph $G$ in a stream, there exists a one-pass streaming algorithm that outputs a subgraph $H$ of $G$ such that with high probability $H$ is a $(1 + \eps)$-spectral sparsifier of $G$. Moreover, the algorithm uses  $\Ot(n/\eps^2)$ space and $\Ot(m + n/\eps^2)$ time. 
\end{lemma}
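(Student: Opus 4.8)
The plan is to reduce the streaming task to a \emph{static}, nearly-linear-time spectral sparsification routine and then run the textbook merge-and-reduce scheme over the insertion-only stream. Concretely, I would use as a black box the following static fact: on any weighted graph with $m'$ edges there is a randomized algorithm running in $\Ot(m')$ time that, with probability $1-1/\mathrm{poly}(n)$, outputs a reweighted subgraph which is a $(1+\delta)$-spectral sparsifier with $O(n\log n/\delta^2)$ edges (Spielman--Srivastava effective-resistance sampling implemented on top of a nearly-linear-time Laplacian solver, or Lee--Sun). I would also invoke two standard structural facts about the relation $\preceq$: (i) if $H_1 \approx_\delta G_1$ and $H_2 \approx_\delta G_2$ (in the spectral sense) then $L_{H_1}+L_{H_2} \approx_\delta L_{G_1}+L_{G_2}$, since the Loewner order is preserved under addition and edge-disjoint graph union corresponds to adding Laplacians; and (ii) the relation composes, namely $H \approx_{\delta_1} G'$ and $G' \approx_{\delta_2} G$ imply $H \approx_{(1+\delta_1)(1+\delta_2)-1}\, G$.

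Next I would set $\delta = \Theta(\eps/\log m)$ and $s = \Theta(n\log^3 m/\eps^2)$, break the arriving edge stream into consecutive blocks of $s$ edges (the leaves of a balanced binary tree of depth $d=O(\log m)$), and maintain sparsifiers bottom-up: whenever two sibling objects are available — either a completed leaf block of $s$ raw edges, or two child sparsifiers each of $\le s/2$ edges — take their edge union (of $\le s$ edges) and run the static routine with accuracy $\delta$ to obtain the parent sparsifier of $\le s/2$ edges, discarding the children afterwards. An induction on the level, using (i) within a level and (ii) along the path of accumulated approximations, shows that the sparsifier at a level-$i$ node is a $(1+\delta)^{i+1}$-spectral sparsifier of the union of the leaf blocks below it; at the root this is $(1+\delta)^{d+1} \le 1+\eps$ for a suitable constant in the definition of $\delta$, which gives the desired $H \approx_\eps G$. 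Since every kept edge traces back to an edge of the stream, $H$ is a reweighted subgraph of $G$. Because the stream is insertion-only, no deletions ever need to be handled, so this bottom-up maintenance is all that is required, and correctness of each of the $O(m/s)$ internal sparsification calls holds simultaneously by a union bound.

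For the resource bounds: at any moment the stored state consists of at most one object per tree level on the current root-to-leaf path, plus the partially filled leaf block — that is $O(\log m)$ sparsifiers of $\Ot(n/\eps^2)$ edges each, hence $\Ot(n/\eps^2)$ total space (and if $m < s$ one just stores $G$ verbatim, which is within the same bound). The tree has $O(m/s)$ nodes, each of which sparsifies a graph of $O(s)$ edges in $\Ot(s)$ time, for $\Ot(m)$ update time overall; accounting for maintaining at least one $\Ot(n/\eps^2)$-size sparsifier throughout yields the claimed $\Ot(m+n/\eps^2)$ time.

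I expect the main obstacle to be the time analysis rather than the correctness argument: one must ensure the static sparsification primitive is genuinely nearly linear on \emph{weighted} inputs (the intermediate graphs carry non-trivial weights), which forces the use of a nearly-linear-time Laplacian solver together with a Johnson--Lindenstrauss estimate of effective resistances in the style of Spielman--Srivastava, and one must verify that the extra $\mathrm{polylog}(m)$ factors introduced both by taking $\delta = \Theta(\eps/\log m)$ and by the $O(\log n)$-fold repetition for high-probability correctness all remain hidden inside $\Ot(\cdot)$. A tree-free alternative would be online leverage-score sampling: sample each arriving edge with probability proportional to an over-estimate of its leverage score against the current sample, maintained via an incremental JL sketch of the sample's pseudoinverse square root; since the online leverage scores sum to $\Ot(n)$ the sample has $\Ot(n/\eps^2)$ edges, but there the delicate point shifts to bounding the amortized per-edge cost of the Laplacian solves.
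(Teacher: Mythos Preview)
The paper does not prove this lemma at all; it is quoted verbatim as a black-box result from the cited reference, so there is no in-paper argument to compare against. Your merge-and-reduce sketch is a correct and standard route to the \emph{insertion-only} version of the statement, and the error composition, space, and time accounting are all sound.

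One point worth noting: the cited source actually establishes the stronger dynamic-stream guarantee (via linear sketching), which merge-and-reduce cannot deliver since deletions break the bottom-up invariant. But the present paper only ever invokes the lemma on insertion-only streams, so your simpler argument suffices for every use made of it here. Incidentally, the tree-free alternative you sketch at the end --- online leverage-score sampling with amortized nearly-linear solves --- is precisely what the paper separately imports as its Lemma on online sampling (the result attributed to Cohen--Musco--Pachocki), so you have in effect recovered both of the paper's off-the-shelf sparsification primitives.
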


However, it was not known how to construct a for-each sketch for cut sparsification in a stream in better than $\widetilde{O}(n/\eps^2)$ space, which follows from computing a for-all sparsifier. A natural question is whether we can implement a streaming algorithm that computes a for-each sparsifier with space matching the $\Omega(n/\eps)$ offline lower bound that holds for for-each sparsification \cite{AndoniCKQWZ16}. 

If we can answer this question in the affirmative, we can also get a better space bound for the approximate minimum cut problem.
It is well-known that there are only poly$(n)$ $O(1)$-approximate minimum cuts\footnote{When we say a cut is an $\alpha$-approximate minimum cut, we mean that its cut value is at most $\alpha$ times the minimum cut value.} in a graph. So we can run a for-all cut sparsification algorithm in a stream with $\eps = \Theta(1)$ in $\widetilde{O}(n)$ space and obtain a candidate set of poly$(n)$ cuts containing the minimum cut. In parallel, we estimate the value of each cut using a for-each sparsifier with failure probability $1/\textrm{poly}(n)$. 
The smallest cut out of the candidate cuts will be a $(1+\eps)$-approximate minimum cut. Thus, the space complexity is just $\widetilde{O}(n)$ plus that of maintaining a for-each sparsifier in a stream. However, the existing algorithms for computing for-each sparsifiers are not streaming algorithms. Motivated by the existence of a for-each sketch in $\widetilde{O}(n/\eps)$ bits of space, one may wonder if it is possible to obtain a $(1 + \eps)$-approximation to the minimum cut in $\widetilde{O}(n/\eps)$ space in a single-pass stream. 

\begin{question}\label{question:adversarial}
    \em What is the space and time complexity of obtaining a $(1 + \eps)$-approximate minimum cut in a single-pass insertion-only stream?
\end{question}

We also note that an exact global minimum cut streaming algorithm exists in $\widetilde{O}(n)$ space with two passes for simple graphs \cite{two_pass} (see also \cite{rsw18}) and $\log n$ passes for weighted graphs \cite{MN20}.

\subsection{Random-order streams}

We know that finding the minimum cut exactly in single-pass adversarial streams needs $\Omega(n^2)$ space \cite{Zelke2011Intractability}. We need an additional pass to get the exact minimum cut in $\Ot(n)$ space. 
However, many streaming applications require solving the problem in exactly one pass because the stream cannot be stored. 

Thus, to surpass the $\Omega(n^2)$ barrier of adversarial streams in one pass, we consider a relaxation known as random-order streams and ask whether we can find the minimum cut exactly in $o(n^2)$ space in this setting.
In this model, the graph can still be chosen adversarially, but its edges arrive in the stream in a random order.

The motivation for studying this model stems from the fact that real-world data is typically not adversarial. Adversarial streams are adversarial in both their input and arrival order, while random-order streams relax the arrival order, making them more representative of real-world data.
Additionally, the random-order streaming model may help explain the empirical performance of certain heuristics.

This model was first studied by \cite{MUNRO1980315} for sorting in limited space. Many subsequent works have studied problems in random-order streams \cite{doi:10.1137/07069328X,kapralov2014approximating,robust_2016,monemizadeh2017testable,peng2018estimating,czumaj2019testable,chakrabarti2020vertex}, some of which show a clear separation between the adversarial and random-order models.

One notable line of work focuses on maximum matching within this model \cite{konrad2012maximum,gamlath2019weighted,konrad2018simple,assadi2019coresets,farhadi2020approximate,bernstein2023improved,assadi_et_al:LIPIcs.ICALP.2021.19,hashemi2024weighted}.
The best-known upper bound in the random order model \cite{assadi_et_al:LIPIcs.ICALP.2021.19} gets a slightly better than $2/3$-approximation in semi-streaming space, which is better than the best possible $\frac{1}{1+\ln 2} \approx 0.59$-approximation in the adversarial model due to the lower bound of \cite{kapralov2021matching}, showing another separation between adversarial and random-order streams.

Motivated by this, we study the minimum cut problem in the random-order model and ask the following question:
\begin{question}\label{question:random}
    \em What is the space complexity of obtaining the exact minimum cut in a single-pass random-order stream?
\end{question}

This work resolves the above questions with matching algorithms and lower bounds. We also carefully optimize the running time and update time of our algorithms. We also obtain several additional results, which we outline in Section \ref{sec:results}.

\section{Preliminaries}

\subsection{All-Pairs Effective Resistances}
Given $G(V, E, w)$ as an electrical network on $n$ nodes in which each edge $e$ corresponds to a link of conductance $w_e$ (i.e., a resistor of resistance $1/w_e$), the effective resistance of an edge $e$ is the potential difference induced across it when a unit current is injected at one end of $e$ and extracted at the other end of $e$.
Equivalently it is equal to $R_{\text{eff}}^G = \chi^\top_{u,v} L^\dag_G \chi_{u,v}$, where $L^\dag_G$ denotes the pseudoinverse of the Laplacian of graph $G$ and $\chi_{u,v} := 1_u - 1_v$. 

In the all-pairs effective resistance problem, we are required to output a data structure, which with high probability can generate effective resistance values for each of the $n^2$ pairs of vertices, or a $(1 + \eps)$-approximation of each of the $n^2$ pairs of values in the $(1+\eps)$-approximate version of the problem. Note that the data structure implicitly represents these $n^2$ values and responds with one such value on a given query pair.

\subsection{Short-Cycle Decomposition}
Short cycle decompositions are a recent algorithmic tool for graph sketching problems introduced by Chu et al.\ \cite{CGP+18}. 

\begin{definition}[Short-Cycle Decomposition]
    An $(\hat{m}, L)$ short-cycle decomposition of a graph $G$ is a decomposition of the graph into edge-disjoint cycles of at most length $L$ and an additional $\hat{m}$ edges outside of these cycles.
\end{definition}

The main contribution of \cite{CGP+18} to spectral sparsification is in proving the following claim:
\begin{claim}\label{claim:graphical_sparsifier}
    Given an undirected weighted graph $G$ and an $(\hat{m}, L)$ short-cycle decomposition routine \textsc{CycleDecomp}, there exists an algorithm which returns with high probability a $(1+\eps)$-spectral sparsifier \emph{graph} with $\Ot(\hat{m} + nL\eps^{-1})$ edges and run-time
    \begin{equation*}
        \Ot(m) + T_{\textsc{CycleDecomp}}(O(m\log n),n)
    \end{equation*}
    where $T_{\textsc{CycleDecomp}}(m,n)$ is the running time of \textsc{CycleDecomp} on a graph with $m$ edges and $n$ vertices.
\end{claim}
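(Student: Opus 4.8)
The plan is to follow the ``cycle peeling'' paradigm of \cite{CGP+18}, treating \textsc{CycleDecomp} as a black box. After a standard preprocessing step that bins edge weights into $O(\log n)$ geometric scales (so that each scale is handled as an essentially unweighted instance — this is why \textsc{CycleDecomp} is ultimately invoked on a graph with $O(m\log n)$ edges), the algorithm maintains a reweighted subgraph and proceeds in $R = O(\log n)$ rounds. At the start of round $i$ the current graph is $G_i$, with $G_0 = G$; I call \textsc{CycleDecomp}$(G_i)$ to obtain $\hat m$ ``leftover'' edges together with a family $\mathcal{C}_i$ of edge-disjoint cycles, each of length at most $L$. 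I then apply, independently to each $C \in \mathcal{C}_i$, a randomized \emph{cycle-sparsification} primitive that returns a reweighted subgraph $\tilde C$ supported on about half of the edges of $C$, and set $G_{i+1}$ to be the leftover edges together with $\bigcup_{C \in \mathcal{C}_i}\tilde C$; this roughly halves the number of non-leftover edges. Iterating until the cycle part has shrunk to $\Ot(nL/\eps)$ edges takes $O(\log n)$ rounds and yields the stated size bound $\Ot(\hat m + nL\eps^{-1})$; since each round does $\Ot(m)$ work plus one call to \textsc{CycleDecomp}, the total running time is $\Ot(m) + T_{\textsc{CycleDecomp}}(O(m\log n),n)$.

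The technical heart is the cycle-sparsification primitive and its error analysis. I would build it to be \emph{unbiased} — writing $\Delta_C := L_{\tilde C} - L_C$, we need $\mathbb{E}[\Delta_C] = 0$ — and to have a well-controlled conditional second moment: both $\mathbb{E}[\Delta_C^2]$ and the deterministic operator norm of $\Delta_C$ should be bounded, up to polylogarithmic factors, in terms of $L_C$ and the within-cycle leverage scores of the edges of $C$. Some randomization here is unavoidable: deleting even a single edge from a cycle perturbs its Laplacian by (essentially) that edge's within-cycle leverage score, which can be $\Theta(1)$, so no deterministic halving of a cycle is spectrally faithful, and the primitive only needs to be accurate \emph{in expectation}. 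The payoff of the short-cycle decomposition is that the cycles in $\mathcal{C}_i$ are edge-disjoint, so the normalized error matrices $L_{G_i}^{\dagger/2}\Delta_C L_{G_i}^{\dagger/2}$ are mutually independent, and since $\sum_{C \in \mathcal{C}_i} L_C \preceq L_{G_i}$ their second moments telescope; a matrix concentration inequality (Matrix Bernstein) then controls the total round error $L_{G_i}^{\dagger/2}(L_{G_{i+1}} - L_{G_i})L_{G_i}^{\dagger/2}$. The length bound $L$ enters through the maximum-norm term of the inequality, and the threshold $\Ot(nL/\eps)$ on the cycle-edge count is precisely where this bound falls below the per-round error budget $\eps/R$; composing the $R$ rounds multiplicatively then gives $(1-\eps)L_G \preceq L_{G_R} \preceq (1+\eps)L_G$ with high probability.

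The step I expect to be the main obstacle is designing the cycle primitive and proving the second-moment estimate \emph{with the right normalization} — strong enough that, once summed over all edge-disjoint short cycles of a round and fed into Matrix Bernstein, it certifies a $(1\pm\eps/R)$ approximation without forcing the retained edge count above $\Ot(\hat m + nL\eps^{-1})$. A secondary subtlety is the accounting across rounds: the approximations compose multiplicatively, so the per-round budget must be set to $\eps/\Theta(\log n)$, and the high-probability guarantee must survive a union bound over all $O(\log n)$ rounds — routine once the single-round concentration statement is in place.
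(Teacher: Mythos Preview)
The paper does not supply its own proof of this claim; it is quoted from \cite{CGP+18}, accompanied only by the one-paragraph sketch that follows (odd/even sampling along each short cycle, which preserves vertex degrees exactly). Your algorithmic skeleton --- bucket weights into $O(\log n)$ scales, repeatedly call \textsc{CycleDecomp}, randomly halve each cycle, iterate $O(\log n)$ rounds --- matches that sketch and is the right shape.

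The genuine gap is in what you set out to prove and with which tool. You target the \emph{for-all} conclusion $(1-\eps)L_G \preceq L_{G_R} \preceq (1+\eps)L_G$ via Matrix Bernstein, and you assert that the threshold $\Ot(nL/\eps)$ on the cycle-edge count is ``precisely where this bound falls below the per-round error budget $\eps/R$.'' It cannot be: any for-all spectral sparsifier requires $\Omega(n/\eps^2)$ edges, so no operator-norm concentration argument can halt at $\Ot(nL/\eps)$ edges (with polylogarithmic $L$) and still certify spectral error $\eps$. The $\eps^{-1}$ in the edge count is the tell that the claim is about the \emph{for-each} guarantee --- a graphical spectral \emph{sketch} --- which is exactly how the paper deploys it downstream (Lemmas~\ref{lem:spectral_sketch} and~\ref{lem:subtroutine} are both stated as for-each). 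The fix is to drop Matrix Bernstein, fix the query vector $x$ \emph{before} the randomness, and apply \emph{scalar} Bernstein to $\sum_C x^\top \Delta_C x$. The degree-preserving property of odd/even sampling is what buys the missing factor of $\eps$: since each $\Delta_C$ has zero diagonal, only the off-diagonal (adjacency) part fluctuates, and its scalar variance admits a bound that, summed over edge-disjoint short cycles, scales with the \emph{maximum} relevant leverage score rather than a sum --- this is where the $nL/\eps$ threshold actually enters. Your proposal correctly flags the second-moment estimate on the cycle primitive as the crux, but normalizing it in operator norm instead of against a fixed $x$ loses exactly the factor of $\eps$ that separates a sketch from a sparsifier.
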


As a high-level overview, the short-cycle decomposition is key as a degree-preserving sparsification method. Given each cycle, we can label them numerically in order and sample either all the odd or all the even labeled edges, each with $1/2$ probability. Previous works on ``for-each'' sparsifiers \cite{AndoniCKQWZ16, jambulapati2018efficient} used a recursive expander decomposition and subsampling at each level, with the main issue being that the degrees were not well-preserved. This issue was solved by explicitly storing the degrees of the graph, causing their spectral sparsifier data structure to no longer be a graph. The work of \cite{CGP+18} circumvents this issue with the short-cycle decomposition by constructing subsampled graphs that exactly preserve their vertex degree.

Note that \cite{CGP+18} already gives a graphical spectral sparsifier with $\Ot(n/\eps)$ edges and bits of working memory using a basic brute-force search for cycle decomposition. However, its runtime for construction is $O(mn)$, which is prohibitively slow as a subroutine. On the other hand, later work by Parter and Yogev \cite{optimal_cycle_decomposition} shows a deterministic algorithm to compute $(O(n\log n), O(\log^2n))$ short-cycle decompositions. This combined with \cref{claim:graphical_sparsifier} will allow us to design an algorithm in $m^{1+o(1)}$ time which returns a $(1+\eps)$-spectral sparsifier which is a reweighted graph with $\Ot(n/\eps)$ edges. This gives near-optimal parameters except for a required $m^{1+o(1)}$ bits of working memory. However, in \cref{sec:graphical_spectral_sketch}, we show that we can implement this subroutine so that it is guaranteed to use only $\Ot(m)$ working memory, achieving near-optimal results in both working memory usage and running time.

\section{Overview of Results}\label{sec:results}
\begin{table}
\caption{Minimum Cut Space Complexity in the Single-Pass Insertion-Only Streaming Setting}
\renewcommand{\arraystretch}{1.5}  
\begin{center}
\begin{tabular}{||l | l | l | l||} 
 \hline
 Stream Type & Exact/Approx $(1+\eps)$ & Upper Bound & Lower Bound\\ [0.5ex] 
 \hline\hline
 Adversarial & Exact & $O(n^2)$ (full graph) & $\Omega(n^2)$ \cite{Zelke2011Intractability} \\ 
 \hline
 Adversarial & Approx, Deterministic & $\Ot(n/\eps^2)$ \cite{BSS12} & $\Omega(n/\eps^2)^*$ (\cref{thm:lower_bound})\\
    \hline
 Adversarial & Approx, Randomized & $\widetilde{O}(n/\eps)$ (\cref{corollary:min_cut}) & $\Omega(n/\eps)$ (\cref{thm:lower_bound}) \\
 \hline
 Random-Order & Exact & $\widetilde{O}(n)$ (\cref{thm:random_order}) & $\Omega(n)$ \cite{robust_2016} \\ [1ex] 
 \hline
\end{tabular}
\end{center}
\label{table:minimum_cut}
\end{table}

We list the current state-of-the-art results for single-pass minimum cut streaming algorithms in \cref{table:minimum_cut}. Our main result is an $\widetilde{O}(n/\eps)$ space streaming algorithm that finds a $(1+\eps)$-approximate minimum cut in a single-pass stream. 

\begin{theorem}\label{corollary:min_cut}
    There is a one-pass insertion-only streaming algorithm that, with high probability, computes an $(1+\eps)$-approximation of the minimum cut on weighted graphs using $\widetilde{O}(n/\eps)$ bits of space. Moreover, our algorithm takes $\widetilde{O}(m) + (n/\eps^2)^{1 + o(1)}$ total update time and $\widetilde{O}(n^2/\eps^2)$ post-processing time.
\end{theorem}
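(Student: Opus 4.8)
This statement is, in effect, a corollary of our streaming for-each spectral sparsifier (\cref{sec:graphical_spectral_sketch}) together with the standard reduction from approximate minimum cut to cut enumeration, so the plan is to run two sketches in parallel over the single pass. The first, $H_1$, is used only to produce a short list of candidate cuts: I maintain via \cref{lem:for_all} a $(1+1/\log^2 n)$-spectral sparsifier of $G$, which is in particular a $(1+1/\log^2 n)$-cut sparsifier and costs only $\widetilde{O}(n)$ bits of space and $\widetilde{O}(m)$ update time, both negligible against the budget. The second is our for-each graphical spectral sparsifier $H_2$ with parameter $\eps$, which has $\widetilde{O}(n/\eps)$ edges; I run $O(\log n)$ independent copies and answer any query by the median of the copies, so that any fixed query is $(1\pm\eps)$-accurate with probability $1-n^{-c}$ for a large constant $c$, while the space remains $\widetilde{O}(n/\eps)$ and the total update time remains $\widetilde{O}(m)+(n/\eps^2)^{1+o(1)}$.

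In post-processing I would first build the candidate set. Since $H_1$ distorts every cut by at most a $(1+1/\log^2 n)$ factor, any minimum cut $S^\star$ of $G$ is a $(1+o(1))$-approximate minimum cut of $H_1$. Using the classical facts that a graph has only $n^{O(\alpha)}$ cuts of value at most $\alpha$ times its minimum cut, and that one run of Karger's recursive-contraction algorithm isolates a fixed $(1+o(1))$-approximate minimum cut at a leaf of its recursion tree with probability $\Omega(1/\operatorname{polylog} n)$, I run $\widetilde{O}(1)$ independent recursive contractions on $H_1$ and collect the leaf cuts into a set $\mathcal{C}$; with high probability $|\mathcal{C}|=\widetilde{O}(n^2)$, $S^\star\in\mathcal{C}$, and every $S\in\mathcal{C}$ satisfies $w_G(S,V\setminus S)=O(\operatorname{mincut}(G))$. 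I then evaluate each candidate with $H_2$, computing $w_{H_2}(S,V\setminus S)=x_S^\top L_{H_2}x_S$, and output the minimizer $\hat S$, reporting the value $w_{H_2}(\hat S,V\setminus\hat S)$.

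Correctness is then short. Each of the $\widetilde{O}(n^2)$ cuts in $\mathcal{C}$ is determined by $H_1$ and the contraction coins alone, hence is independent of the randomness of $H_2$, so a union bound over $\mathcal{C}$ gives that with high probability $w_{H_2}(S,V\setminus S)\approx_\eps w_G(S,V\setminus S)$ for every $S\in\mathcal{C}$ simultaneously. Combining this with $S^\star\in\mathcal{C}$,
\[
w_G(\hat S,V\setminus\hat S)\;\le\;\tfrac{1}{1-\eps}\,w_{H_2}(\hat S,V\setminus\hat S)\;\le\;\tfrac{1}{1-\eps}\,w_{H_2}(S^\star,V\setminus S^\star)\;\le\;\tfrac{1+\eps}{1-\eps}\operatorname{mincut}(G),
\]
and symmetrically $w_{H_2}(\hat S,V\setminus\hat S)\in[(1-\eps),(1+\eps)]\cdot\operatorname{mincut}(G)$, so rescaling $\eps$ by a constant yields the claimed $(1+\eps)$-approximation. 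The $\widetilde{O}(m)+(n/\eps^2)^{1+o(1)}$ update bound is inherited directly from $H_2$ (the $H_1$ sketch contributes only $\widetilde{O}(m)$).

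The real obstacle is obtaining the $\widetilde{O}(n^2/\eps^2)$ post-processing time rather than the $n^{\Omega(1)}$ time that explicit enumeration of $\mathcal{C}$ followed by one $H_2$-query per cut would cost. Here the plan is to use our generic incorporation of sketching into recursive contraction: carry a contracted copy of the sketch $H_2$ down each branch of the $H_1$-recursion in lock step with the contraction of $H_1$, so that the value $w_{H_2}$ of every leaf cut is read off from the single surviving super-edge and maintained incrementally (e.g., via union--find), at cost proportional to the current sketch size at each recursion level rather than to the number of leaves; summing this per-level cost over the Karger--Stein recursion, together with the cost of the contractions on $H_1$ itself, is what should give the stated post-processing bound. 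The two points I expect to require care are (i) verifying that $(1+1/\log^2 n)$-sparsification of $H_1$ genuinely keeps a true minimum cut of $G$ ``contractible'' with inverse-polylogarithmic probability, so that $\widetilde{O}(1)$ recursive-contraction runs suffice, and (ii) performing the incremental cut-value maintenance across all $\widetilde{O}(n^2)$ leaves without ever rescanning the edge set of $H_2$.
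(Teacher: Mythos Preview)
Your high-level approach matches the paper's exactly: maintain in parallel a for-all sparsifier $K$ with accuracy $\eps'=1/\mathrm{polylog}(n)$ (the paper uses $1/\log n$ for the fast version; your $1/\log^2 n$ works equally well) and $O(\log n)$ independent for-each sparsifiers $H$ with accuracy $\eps$, then run Karger--Stein recursive contraction on $K$ with $\alpha=1+c/\log n$ to enumerate $\widetilde O(n^2)$ candidate cuts, evaluate each on $H$, and return the minimum. Your correctness chain and the independence/union-bound argument are precisely what the paper does. Your concern~(i) is a non-issue: since $K$ is a $(1+o(1))$-cut sparsifier, the true minimum cut of $G$ is automatically a $(1+o(1))$-approximate minimum cut of $K$, so \cref{lem:rec_alg} with $\alpha=1+c/\log n$ enumerates it with high probability.

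The one substantive difference is the mechanism for the $\widetilde O(n^2/\eps^2)$ post-processing, where the paper is more concrete than your plan and cleanly resolves your concern~(ii). Instead of carrying a contracted copy of the \emph{graph} $H$ down the recursion (which forces you to manage adjacency lists and argue about their sizes along every branch), the paper precomputes $TB$, where $B$ is the edge--vertex incidence matrix of $H$ (so $\|Bx_S\|_2^2=w_H(S,V\setminus S)$) and $T$ is a Johnson--Lindenstrauss matrix with $O(\eps^{-2}\log n)$ rows. A contraction of vertices $u,v$ is then just the sum of two columns of the $O(\eps^{-2}\log n)\times N$ matrix $TB$, costing $O(\eps^{-2}\log n)$, and at each leaf the cut estimate is $\|TBx_S\|_2^2$, read off in the same time. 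No edge of $H$ is ever revisited; the only per-node overhead beyond standard Karger--Stein is this $O(\eps^{-2}\log n)$ factor, which multiplied against the $\widetilde O(n^2)$ Karger--Stein work gives $\widetilde O(n^2/\eps^2)$. Your ``contract $H_2$ in lockstep'' idea is in the right spirit and can likely be made to work, but the JL-on-$B$ trick is what makes the bookkeeping trivial.
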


This shows, somewhat surprisingly, that estimating the minimum cut is easier than computing a for-all cut sparsifier in a data stream. Our algorithm is mainly based on the following new for-each spectral sparsifier in a graph stream.

\begin{lemma}\label{thm:sparsifier}
    A one-pass insertion-only streaming algorithm exists that, with high probability, constructs a $(1+\eps)$ for-each spectral sparsifier of weighted graphs with $\widetilde{O}(n/\eps)$ edges using $\widetilde{O}(n/\eps)$ bits of space. This algorithm also has total runtime $\widetilde{O}(m) + (n/\eps^2)^{1 + o(1)}$.
\end{lemma}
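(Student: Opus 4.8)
The plan is to reduce the problem in one pass to an \emph{offline} for-each sparsification of a much smaller graph, and then finish with the short-cycle-decomposition construction of \cref{claim:graphical_sparsifier}. Concretely, while the stream is being read I would run the one-pass for-all spectral sparsification algorithm of \cref{lem:for_all} to maintain a $(1+\eps)$ for-all spectral sparsifier $\tilde G$ of $G$; after the stream ends I would feed $\tilde G$ into the construction of \cref{claim:graphical_sparsifier}, instantiated with the deterministic $(O(n\log n),O(\log^2 n))$ short-cycle decomposition of~\cite{optimal_cycle_decomposition}, which returns with high probability a reweighted subgraph $H$ of $\tilde G$ with $\widetilde O(\hat m + nL\eps^{-1}) = \widetilde O(n/\eps)$ edges that is a $(1+\eps)$ for-each spectral sparsifier of $\tilde G$. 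Correctness of the composition is immediate: for any fixed $x \in \R^n$ we have $x^\top L_H x \approx_\eps x^\top L_{\tilde G} x$ with probability at least $2/3$ by \cref{claim:graphical_sparsifier}, while $x^\top L_{\tilde G} x \approx_\eps x^\top L_G x$ for \emph{every} $x$ because $\tilde G$ is a for-all sparsifier, so $x^\top L_H x \approx_{3\eps} x^\top L_G x$ with probability at least $2/3$; rescaling $\eps \leftarrow \eps/3$ gives the stated guarantee. For the running time, \cref{lem:for_all} costs $\widetilde O(m + n/\eps^2)$ and running the \cref{claim:graphical_sparsifier} construction on the $\widetilde O(n/\eps^2)$-edge graph $\tilde G$ costs $(n/\eps^2)^{1+o(1)}$, where the $n^{o(1)}$ is inherited from~\cite{optimal_cycle_decomposition} and the memory-efficient implementation of the decomposition subroutine is the one developed in \cref{sec:graphical_spectral_sketch}; the total is $\widetilde O(m) + (n/\eps^2)^{1+o(1)}$.

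The one thing this clean description gets wrong — and what I expect to be the main obstacle — is the \emph{space}: naively storing the for-all sparsifier $\tilde G$ and running the decomposition routine on it costs $\widetilde O(n/\eps^2)$ bits, whereas we are promised $\widetilde O(n/\eps)$. Since any object approximating all cuts (equivalently, all quadratic forms) of $G$ must use $\Omega(n/\eps^2)$ bits, the for-all sparsifier cannot be materialized at all, and we must exploit the for-each relaxation already inside the streaming phase. The route I would take is to never hold more than $\widetilde O(n/\eps)$ edges at once: maintain only a constant-factor for-all spectral sparsifier (costing $\widetilde O(n)$ bits by \cref{lem:for_all} with $\eps = \Theta(1)$) at $O(\log n)$ geometric weight/leverage-score scales to supply sampling rates, together with $\widetilde O(n/\eps)$ edges subsampled online by approximate leverage score via $\ell_0$/$\ell_2$-samplers; the quantitative fact making this work — exactly the one behind the for-each sparsifiers of~\cite{AndoniCKQWZ16,jambulapati2018efficient} — is that for a \emph{single} fixed quadratic form the variance of such an estimator is small enough that $\widetilde O(n/\eps)$ samples already yield a $(1+\eps)$-approximation with probability at least $2/3$, whereas a for-all guarantee would force $\widetilde O(n/\eps^2)$ samples. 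The \cref{claim:graphical_sparsifier} construction would then be applied to the graph recovered from these $\widetilde O(n/\eps)$ sampled edges, fed edge-by-edge into the implementation of \cref{sec:graphical_spectral_sketch}, so that working memory stays $\widetilde O(n/\eps)$ throughout. A cleaner alternative that also achieves $\widetilde O(n/\eps)$ space, at the cost of an $m^{1+o(1)}$ rather than $\widetilde O(m)$ update-time bound, is merge-and-reduce over the stream whose ``reduce'' step is \cref{claim:graphical_sparsifier} itself, so each node of the $O(\log m)$-depth tree holds only a $\widetilde O(n/\eps)$-edge for-each sparsifier; there, for a fixed query, correctness follows from a union bound over the $\mathrm{poly}(n)$ tree nodes after boosting each node's per-query failure probability to $1/\mathrm{poly}(n)$ by keeping $O(\log n)$ independent copies and taking medians. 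Making one of these routes yield both the $\widetilde O(n/\eps)$ memory and the $\widetilde O(m) + (n/\eps^2)^{1+o(1)}$ time simultaneously is where the bulk of the technical work lies.

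The remaining ingredients are routine. The ``with high probability'' in the statement refers to the construction succeeding: \cref{lem:for_all}, the decomposition of~\cite{optimal_cycle_decomposition}, and all the samplers each succeed with probability $1 - 1/\mathrm{poly}(n)$, and there are only $\mathrm{poly}(n)$ of them, so a union bound suffices; conditioned on that, the output is for-each in the sense of the definition, with a per-query success probability that can be amplified from $2/3$ to $1 - 1/\mathrm{poly}(n)$ (hence to simultaneous correctness over $\mathrm{poly}(n)$ queries) at only a polylogarithmic blow-up as noted in the introduction, and weighted inputs need no special handling since both \cref{lem:for_all} and \cref{claim:graphical_sparsifier} already accommodate polynomially-bounded edge weights.
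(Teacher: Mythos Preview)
Your Option~B---merge-and-reduce with the graphical for-each sparsifier of \cref{lem:subtroutine} as the reduce step---is exactly the paper's space construction, and your analysis of it is correct. The piece you are missing, the one that reconciles the $\widetilde O(n/\eps)$ space with the claimed $\widetilde O(m)+(n/\eps^2)^{1+o(1)}$ running time, is much simpler than your discussion suggests: compose the two primitives you already have. Run the \emph{online} leverage-score sampler of \cref{lem:online_sampling} on the raw stream; it makes its keep/discard decisions in $O(n\log^2 n)$ bits of working memory and emits at most $\widetilde O(n/\eps^2)$ reweighted edges as a \emph{virtual stream}. Pipe that virtual stream straight into Option~B's merge-and-reduce. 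The sampler never materializes its output---each accepted edge is handed immediately to the buffer $\mathbf{B}_0$---so peak memory is $\widetilde O(n)$ for the sampler plus $\widetilde O(n/\eps)$ for the merge-and-reduce blocks, and time is $\widetilde O(m)$ for the sampler plus $(n/\eps^2)^{1+o(1)}$ for merge-and-reduce on the $\widetilde O(n/\eps^2)$ virtual edges. That is the whole algorithm.

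Your Option~A, by contrast, rests on an incorrect quantitative claim. You assert that for a single fixed quadratic form, $\widetilde O(n/\eps)$ leverage-score samples already give a $(1+\eps)$ estimate; this is neither what \cite{AndoniCKQWZ16,jambulapati2018efficient} do nor is it true. The leverage-score estimator for $x^\top L_G x$ has worst-case variance $\Theta\bigl((n/s)(x^\top L_G x)^2\bigr)$---take $G=K_n$ and $x=e_1-e_2$ to see the lower bound---so $s=\Omega(n/\eps^2)$ samples are required even for a single query. The $\widetilde O(n/\eps)$ for-each constructions work not by thinning the same sampler but through qualitatively different, variance-reducing mechanisms (recursive expander decomposition with explicit degree storage, or short-cycle decomposition with anticorrelated odd/even sampling on cycles). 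This is precisely why the paper first reduces to an $\widetilde O(n/\eps^2)$-edge for-all sparsifier via the online sampler---which costs only $\widetilde O(n)$ working memory since its output is never stored---and only then invokes the short-cycle construction inside merge-and-reduce.
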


Our algorithm's total runtime is $\widetilde{O}(m) + (n/\eps^2)^{1 + o(1)}$, which notably implies its amortized update time is $\widetilde{O}(1)$ when $m \ge (n/\eps^2)^{1 + o(1)}$. With the $\Omega(n/\eps)$ data structure lower bound in~\cite{AndoniCKQWZ16}, our algorithm is tight in space complexity up to polylogarithmic factors. Based on our \cref{thm:sparsifier}, we can obtain an $\widetilde{O}(n/\eps)$ space streaming algorithm that can $(1 \pm \eps)$-estimate all-pairs effective resistances.

\begin{corollary}\label{corollary:efffective_resistance}
There exists a one-pass insertion-only streaming algorithm that constructs a data structure that calculates $(1+\eps)$-approximations to the effective resistances between every pair of vertices on weighted graphs using $\widetilde{O}(n/\eps)$ bits of space. Our algorithm has $\widetilde{O}(m) + (n/\eps^2)^{1 + o(1)}$ total time during the stream and $\widetilde{O}(n^2/\eps)$ post-processing time.
\end{corollary}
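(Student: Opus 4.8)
The plan is to derive \cref{corollary:efffective_resistance} from \cref{thm:sparsifier} in three steps: (i) observe that the for-each spectral sparsifier produced there is simultaneously a for-each sketch of the Laplacian \emph{pseudoinverse}, (ii) instantiate this guarantee at the difference vectors $\chi_{u,v} = 1_u - 1_v$ and amplify it to hold for all pairs, and (iii) run a Spielman--Srivastava-style all-pairs effective resistance computation on the resulting $\widetilde{O}(n/\eps)$-edge graph.

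First, run the streaming algorithm of \cref{thm:sparsifier}, independently $t = O(\log n)$ times, to obtain --- in $\widetilde{O}(n/\eps)$ bits of space and $\widetilde{O}(m) + (n/\eps^2)^{1+o(1)}$ time, with the $O(\log n)$ copies absorbed into the $\widetilde{O}$ --- graphs $H_1,\dots,H_t$, each a $(1+\eps)$ for-each spectral sparsifier of $G$ with $\widetilde{O}(n/\eps)$ edges. The structural point we need is that the short-cycle-decomposition construction underlying \cref{thm:sparsifier} is \emph{degree preserving} in the sense emphasized in the Preliminaries, so, exactly as in \cite{CGP+18, jambulapati2018efficient}, each $H_i$ also sketches the pseudoinverse quadratic form: for every fixed $z \in \R^n$, with probability at least $2/3$ we have $z^\top L_{H_i}^\dag z \approx_\eps z^\top L_G^\dag z$. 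I expect confirming that this pseudoinverse guarantee is inherited by the \emph{streamed} sparsifier to be the main obstacle. A for-each spectral sparsifier on its own yields only the one-sided bound $R_{\text{eff}}^{H_i}(u,v) \ge (1-\eps)\,R_{\text{eff}}^G(u,v)$ --- this follows by plugging the fixed $G$-optimal potential $L_G^\dag \chi_{u,v}$ into the variational characterization $R_{\text{eff}}(u,v) = \max_x (x_u-x_v)^2/(x^\top L x)$ --- while the matching upper bound would naively demand a for-\emph{all} guarantee, and so genuinely relies on the degree-preserving structure; the argument has to go through the analysis of \cref{thm:sparsifier} together with the pseudoinverse-sketch argument of \cite{CGP+18}.

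Next, apply the pseudoinverse guarantee with $z = \chi_{u,v}$, giving $R_{\text{eff}}^{H_i}(u,v) = \chi_{u,v}^\top L_{H_i}^\dag \chi_{u,v} \approx_\eps R_{\text{eff}}^G(u,v)$ with probability at least $2/3$ for each fixed pair $(u,v)$. Taking the median of the $t = O(\log n)$ independent estimates pushes the per-pair failure probability below $1/\mathrm{poly}(n)$, so a union bound over all $\binom{n}{2}$ pairs makes the data structure simultaneously $(1+\eps)$-accurate with high probability, while the total space remains $\widetilde{O}(n/\eps)$ and the stream-time bound of \cref{thm:sparsifier} is unchanged.

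Finally, for post-processing we must actually evaluate $\chi_{u,v}^\top L_{H_i}^\dag \chi_{u,v}$ over all pairs and all $i$, which is all-pairs effective resistance estimation on the $\widetilde{O}(n/\eps)$-edge graphs $H_i$. Using the Spielman--Srivastava estimator one computes, for each $i$, a random projection $Q_i := \Pi_i\, W_{H_i}^{1/2} B_{H_i}\, L_{H_i}^\dag$ with $\mathrm{poly}\log(n)/\eps^{O(1)}$ rows, each row obtained from one nearly-linear-time Laplacian solve on $H_i$, and then answers a query $(u,v)$ by the median over $i$ of $\|(Q_i)_{\cdot u} - (Q_i)_{\cdot v}\|_2^2$; equivalently one may invoke the nearly-linear-time all-pairs effective resistance routine of \cite{CGP+18} directly on $H$. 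Since each Laplacian solve on an $\widetilde{O}(n/\eps)$-edge graph costs $\widetilde{O}(n/\eps)$ and there are $\binom{n}{2}$ queries, the post-processing is $\widetilde{O}(n^2/\eps)$ as claimed. Rescaling $\eps$ by a constant to absorb the accumulated error across steps (i)--(iii) yields the stated $(1+\eps)$-approximation, completing the proof.
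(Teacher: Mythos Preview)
Your outline correctly identifies step~(i) as the crux, but the gap you flag there is real and is not closed by the argument you sketch. The pseudoinverse-sketch guarantee of \cite{CGP+18} is a property of a \emph{single} application of their short-cycle construction; it does not automatically survive the merge-and-reduce composition underlying \cref{thm:sparsifier}. The difficulty is precisely the merge step: even if each block $\mathbf{B}_j$ were a pseudoinverse sketch of its underlying edge set $G_j$, you would need the union $\mathbf{B}_0\cup\cdots\cup\mathbf{B}_{i-1}$ to be a pseudoinverse sketch of $G_0\cup\cdots\cup G_{i-1}$ before the next reduce can propagate the guarantee, and this does not follow because pseudoinverses are not additive --- $(L_{G_0}+L_{G_1})^\dag \ne L_{G_0}^\dag + L_{G_1}^\dag$ in general. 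Degree preservation alone is not what drives the pseudoinverse bound in \cite{CGP+18}; that argument relies on the specific variance structure of the cycle sampling, and you would have to re-establish it over the entire merge-and-reduce tree, which you have not done.

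The paper sidesteps this issue entirely: it simply invokes Algorithm~8 of \cite{jambulapati2018efficient}, a black-box reduction that takes as input only a \emph{forward} for-each spectral sparsifier --- exactly what \cref{thm:sparsifier} delivers, with the additivity of $L$ making merge-and-reduce unproblematic for \emph{that} property --- and outputs the effective-resistance data structure directly. That reduction is also what yields the $\widetilde{O}(n^2/\eps)$ post-processing time. By contrast, your Spielman--Srivastava route in step~(iii) needs $\Theta(\log n/\eps^2)$ JL rows to achieve $(1+\eps)$ accuracy on resistances in each $H_i$, which gives $\widetilde{O}(n^2/\eps^2)$ total query time and $\widetilde{O}(n/\eps^2)$ space to store the matrices $Q_i$, both exceeding the stated bounds.
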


Notice that we are able to match the sketching bounds of \cite{jambulapati2018efficient} up to polylogarithmic factors in the space and time required.

We also show an $\Omega(n/\eps)$ lower bound for minimum cut and all-pairs effective resistances in \cref{sec:min_cut} and \cref{sec:resistance} respectively, showing that our results of \cref{corollary:min_cut} and \cref{corollary:efffective_resistance} are space-optimal up to polylogarithmic factors.
\begin{theorem}\label{thm:lower_bound}
    Fix $\eps > 1/n$. Any randomized algorithm that outputs a $(1+\eps)$-approximation to the minimum cut of a simple, undirected graph in a single pass over a stream with probability at least $2/3$ requires $\Omega(n/\eps)$ bits of space. If the algorithm is deterministic and $\eps \geq 1/n^{1/4}$, then the algorithm requires $\Omega(n/\eps^2)$ bits of space.\footnote{This latter assumption on $\eps$ is the same as used in \cite{carlson2019optimal}.} 
\end{theorem}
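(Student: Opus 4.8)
The plan is to prove both bounds by one-pass streaming-to-communication reductions in which Alice holds a prefix of the stream encoding a hard instance $G_x$ on $n$ vertices, Bob holds a suffix, Alice forwards her memory state (of size equal to the claimed space bound) to Bob, and Bob appends a carefully chosen subgraph and reads off the reported minimum cut. The key gadget is a ``cut exposure'' step: given a target cut $(S, V\setminus S)$ with $w_{G_x}(S, V\setminus S) < \min(|S|, |V\setminus S|) - 1$, Bob appends to $G_x$ all missing edges inside $S$ and all missing edges inside $V\setminus S$, turning both sides into cliques while keeping the graph simple; then every cut that splits either side has value at least $\min(|S|, |V\setminus S|)-1$, so $(S, V\setminus S)$ becomes the unique global minimum cut, with value exactly $w_{G_x}(S, V\setminus S)$. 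Thus a one-pass $(1+\eps)$-approximate minimum-cut algorithm yields, from a single memory state, a $(1+\eps)$-approximate estimator for every ``small'' cut of $G_x$.

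For the randomized $\Omega(n/\eps)$ bound I would instantiate this with the hard family underlying the $\Omega(n/\eps)$-bit lower bound of \cite{AndoniCKQWZ16} for for-each cut sketches, arranged (after a mild modification if needed) as a sparse simple graph on $\Theta(n)$ vertices whose queried cuts are all ``small''; equivalently one sets this up directly as an (Augmented-)Index instance on $N = \Theta(n/\eps)$ bits encoded with $\Theta(1/\eps)$ edges per vertex, where the exposed cut has value $\Theta(1/\eps)$ --- small enough that a $(1+\eps)$-approximation resolves a $\pm 1$ change and hence the queried bit (here $\eps > 1/n$ guarantees the graph fits within $\binom{n}{2}$ edges). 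Since the one-way randomized communication complexity of (Augmented-)Index on $N$ bits is $\Omega(N)$ and the reduction succeeds whenever the streaming algorithm does, the space is $\Omega(n/\eps)$; the $2/3$-versus-high-probability distinction is immaterial because the lower bound already holds at success probability $2/3$.

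For the deterministic $\Omega(n/\eps^2)$ bound I combine the same reduction with a fooling-set argument: if two prefixes $G_x \ne G_{x'}$ leave a deterministic algorithm in the same memory state, then on the completion exposing a cut on which $G_x$ and $G_{x'}$ disagree by more than a $(1+\eps)$ factor the algorithm errs on at least one of the two streams, so distinct instances force distinct states and $s \ge \log_2 |\{G_x\}|$. I would take $\{G_x\}$ to be a sparse, unweighted variant of the hard family behind the $\Omega(n\log n/\eps^2)$-bit lower bound of \cite{carlson2019optimal} for for-all cut sparsifiers, for which any two distinct instances differ multiplicatively on some small, exposable cut; because we exploit only such small cuts (rather than the weighted/balanced cuts used in \cite{carlson2019optimal}) we drop a $\log n$ factor and get $\log_2 |\{G_x\}| = \Omega(n/\eps^2)$, and the restriction $\eps \ge n^{-1/4}$ --- the same one as in \cite{carlson2019optimal} --- is exactly what keeps this construction realizable on $n$ vertices (it bounds both the per-gadget cut values and the number of gadgets).

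The main obstacle is the combinatorial design of the hard instances so that the cuts carrying the information content of \cite{AndoniCKQWZ16}, respectively \cite{carlson2019optimal}, can simultaneously be exposed as the unique global minimum cut by clique-padding --- which forces them to have value strictly below the sizes of both of their sides, so the instances must be genuinely sparse and no accidental low-degree vertex may create a smaller cut --- and still carry enough information that the underlying communication or counting bound evaluates to $\Omega(n/\eps)$, respectively $\Omega(n/\eps^2)$. Verifying (or re-deriving) that the known constructions admit such a sparse, small-cut form is where essentially all the work lies; the information-theoretic and counting cores of those lower bounds then transfer unchanged.
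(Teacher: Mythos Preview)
Your cut-exposure gadget---pad both sides so the queried cut becomes the global minimum---is exactly the mechanism the paper uses, and your randomized argument is essentially the paper's alternative proof in its appendix (Index on $\Theta(n/\eps)$ bits, encoded into $\Theta(\eps n)$ disjoint blocks of $\Theta(1/\eps)$ vertices, with Bob's clique-padding exposing one edge). The paper's main-text proof of the randomized bound is shorter still: it observes that a $(1+\eps)$-approximation to the min-cut value distinguishes edge-connectivity $<k$ from $\ge k$ for $k=\Theta(1/\eps)$, and then invokes the known $\Omega(kn)$-bit lower bound for $k$-edge-connectivity from \cite{SW15} as a black box. Either route is fine.

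For the deterministic bound, your plan works in principle but takes a detour the paper avoids. You propose to open up \cite{carlson2019optimal}, extract a sparse small-cut fooling set of size $2^{\Omega(n/\eps^2)}$, and verify that every distinguishing pair differs on an exposable cut---and you correctly flag this combinatorial verification as ``where essentially all the work lies.'' The paper sidesteps this entirely: it notes that a \emph{deterministic} streaming algorithm can be re-run on its single memory state with exponentially many different suffixes, so after Alice streams $G$ and hands over her state, Bob can expose \emph{every} cut of interest in turn and recover a $(1+\eps)$-approximation to each; i.e., a deterministic $(1+\eps)$-min-cut algorithm is already a for-all sparsifier for all cuts small enough to be exposed. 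Then one simply quotes the existing $\Omega(n/\eps^2)$ lower bound of \cite{AndoniCKQWZ16} for for-all cut sketches, after checking that its hard instance (a disjoint union of $\Theta(\eps^2 n)$ bipartite graphs on $\Theta(1/\eps^2)$ vertices each) only queries cuts of value at most $1/\eps^4 \le n$, which are exposable by attaching two fresh $2n$-vertex cliques. This is the same cut-exposure idea you describe, but applied once as a black-box simulation rather than pairwise in a fooling set; it removes the need to re-engineer any hard family, and in particular explains the role of $\eps \ge n^{-1/4}$ as simply the condition $1/\eps^4 \le n$ that keeps the queried cuts below the clique-padding threshold.
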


\begin{theorem}
\label{thm:lower_bound_effective_resistance}
Fix $\eps > 1/n$. Suppose $sk(\cdot)$ is a sketching algorithm that
outputs at most $s = s(n, \eps)$ bits, and $f$ is an estimation algorithm such that, 
\[
\forall a, b \in V, \ \ \ \ \ \Pr[f(a, b, sk(G)) \in (1 \pm \eps)r_{a, b}] \ge \frac{2}{3}, 
\]
where $r_{a, b}$ is the effective resistance of nodes $a, b$. Then we have $s \ge \Omega(n/\eps)$.
\end{theorem}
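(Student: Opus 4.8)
The plan is to give a communication-complexity reduction that mirrors the structure of the minimum-cut lower bound in \cref{thm:lower_bound}, exploiting the fact that effective resistances and cut values coincide in a very simple family of graphs: parallel and series combinations of edges. Concretely, I would reduce from the \textsc{Index} problem (or a direct-sum version thereof): Alice holds a string $x \in \{0,1\}^N$ with $N = \Theta(n/\eps)$, Bob holds an index $i \in [N]$, and Bob must recover $x_i$. The standard one-way communication lower bound for \textsc{Index} is $\Omega(N)$ bits, so a sketch of size $s$ that lets Bob answer yields $s \ge \Omega(n/\eps)$.

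The heart of the argument is the gadget construction. Split the $n$ vertices into $\Theta(n)$ groups, and within each group encode $\Theta(1/\eps)$ bits of $x$ using edge multiplicities between a fixed pair of vertices $(a,b)$: roughly, place a ``base'' bundle of $k = \Theta(1/\eps^2)$ parallel unit-resistance edges between $a$ and $b$, plus one extra parallel edge for each coordinate of $x$ that is set to $1$ among the $\Theta(1/\eps)$ coordinates assigned to that group. Since $t$ parallel unit resistors have effective resistance $1/t$, the effective resistance $r_{a,b}$ between the two endpoints is $1/(k + (\text{number of }1\text{'s in the block}))$, and flipping a single bit changes $r_{a,b}$ by a multiplicative factor of roughly $1 \pm 1/k = 1 \pm \Theta(\eps^2)$ — which is \emph{not} enough on its own. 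To amplify, I would instead have Bob, who knows the target index $i$, delete all the other edges incident to $a$ and $b$ in that block except the ones corresponding to a carefully chosen \emph{prefix sum}, so that the effective resistance value $r_{a,b}$ read off the sketch is sensitive at the $(1+\eps)$ level to the single bit $x_i$. (This is the same ``prefix trick'' used by \cite{AndoniCKQWZ16} and in \cref{thm:lower_bound}: by thresholding at the right place, a $(1+\eps)$-approximate query distinguishes the two cases.) Each block thus encodes $\Theta(1/\eps)$ bits into $\Theta(1/\eps^2)$ edges and a constant number of vertices, and across $\Theta(n)$ blocks we encode $\Theta(n/\eps)$ bits total while using $n$ vertices and $\mathrm{poly}(n)$-bounded integer edge weights, as required by the theorem's hypotheses.

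The reduction itself then runs as follows: Alice builds the base graph $G_x$ from her string $x$, feeds its edges into the sketch $sk(\cdot)$, and sends the $s$-bit sketch to Bob; Bob appends his ``query edges'' that zero out the irrelevant part of block $i$ (all edge insertions, so this is legitimate in an insertion-only stream, and the total edge count stays $\mathrm{poly}(n)$), runs $f(a, b, sk(G))$ to get a $(1\pm\eps)$-approximation to $r_{a,b}$, and thresholds to recover $x_i$ with probability $\ge 2/3$. Since \textsc{Index} on $N = \Theta(n/\eps)$ bits requires $\Omega(N)$ one-way communication, we conclude $s = \Omega(n/\eps)$.

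The main obstacle I anticipate is engineering the gadget so that a $(1+\eps)$-multiplicative approximation — rather than, say, a $(1+\eps^2)$ one — suffices to extract one bit, while keeping edge weights polynomially bounded and the vertex budget at $n$. The naive ``count the ones'' encoding only gives $\Theta(\eps^2)$ sensitivity, so the prefix-sum/thresholding amplification is essential, and one has to check carefully that Bob's deletions (implemented as further insertions, or by arranging that deleted-to edges simply are never inserted by Alice in the first place and Bob only adds edges) are consistent with the insertion-only model and that the claimed failure probability is preserved after the threshold step. A secondary, minor point is handling the regime $\eps > 1/n$: when $1/\eps$ is comparable to $\sqrt{n}$ or larger the number of blocks and the per-block width have to be balanced so that both the vertex count and the $\mathrm{poly}(n)$ weight bound are respected, but this is the same bookkeeping already present in \cite{AndoniCKQWZ16,carlson2019optimal}.
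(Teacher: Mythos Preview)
Your proposal has a genuine gap, and it stems from a mismatch between the model in \cref{thm:lower_bound_effective_resistance} and the reduction you sketch.

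First, the theorem is about a \emph{sketching} algorithm: Alice computes $sk(G)$ on the entire graph and hands the $s$ bits to Bob, who may then only evaluate $f(a,b,sk(G))$ on pairs of his choosing. There is no stream for Bob to continue, so the step ``Bob appends his `query edges' that zero out the irrelevant part of block $i$'' is simply unavailable here. This is precisely where your plan diverges from the min-cut lower bound in the appendix: there Bob legitimately extends an insertion-only stream, but in the present theorem the graph is fixed once $sk(G)$ is computed. Your fallback (``arranging that deleted-to edges simply are never inserted by Alice'') does not work either, because Alice does not know $i$.

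Second, even setting the model issue aside, the gadget does not encode bits recoverably. You place all $\Theta(1/\eps)$ bits of a block as parallel edges between the \emph{same} pair $(a,b)$; then $r_{a,b}=1/(k+\text{Hamming weight of the block})$ depends only on the Hamming weight, not on any individual coordinate. Without the ability to prune edges after the fact, a $(1+\eps)$-approximate query to $r_{a,b}$ cannot isolate $x_i$. The prefix-sum trick you cite from \cite{AndoniCKQWZ16} works there because Bob controls a threshold by adding edges; that lever is absent here.

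The paper's proof avoids both problems by a different construction. It partitions the $n$ vertices into $O(\eps n)$ disjoint bipartite pieces $G_i$, each on $\Theta(1/\eps)+\Theta(1/\eps)$ vertices, and assigns each bit of the \textsc{Index} instance to a \emph{distinct} potential edge $(u,v)$ inside some $G_i$. Since the bits are uniformly random, each $G_i$ is a random bipartite graph with $p=1/2$. The key technical input is a hitting-time concentration result for $G(n,p)$ (\cref{lem:random_graph}): with high probability $h(u,v)=2|E|/\deg(v)+O(1)$, with the $O(1)$ term shifting by exactly $1/p=2$ depending on whether $(u,v)\in E$. Since $h(u,v)=\Theta(1/\eps)$ here, this additive $\Theta(1)$ shift is a multiplicative $(1\pm\Theta(\eps))$ gap in $r_{\mathrm{eff}}(u,v)$ directly. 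Bob therefore recovers his bit by a single effective-resistance query on the fixed sketch, with no graph modification needed. Alice sends $O(n\log(1/\eps))$ additional bits (degrees and edge counts per block) so Bob knows the threshold; this side information is lower order.
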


We also study the minimum cut problem in the random-order streaming model, where edges arrive in a random order instead of an arbitrary worst-case order. We prove the following result for \emph{simple} unweighted graphs, which are graphs with at most one edge between any two vertices.
\begin{theorem}\label{thm:random_order}
    There exists a one-pass insertion-only streaming algorithm in the random-order model that outputs all minimum cuts $(S, V\setminus S)$, along with their corresponding edges $E(S, V\setminus S)$, for a simple, unweighted graph and with high probability using $\widetilde{O}(n)$ space. The algorithm has $\widetilde{O}(n)$ update time and $\widetilde{O}(n^2 k)$ post-processing time, where $k$ is the value of the minimum cut.
\end{theorem}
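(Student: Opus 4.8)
The plan is to produce, during the single pass, a compact \emph{exact} certificate of the minimum-cut structure — an $\widetilde{O}(n)$-bit object that records the value of every cut of value $O(k)$ exactly — and then to post-process it to list all minimum cuts together with their edges. Two ingredients are maintained throughout, in any stream order, in $\widetilde{O}(n)$ space: the exact degree sequence $(\deg(v))_{v\in V}$, which already handles all \emph{trivial} cuts $(\{v\},V\setminus\{v\})$ exactly; and a constant-error cut sparsifier via \cref{lem:for_all} with $\varepsilon=\Theta(1)$, which yields a constant-factor estimate $\hat k=\Theta(k)$ at the end of the stream. The argument then splits on the size of $k$.

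\emph{Case $k\le\mathrm{polylog}(n)$.} I maintain the Nagamochi--Ibaraki sparse certificate $\widetilde G=F_1\cup\dots\cup F_c$ with $c=\Theta(\mathrm{polylog}(n))\ge k$ online: each arriving edge $(u,v)$ is inserted into the first forest $F_i$ in which $u$ and $v$ lie in different components (tracked by $c$ union--find structures), and discarded if no such $i\le c$ exists. This works in any order, uses $\widetilde{O}(nc)=\widetilde{O}(n)$ space and $\widetilde{O}(1)$ update time, and by the Nagamochi--Ibaraki property $\widetilde G$ preserves the value of every cut of value at most $c$ exactly; in particular every minimum cut is preserved exactly. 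Post-processing computes the minimum cut of $\widetilde G$, its cactus representation, and then enumerates all minimum cuts and their edge sets; since there can be $\Theta(n^2)$ minimum cuts, each with $k$ edges, the output — and hence the post-processing — takes $\widetilde{O}(n^2k)$ time.

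\emph{Case $k\ge\mathrm{polylog}(n)$.} Here the random-order assumption is essential: a one-pass adversarial algorithm needs $\Omega(n^2)$ space \cite{Zelke2011Intractability}. I simulate the two-pass algorithm of \cite{two_pass} within one random-order pass by cutting the stream at the $\gamma$-fraction mark for a small constant $\gamma$. Because $k$ is large and the prefix is a uniformly random $\gamma$-fraction of $E$, a Chernoff/negative-association bound, union-bounded over Karger's $n^{O(1)}$ near-minimum cuts, shows that \emph{every} cut of value $\le (1+\delta_0)k$ keeps a $(1\pm o(1))\gamma$-fraction of its edges in the prefix; hence the $(1+\delta_0)$-near-minimum cuts of the prefix coincide with those of $G$ and contain every true minimum cut. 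The first ``pass'' thus produces, from a constant-error sparsifier of the prefix, a Kawarabayashi--Thorup-style contracted certificate — a multigraph on $\widetilde{O}(n/k)$ supernodes with $\widetilde{O}(n)$ edges plus the vertex-to-supernode map, recording exactly the prefix-values of all near-minimum cuts. Over the remaining $(1-\gamma)$-fraction of the stream (the ``second pass'') I update this certificate edge by edge, so that at the end I know the \emph{exact} value in $G$ of every near-minimum cut. Post-processing again extracts the minimum and all minimizers with their edges within $\widetilde{O}(n^2k)$ time.

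\textbf{Main obstacle.} The crux is the large-$k$ case: constructing a candidate family that is simultaneously (i) \emph{complete}, i.e.\ provably contains every true minimum cut — which needs a random-order concentration bound holding uniformly over all of Karger's polynomially many near-minimum cuts — and (ii) \emph{compact and exactly maintainable} in $\widetilde{O}(n)$ space and $\widetilde{O}(n)$ update time even when $G$ is dense ($m\gg nk$), which rules out storing any constant fraction of the stream verbatim or using a uniform subsample, and instead forces a Kawarabayashi--Thorup-type structural decomposition whose exact cut values can be reconstructed online from the $\widetilde{O}(n)$-bit certificate together with the exact degree sequence. Carrying out this online reconstruction against the random-order suffix, and arguing that it is impossible in the adversarial model (matching the $\Omega(n^2)$ barrier), is the technical heart of the proof.
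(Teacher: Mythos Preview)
Your small-$k$ case via Nagamochi--Ibaraki forests is correct and arguably cleaner than the paper's route (the paper instead runs a for-all sparsifier with $\varepsilon=1/\log^2 n$, which also yields exact values by integrality once cuts are $O(\log n)$); both work in any stream order.

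The large-$k$ case, however, has a real gap. You build the Kawarabayashi--Thorup contraction from a \emph{constant-error} sparsifier of the prefix and assert that it ``records exactly the prefix-values of all near-minimum cuts.'' It does not: a constant-error sparsifier gives cut values only up to a constant factor, so the inter-supernode edge multiplicities you write into the contracted multigraph are off by a constant factor on the prefix side. You then count suffix edges exactly, but prefix $+$ suffix is still inexact, and the degree sequence only fixes singleton cuts. Put differently, you are trying to simulate the two passes of \cite{two_pass}, but there the second pass rereads \emph{all} of $E$ to fill in exact counts; your ``second pass'' sees only the $(1-\gamma)$-suffix, and the missing prefix contribution has to come from somewhere. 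You already note that storing the $\gamma$-prefix verbatim is impossible (it has $\Theta(m)$ edges), and a constant-error sparsifier cannot supply exact counts. This is precisely the obstacle you flag at the end, but the KT-certificate step as written does not resolve it.

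The paper's fix is different and does not use KT. It maintains a for-all sparsifier of the running prefix with $\varepsilon=1/\log^2 n$ (still $\widetilde{O}(n)$ space) and halts the prefix at the first dyadic time its sparsified minimum cut reaches $\Theta(\log n)$. At that moment every $1.1$-approximate minimum cut of the prefix has value $O(\log n)$, so a $(1+1/\log^2 n)$-approximation has additive error $<1$ and is therefore \emph{exact} by integrality. For the suffix the paper invokes the structural bound of \cite{rsw18}: in a simple graph only $O(n)$ edges lie in any non-singleton $(2-\varepsilon)$-approximate minimum cut, so every suffix edge crossing some candidate cut can be stored outright. Summing the exact prefix value from the sparsifier with the stored suffix edges gives the exact cut value. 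The faster update time then comes from batching suffix edges and combining the recursive-contraction enumeration with $O(\log n)$-sparse recovery sketches of the edge-vertex incidence matrix, rather than from a KT decomposition.
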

We note that \cref{thm:random_order} provides the exact minimum cut, which we find surprisingly possible in a single pass. This further adds to the surprises regarding minimum cut in the streaming model, as a $2$-pass exact algorithm in arbitrary order streams was known \cite{two_pass} while computing the exact minimum cut in a single pass in an arbitrary order stream is known to require $\Omega(n^2)$ memory \cite{Zelke2011Intractability}. 

Given that determining whether a graph is connected from a random order stream requires $\Omega(n)$ space \cite{robust_2016}, our $\widetilde{O}(n)$ space algorithm for finding the exact minimum cut in random-order streams is optimal up to polylogarithmic space factors, even for just outputting the value. Note that we compute the minimum cut and all edges crossing the cut in addition to the value. 

\subsection{Our Techniques}
\paragraph{Approximate Minimum Cut.}
Suppose that $H$ is a $(1 + \eps)$-for-all sparsifier of $G$. Then the minimum cut of $H$ is a $(1 + \eps)$-approximation to that of $G$. However, such a conclusion will not hold if $H$ is instead a $(1+\eps)$ for-each sparsifier, as the total number of cuts is exponential. As discussed earlier, fortunately the number of $1.1$-approximate minimum cuts is $O(n^2)$ (\cite{karger00}). Hence, if we run a for-all sparsifier algorithm in parallel with accuracy $\eps' = O(1)$, we can obtain a list of $O(n^2)$ candidate minimum cuts in $G$. Using the the $(1 + \eps)$ for-each sparsifier, we can then find a $(1 + \eps)$-approximation to the minimum cut by union bounding over the $O(n^2)$ candidates.

For the lower bound, we consider the $k$-edge-connectivity problem. The \emph{edge-connectivity} of a graph is the minimum number of edges that need to be deleted to disconnect the graph. The $k$-edge-connectivity problem asks whether the edge connectivity of a graph is $<k$ or $\geq k$. Suppose that there is an algorithm that solves the $(1 + \eps)$-approximate minimum cut value problem. Then, we can use it to solve the $k$-edge-connectivity problem with $k = O(1/\eps)$. Combining with the $\Omega(kn)$ bit space lower bound in the work of~\cite{SW15}, it follows that there is an $\Omega(n/\eps)$ bit space lower bound.

\paragraph{For-Each Spectral Sparsifier.} 
The main difficulty in implementing the algorithms of~\cite{AndoniCKQWZ16} and~\cite{jambulapati2018efficient} in the streaming setting is that both algorithms require careful graph decomposition. Take the algorithm in~\cite{AndoniCKQWZ16} as an example: it partitions the graph into several components, where each component is well-connected and does not have a sparse cut smaller than $1/\eps$. In the worst case, such a procedure may take $\log n$ levels recursively, which seems unachievable in a single-pass stream. Besides, since the for-each sketch of both algorithms is not a graph, it is unclear how to merge two sketches directly. Indeed, natural ways of merging such sketches may destroy the decomposition into sparse cuts. To address these issues, we instead consider the work of~\cite{CGP+18} in which the authors show how to construct a for-each sparsifier which is, in fact, a re-weighted subgraph of the input graph, with $n^{1 + o(1)}/\eps$ edges. Combining this and the merge-and-reduce framework gives a for-each spectral sparsifier with $n^{1 + o(1)}/\eps$ edges in a single-pass stream. The extra $n^{o(1)}$ factor in the space is an issue here. We notice that the extra $n^{o(1)}$ factor in the work of~\cite{CGP+18} is due to their near-linear time short cycle decomposition algorithm. Leveraging ideas from~\cite{optimal_cycle_decomposition}, we give a new short-cycle decomposition algorithm that trades off running time for space, namely, it achieves $m^{1 + o(1)}$ time and $\widetilde{O}(m)$ space. While this translates to $\widetilde{O}(n/\eps)$ space, it unfortunately gives $n^{o(1)}$ update time per edge. To fix this, we use online leverage score sampling to produce a virtual stream of only $\widetilde{O}(n/\eps^2)$ edges that we instead run our algorithm on. By doing this, we reduce the time to $\widetilde{O}(1)$ per edge provided the number $m$ of edges in the input satisfies $m \geq (n/\eps^2)^{1+o(1)}$.

\paragraph{Approximate All-Pairs Effective Resistance.}
In \cite{jambulapati2018efficient}, the authors show if we can get a for-each sparsifier, then we can use it to generate a data structure in near-linear time, and such a data structure can approximate all pairs of the effective resistance in $\widetilde{O}(n^2/\eps)$ time. Combining this and our \cref{thm:sparsifier} yields our upper bound.

For the lower bound, we use communication complexity and consider a similar graph construction in~\cite{AndoniCKQWZ16}, though we use new arguments based on random walks. Specifically, given a random binary string $s \in \{0, 1\}^{n/\eps}$, we encode it into a graph $G$ where $G$ is divided into $O(\eps n)$ disjoint bipartite graphs $G_i$, and in each $G_i$, the existence of an edge between each pair corresponds to one random bit in $s$. We show that for every such pair $(a, b)$, with high probability there is a $(1 + \eps)$-separation between the two cases (there is an edge between $a, b$ or not), which yields an $\Omega(n/\eps)$ lower bound. To prove this, in particular, we use the connection between the effective resistance and the hitting time on a graph and a recent concentration result about the hitting time on a random graph $G(n, p)$ (each possible edge on $n$ vertices is included independently with probability $p$).

\paragraph{Exact Minimum Cut in Random-order Streams.} Recall that we assume that the graph is simple and unweighted, and each edge arrives in a random order stream. 
We also assume the minimum cut size is $\Omega(\log n)$. If the size is smaller, a for-all cut sparsifier with accuracy parameter $\eps' = 1 / \log^2 n$ gives the exact value of the minimum cut.
The main idea here is that by looking at the prefix of edges in a stream, which we will call the graph $H$, we roughly learn the sizes of all the cuts up to a small constant factor. Using this information, we learn all $1.1$-approximate minimum cuts in $G$. The next question is how to get the exact value of all these cuts (so we can find the minimum one). For the cut edges in $H$, note that at this time, the cut values of all these cuts in $H$ is $\Theta(\log n)$, and hence a for-all sparsifier of $H$ with accuracy parameter $\eps' = 1 / \log^2 n$ gives the exact value of these cuts in $H$. 
We note that the total number of edges that participate in at least one of the $1.1$-approximate non-singleton cuts is $O(n)$ \cite{rsw18}, which is interesting because there could be as many as $O(n^2)$ such cuts. Hence, we can store all these $O(n)$ edges to get the exact cut values of the approximate minimum cuts in $G \setminus H$. Putting the two things together, we obtain the exact value of all $1.1$-approximate minimum cuts and thus obtain the exact minimum cut value in $G$ (along with all the minimum cuts).

\paragraph{Faster Runtime.} 
In the above discussion, our algorithm for $(1 + \eps)$ min-cut in worst-case streams has an $\widetilde{O}(n^3)$ post-processing time, and the algorithm for exact min-cut in a random-order stream has an $\widetilde{O}(n^3)$ update time. The difficulty here is related: when enumerating all the $O(1)$-approximate minimum cuts, we need $O(n)$ time to evaluate the cost for each specific cut. We thus propose a general algorithmic framework to overcome this, which is based on the recursive contraction algorithm in~\cite{KS96} (\cref{sec:min_cut}). Namely, when enumerating all the approximate minimum cuts in the recursion tree, we simultaneously maintain a low-space sketch of the columns of the edge-vertex incidence matrix of the corresponding graph. Specifically, we apply a Johnson-Lindenstrauss sketch in our first case and a sparse recovery sketch in our second case, which reduces the evaluation time $O(n)$ to $O(\log n / \eps^2)$ and $\mathrm{polylog}(n)$, respectively.

\subsection{Open Problems}
While we resolve several gaps between the upper and lower bounds, we discuss a remaining open problem. There exist fully dynamic streaming algorithms for approximate minimum cut using $\Ot(n/\eps^2)$ space where the stream is allowed to both add or delete edges. From \cref{thm:lower_bound}, we know that algorithms solving approximate minimum cut in insertion-only streams must use $\Omega(n/\eps)$ space. Therefore, the space complexity of approximate minimum cut in a dynamic stream lies somewhere within these two values, and we believe resolving this gap is a problem of theoretical interest.

\begin{question}
    \em What is the exact space complexity of calculating a $(1+\eps)$ approximation to the minimum cut of simple weighted graphs in a one-pass dynamic stream?
\end{question}
We do not have an exact conjecture about this, but we provide an alternative and fully self-contained proof to \cref{thm:lower_bound} in \cref{sec:lb}, for which the techniques could help prove a lower bound for the fully dynamic case. 

\section{Spectral Sparsification and Minimum Cut in Worst-Case Streams} \label{sec:worst_case}
A key result of the section is our construction of \cref{thm:sparsifier}, a single-pass insertion-only streaming algorithm for a $(1+\eps)$-for-each cut sparsifier in undirected weighted graphs in $\widetilde{O}(n/\eps)$ space and $n^{o(1)}$ update time.
In \cref{sec:graphical_spectral_sketch} and \ref{sec:online_sampling}, we first give our description of graphical spectral sketches (i.e., spectral sparsifiers that are reweighted subgraphs) and online leverage score sampling, which we use as subroutines in our final algorithm. In \cref{sec:main_algorithm}, we give our full algorithm and prove the correctness of our algorithm. In \cref{sec:min_cut} and \ref{sec:resistance}, we prove the two applications of our algorithm: streaming approximate minimum cut and all-pairs effective resistances.

\subsection{Graphical Spectral Sketches}
\label{sec:graphical_spectral_sketch}
Our algorithm uses the following result in~\cite{CGP+18} as a subroutine.
\begin{lemma}[\cite{CGP+18}]
    \label{lem:spectral_sketch}
    For $\eps \in (0, 1]$, there is an algorithm, which given $G$, runs in time $m^{1 +o(1)}$, and with high probability returns a $(1 + \eps)$-for-each spectral sparsifier with $n^{1 + o(1)}/\eps$ edges.
\end{lemma}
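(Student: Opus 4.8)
The plan is to derive the lemma as a corollary of two facts already available in the excerpt: the reduction of Chu et al.\ \cite{CGP+18} from a short-cycle decomposition routine to a (necessarily for-each) graphical spectral sparsifier, stated as \cref{claim:graphical_sparsifier}, and the existence of a short-cycle decomposition routine that runs in $m^{1+o(1)}$ time. So there is essentially nothing new to prove here: the work is to instantiate \cref{claim:graphical_sparsifier} with the right parameters and to check the bookkeeping on space and time, importing the heavier internal machinery of \cite{CGP+18} as a black box.

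First I would recall the mechanism behind \cref{claim:graphical_sparsifier}. Starting from $G$ with $m$ edges, one round of the construction computes an $(\hat m, L)$ short-cycle decomposition and then, independently for each cycle, keeps a random half of the cycle's edges with their weight doubled, leaving the $\hat m$ off-cycle edges untouched. On each cycle this operation preserves all vertex degrees exactly (up to the standard handling of odd-length cycles in \cite{CGP+18}), so it perturbs the Laplacian by $L_H - L_G = \sum_{\text{cycles } c}\xi_c$, a sum of independent mean-zero matrices, each supported on at most $L$ vertices. For a \emph{fixed} test vector $x \in \R^n$, the scalar $x^\top (L_H - L_G)x = \sum_c x^\top \xi_c x$ is a sum of independent, mean-zero, bounded terms, so a Bernstein/Chernoff bound gives $x^\top L_H x \approx_{\eps'} x^\top L_G x$ with probability $1 - 1/\mathrm{poly}(n)$ as long as the graph is not yet too sparse. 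Iterating this halving step $O(\log m)$ times with a per-round error budget $\eps_i$ chosen so that $\prod_i (1+\eps_i) \le 1+\eps$, and stopping once the edge count drops to $\Ot(nL/\eps)$, produces a reweighted subgraph $H$ with $\Ot(\hat m + nL/\eps)$ edges that is a $(1+\eps)$ graphical for-each spectral sparsifier (the per-$x$ $2/3$ guarantee follows, with room to spare, by a union bound over the $O(\log m)$ rounds); note that once $\hat m$ and $L$ are small this $H$ has far too few edges to be a for-all sparsifier, so the sparsifier of \cref{claim:graphical_sparsifier} is necessarily a for-each one, consistent with the lemma.

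Second I would instantiate \textsc{CycleDecomp}. Either the near-linear-time short-cycle decomposition of \cite{CGP+18} itself, computing an $(n^{1+o(1)}, n^{o(1)})$ decomposition in $m^{1+o(1)}$ time, or the deterministic $(O(n\log n), O(\log^2 n))$ decomposition of Parter and Yogev \cite{optimal_cycle_decomposition}, also running in $m^{1+o(1)}$ time, suffices. Feeding $\hat m \le n^{1+o(1)}$ and $L \le n^{o(1)}$ into \cref{claim:graphical_sparsifier} yields a for-each spectral sparsifier with $\Ot(\hat m + nL/\eps) = n^{1+o(1)}/\eps$ edges (using $\eps \le 1$), and total running time $\Ot(m) + T_{\textsc{CycleDecomp}}(O(m\log n), n) = m^{1+o(1)}$, exactly as claimed. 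The ``with high probability'' clause absorbs the $O(\log m)$ concentration events and, for the first option, the randomness of the decomposition subroutine; the second option removes the latter entirely.

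The one genuinely delicate point — which lives inside \cref{claim:graphical_sparsifier} and \cite{CGP+18}, and which I would cite rather than reprove — is the concentration step: bounding the variance of $\sum_c x^\top \xi_c x$ well enough that a fixed $x$ survives all $O(\log m)$ subsampling rounds. The per-round variance grows with the cycle length $L$, so the argument closes only if the decomposition keeps $L$ polylogarithmic (or at worst $n^{o(1)}$) while keeping $\hat m$ near-linear; producing such a decomposition in $m^{1+o(1)}$ time rather than by the trivial $O(mn)$ brute-force search is precisely the quantitative content of \cite{CGP+18} and \cite{optimal_cycle_decomposition}, and is the part I would treat as a black box.
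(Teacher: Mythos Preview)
The paper does not prove this lemma at all: it is stated purely as a citation to \cite{CGP+18} and used as a black box. Your sketch is correct and is essentially the argument of \cite{CGP+18} itself, i.e., instantiating \cref{claim:graphical_sparsifier} with their own near-linear-time $(n^{1+o(1)},\,n^{o(1)})$ short-cycle decomposition; note that your second instantiation via \cite{optimal_cycle_decomposition} actually yields the stronger \cref{lem:subtroutine} rather than \cref{lem:spectral_sketch} as stated.
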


As pointed out by the work of~\cite{optimal_cycle_decomposition}, in the above algorithm, the extra $n^{o(1)}$ factor in the number of edges is due to the nearly-linear time short cycle decomposition used in the whole algorithm. We use the simpler and improved short-cycle decomposition algorithm of \cite{optimal_cycle_decomposition}, which outputs in $m^{1+o(1)}$ time a collection of edge-disjoint cycles of length $O(\log^2n)$ that cover all but $O(n\log n)$ edges of the graph. Na\"ively, the algorithm is also implemented in $m^{1+o(1)}$ space, but we show that a simple modification achieves near-linear space, as follows:
\begin{lemma}
\label{lem:cycle_decomposition}
There is an $m^{1+o(1)}$ time, $\Ot(m)$ space algorithm that outputs a collection of edge-disjoint cycles of length $O(\log^2n)$ that cover all but $O(n\log n)$ edges of the graph.
\end{lemma}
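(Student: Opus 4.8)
The plan is to take the short-cycle decomposition algorithm of Parter and Yogev~\cite{optimal_cycle_decomposition} as a black box for the time bound and running behavior, and observe that the only place where it might use more than $\Ot(m)$ space is in buffering intermediate data. First I would recall the structure of the Parter--Yogev algorithm: it processes the graph in $n^{o(1)}$ recursive phases (or rounds of an expander-decomposition-style peeling), and in each phase it extracts a batch of edge-disjoint cycles of length $O(\log^2 n)$ and removes them, leaving a graph on which to recurse. The key observation is that the \emph{output} of the decomposition — the cycles plus the $O(n\log n)$ leftover edges — always has total size $O(m)$ (each edge appears in at most one cycle, and cycles are edge-disjoint), so the final object fits in $\Ot(m)$ space. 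The danger is purely in the working memory used \emph{during} the computation: if the algorithm writes down, say, all the BFS layers or all the candidate cycles across all levels simultaneously, that could blow up to $m^{1+o(1)}$.

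The fix is a standard streaming/recomputation trick: rather than retaining all intermediate graphs $G = G_0 \supseteq G_1 \supseteq \cdots$ in memory, we only ever store (i) the current working graph $G_i$, which has at most $m$ edges, and (ii) the cycles and leftover edges discovered so far, which as noted total $O(m)$ edges. When the algorithm needs to recurse, it overwrites $G_i$ with $G_{i+1}$ in place (removing the extracted cycle edges), so at any moment the memory footprint is $O(m) + \Ot(\text{per-phase overhead})$. The per-phase overhead of the Parter--Yogev routine — the BFS/expander-decomposition bookkeeping within a single phase — is itself $\Ot(m)$, since it operates on a graph with at most $m$ edges and uses only near-linear auxiliary space. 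Summing: total space is $\Ot(m)$. The time bound is unchanged at $m^{1+o(1)}$ because we do exactly the same operations as the original algorithm, just without hoarding intermediate state; any recomputation we introduce is bounded by replaying work already counted in the $m^{1+o(1)}$ total.

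The main steps, in order: (1) State the Parter--Yogev guarantee — an $m^{1+o(1)}$-time algorithm producing edge-disjoint cycles of length $O(\log^2 n)$ covering all but $O(n\log n)$ edges — and identify the only subroutines it invokes (low-diameter / expander decomposition, BFS-based cycle extraction). (2) Verify each such subroutine runs in $\Ot(m')$ space on an $m'$-edge graph, where $m' \le m$ throughout. (3) Argue that edge-disjointness of the output cycles plus the leftover-edge bound gives a global $O(m)$ bound on everything the algorithm ever needs to retain. (4) Describe the in-place update: after a phase extracts a batch of cycles, delete those edges from the stored working graph and append them to the output list, so only one working graph is ever resident. (5) Conclude the space is $\Ot(m)$ and the time is still $m^{1+o(1)}$.

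I expect the main obstacle to be step (2): confirming that \emph{every} subroutine in the Parter--Yogev construction is genuinely near-linear-space and does not, for instance, invoke a black-box low-diameter decomposition whose known implementations buffer $\omega(m)$ data, or a recursion that needs all levels live at once (as in a naive divide-and-conquer). If some sub-step is inherently space-heavy, the remedy is to re-derive it in a streaming-friendly way — but since the decomposition algorithm is already described as combinatorially simple (BFS to find short cycles, peel, recurse), I anticipate this will go through with only careful bookkeeping rather than a new idea.
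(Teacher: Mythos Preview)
Your proposal misidentifies where the space blowup in the Parter--Yogev algorithm actually comes from, and the fix you sketch does not address it. The algorithm does not simply peel off edge-disjoint cycles phase by phase. Its core subroutine first computes a \emph{low-congestion cycle cover}: a collection of cycles of length $d=O(\log^2 n)$ in which every edge of the graph may appear in up to $c=n^{o(1)}$ cycles. Only afterward is a greedy selection applied to this cover to extract an edge-disjoint subfamily. The cycle cover itself therefore has total size up to $cm=m^{1+o(1)}$, and this is the object that must be stored in memory within a \emph{single} phase. Your ``in-place update'' argument, which relies on the output cycles being edge-disjoint and hence totaling $O(m)$ edges, applies to the wrong object; the overlapping cover is the bottleneck, not the recursion bookkeeping or the sequence of residual graphs.

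The paper's proof does require a new idea beyond careful bookkeeping, contrary to your expectation in step~(2). The fix is to \emph{truncate} the cycle-cover construction as soon as it has accumulated $O(m)$ total edges of cycles. One then has to argue that greedy selection on this truncated cover still covers an $\Omega(1/(dc))$ fraction of the uncovered edges of the graph, so that the number of outer iterations is unchanged up to logarithmic factors. Without this truncation-and-progress argument, step~(2) of your plan fails: the subroutine is genuinely not $\Ot(m')$-space as written.
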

\begin{proof}
The algorithm in Section~3 of \cite{optimal_cycle_decomposition} first computes a \emph{low-congestion cycle cover}, which is a collection of short cycles that cover most of the edges such that each edge belongs to a small number of cycles. The precise parameters are specified in the statement of Lemma~2 of the full version of \cite{optimal_cycle_decomposition}: compute a collection of cycles of length $d=O(2^{1/\eps}\log n)$ that cover all but $O(n\log n)$ edges such that each edge appears on at most $c=1/\eps\cdot n^{O(\eps)}$ cycles. For $\eps=1/\log\log n$, we obtain $d=O(\log^2n)$ and $c=n^{o(1)}$. To obtain the short cycle decomposition, the proof of Lemma~2 describes a simple procedure of greedily selecting a subset of edge-disjoint cycles that cover an $\Omega(1/(dc))$ fraction of the edges in the cycle cover and then iterating on the uncovered edges.

Note that the low-congestion cycle cover consists of up to $cm$ edges since each edge can appear in $c$ cycles. Since $c=n^{o(1)}$, storing the entire cycle cover takes up too much space. Our key insight is to terminate the collection of cycles in each iteration of algorithm \textsc{ImprovedShortCycleDecomp} (see Figure 3 in the full version of \cite{optimal_cycle_decomposition}) once the cycle cover has a total of $O(m)$ edges. Terminating this subroutine early cannot increase the congestion compared to not terminating, thus we maintain the guarantee that the total set of cycles has congestion $c$. Using the above greedy selection of edge-disjoint cycles, we cover an $\Omega(1/(dc))$ fraction of the edges in our cycle cover, which is $\Omega(1/(dc))$ of all uncovered edges in the graph. Iterating on the uncovered edges gives the same iterations as before up to logarithmic factors. \qedhere
\end{proof}

With our improved short-cycle decomposition, we get the following lemma:
\begin{lemma}
    \label{lem:subtroutine}
    Given $\eps \in (0, 1]$, there is an algorithm we denote {\sc SpectralSketch}($G, \eps$), which given $G$, runs in time $m^{1 +o(1)}$ and space $\Ot(m)$, and with high probability returns a $(1 + \eps)$-for-each spectral sparsifier of $G$ with $\Ot(n/\eps)$ edges.
\end{lemma}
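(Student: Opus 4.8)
The plan is to instantiate the black-box reduction of \cref{claim:graphical_sparsifier} with the low-space short-cycle decomposition of \cref{lem:cycle_decomposition}, exactly as \cref{lem:spectral_sketch} instantiates it with the decomposition of \cite{CGP+18}, but with better parameters. Indeed, \cref{lem:cycle_decomposition} is precisely an $(\hat m, L)$ short-cycle decomposition routine \textsc{CycleDecomp} with $\hat m = O(n\log n)$ and $L = O(\log^2 n)$, running in time $m^{1+o(1)}$ and, crucially, in space $\Ot(m)$ on an $m$-edge graph. Plugging these parameters into \cref{claim:graphical_sparsifier} yields an algorithm that, with high probability, returns a $(1+\eps)$-for-each spectral sparsifier which is a reweighted subgraph of $G$ (cf.\ \cref{lem:spectral_sketch}) with
\[
\Ot(\hat m + nL\eps^{-1}) \;=\; \Ot\!\left(n\log n + n\log^2 n\,/\,\eps\right) \;=\; \Ot(n/\eps)
\]
edges, and running time $\Ot(m) + T_{\textsc{CycleDecomp}}(O(m\log n), n) = \Ot(m) + (m\log n)^{1+o(1)} = m^{1+o(1)}$. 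This matches the edge count, time bound, and sparsifier guarantee claimed.

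It remains to bound the working space by $\Ot(m)$, which is the one point not already packaged in \cref{claim:graphical_sparsifier} (that statement tracks only running time). I would verify this by inspecting the reduction behind \cref{claim:graphical_sparsifier}: apart from its calls to \textsc{CycleDecomp}, the algorithm only splits the polynomially bounded edge weights into $O(\log n)$ power-of-two copies --- producing an $O(m\log n)$-edge multigraph that still fits in $\Ot(m)$ space --- then repeatedly computes a short-cycle decomposition of the current graph, retains one parity class of each cycle with the appropriate reweighting, and recurses on a graph whose edge count has shrunk by a constant factor. Each such step is near-linear time, hence near-linear space; every recursive instance is no larger than the first; and an intermediate cycle decomposition can be discarded once its parity-sampling step is done. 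Thus the peak memory is $\Ot(m)$ for the current graph plus $\Ot(m)$ for a single invocation of \textsc{CycleDecomp} (by \cref{lem:cycle_decomposition}), i.e.\ $\Ot(m)$ overall.

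The main obstacle will be this space accounting rather than anything about the sparsifier guarantee: since \cref{claim:graphical_sparsifier} as quoted from \cite{CGP+18} is phrased purely in terms of running time, one has to reopen their construction and confirm that (i) the recursion can be run keeping only the current, geometrically shrinking instance in memory, and (ii) no step other than \textsc{CycleDecomp} --- in particular the $O(\log n)$-fold weight expansion --- blows up the space beyond $\Ot(m)$. Once this is checked, correctness and the $\Ot(n/\eps)$ edge bound follow immediately by substituting the parameters of \cref{lem:cycle_decomposition} into \cref{claim:graphical_sparsifier}.
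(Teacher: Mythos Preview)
Your proposal is correct and matches the paper's approach: the paper states \cref{lem:subtroutine} without an explicit proof, simply as the consequence of plugging the $\Ot(m)$-space short-cycle decomposition of \cref{lem:cycle_decomposition} into the \cite{CGP+18} framework (\cref{claim:graphical_sparsifier}). Your additional space-accounting for the outer recursion is a reasonable elaboration of a point the paper leaves implicit.
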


\subsection{Online Leverage Score Sampling}
\label{sec:online_sampling}
For the purposes of a faster runtime, we consider only sampling (and reweighting) a fraction of edges during the stream. Existing online leverage score sampling methods allow us to choose edges in an online stream without retracting our choices such that our final graph has $\Ot(n/\eps^2)$ edges and is a spectral-sparsifier of the original graph. Indeed, let $B_n \in \mathbb{R}^{\binom{n}{2}\times n}$ be the vertex edge incidence matrix of an undirected, unweighted complete graph on $n$ vertices, where the $e$-th row $b_e$ for edge $e = (u, v)$ has a $1$ in column $u$, a $(-1)$ in column $v$, and zeroes elsewhere. Then for an arbitrary undirected graph $G$, we can write its vertex edge incidence matrix $B = SB_n$ where $S \in \mathbb{R}^{\binom{n}{2} \times \binom{n}{2}}$ is a diagonal matrix with entry $\sqrt{w_e}$ in the $e$-th diagonal entry where $w_e$ is the weight of edge $e$. It is well-known that the Laplacian matrix $L$ of graph $G$ can be written as $L = B^\top B$. Hence, if we can sample a subset of the rows of $B$ to form a new matrix $C$ (which corresponds to a subset of the edges in $G$) such that for every $x$, $\|Bx\|_2 \approx_{\eps} \|Cx\|_{2}$, we then have $x^\top L x \approx_{\eps} x^\top C^\top C x$ so that the subgraph $H$ corresponding to $C$ is a $(1 + \eps)$-spectral sparsifier of $G$. For more details, we refer the reader to~\cite{KLM+17, CMP20}. Formally, we have the following lemma.

\begin{lemma}[Corollary 2.4 from \cite{CMP20}] \label{lem:online_sampling}
    Let $G$ be an undirected simple graph with $n$ vertices and poly$(n)$ bounded edge weights, and $\eps \in (0,1)$. We can construct a $(1+\eps)$-spectral sparsifier of $G$ as a reweighted subgraph with $O(n\log^2n/\eps^2)$ edges, using only $O(n\log^2n)$ bits of working memory in the online model. Additionally, the total running time is near-linear in the number of edges of $G$.
\end{lemma}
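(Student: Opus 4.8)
The final statement is Lemma~\ref{lem:online_sampling}, which is cited directly as ``Corollary 2.4 from \cite{CMP20}'', so I would not reprove it from scratch; instead I would sketch how one recovers it from the online leverage score sampling framework. The plan is to set up the online sampling process on the rows $b_e$ of the weighted incidence matrix $B = SB_n$ as they arrive, and to argue that ridge-leverage-score-style online sampling produces a reweighted subgraph that spectrally approximates $G$ while keeping the edge count and working memory small.

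First I would recall the definition of online (ridge) leverage scores: when edge $e$ arrives, let $B_{<e}$ denote the matrix of previously-arrived (weighted) rows, and define the online leverage score of $b_e$ with respect to $B_{<e}$, regularized by a small multiple of the identity (or, in the graph setting, by a small spectral perturbation of the running Laplacian) to keep the scores well-defined before the running graph is connected. The sampling rule keeps $b_e$ with probability $p_e \propto \min\{1, O(\eps^{-2}\log n)\cdot \tilde\ell_e\}$ and, if kept, rescales it by $1/\sqrt{p_e}$; this is exactly the scheme analyzed in \cite{CMP20}. The correctness step is a matrix martingale / matrix-Freedman argument showing that the sampled matrix $C$ satisfies $C^\top C \approx_\eps B^\top B$ with high probability, which is the content of Corollary~2.4 there; I would simply invoke it. The size bound follows because the sum of online leverage scores of a graph with $\mathrm{poly}(n)$-bounded weights is $O(n\log n)$ (the ``online matrix determinant'' / telescoping-log-det bound), so the expected number of sampled rows is $O(\eps^{-2}\log^2 n \cdot n)$, and a Chernoff bound makes this hold with high probability. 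Since each kept edge is stored with an $O(\log n)$-bit reweighting factor and the running Laplacian needs only $O(n\log n)$ bits to maintain approximately (e.g., via the sparsifier-of-the-sparsifier trick or by periodic re-sparsification), the working memory is $O(n\log^2 n)$ bits, and the per-edge work is dominated by one leverage-score estimate against the maintained approximate inverse, giving near-linear total time.

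The main obstacle in a self-contained proof would be the chicken-and-egg problem that computing online leverage scores requires the (pseudo)inverse of the running Laplacian, which is precisely the object we are trying to sparsify; resolving this cleanly requires maintaining an $\widetilde{O}(n)$-space spectral approximation of the running graph and bounding the error incurred by estimating leverage scores against the approximation rather than the exact Laplacian, plus handling the rank deficiency of a not-yet-connected graph via regularization. In the present paper this is entirely outsourced to \cite{CMP20}, so my ``proof'' is a one-line citation with the above remarks recorded for context; if forced to reprove it I would follow the standard online-BSS / matrix-martingale template and the log-det potential argument sketched above.
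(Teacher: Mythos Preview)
Your reading is correct: the paper does not prove this lemma at all, it simply quotes it as Corollary~2.4 of \cite{CMP20} and uses it as a black box. Your proposal to treat it as a one-line citation is exactly what the paper does; the surrounding sketch of the online ridge-leverage-score sampling argument with the log-det potential bound is accurate and helpful context, but strictly more than the paper itself provides.
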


\subsection{Main Algorithm}
\label{sec:main_algorithm}
Our algorithm is presented in \cref{alg:dynamic_spectral_sparsifier}, which is based on the merge-and-reduce paradigm (see, for example, \cite{BDM+20}) and uses an additional factor of $\mathrm{polylog}(n)$ space. At a high level, our algorithm maintains a number of blocks of edges $\mathbf{B}_0, \mathbf{B}_1, \ldots \mathbf{B}_{\log (n/\eps)}$, each with size $\mathsf{m_{space}} = \Ot(n/\eps)$. The most recent edges are stored in $\mathbf{B}_0$; whenever $\mathbf{B}_0$ is full, the successive non-empty blocks $\mathbf{B}_0,  \dots, \mathbf{B}_i$ are merged and reduced to a new graph with $\mathsf{m_{space}}$ edges, which will be stored in $\mathbf{B}_{i+1}$.  
Next, we show the correctness of our algorithm. Following a similar argument in~\cite{BDM+20}, we first show the following lemma.

\begin{algorithm}
    \textbf{Input: }{Undirected graph $G(V,E)$ in a stream with $n$ vertices and $m$ edges, accuracy parameter $\eps \in (\frac{1}{n},1)$, offline spectral-sparsifier subroutine \textsc{SpectralSketch}}
    
    \textbf{Output: }{A graph $H$ with weights $w$.}
    
    Initialize $\mathsf{m_{space}} = n \log^c(n) / \eps$ for some constant $c$, $\mathbf{B}_i \gets \emptyset$
 
    \ForEach{sampled edge $e_t$ (\cref{lem:online_sampling}) with rescaled weight $u_t$}{
        \uIf{$\mathbf{B}_0$ does not contain $\mathsf{m_{space}}$ edges}{
        $\mathbf{B}_0\gets e_t \cup \mathbf{B}_0$\;
        }
        \Else{
            Let $i>0$ be the minimal index such that $\mathbf{B}_i=\emptyset$\;

            $\mathbf{B}_i, w_i \gets\textsc{SpectralSketch}\left(\textbf{M}, \frac{\eps}{\log (n/\eps)}\right)$, where $\textbf{M}=\mathbf{B}_0 \cup \dots \cup \mathbf{B}_{i-1}$\;

            \For{$j=0$ \KwTo $j=i-1$}{
                $\mathbf{B}_j\gets\emptyset$ \;
            }
        $\mathbf{B}_0 \gets e_t$\;
        }
    }
    $\mathbf{B} \gets \textsc{SpectralSketch}(\mathbf{B}_{\log (n/\eps)} \cup \cdots \cup \mathbf{B}_0, \eps)$ 
    
    \Return $\mathbf{B}$.
    \caption{\textsc{StreamingSpectralSparsifier($G, \eps$)}}
    \label{alg:dynamic_spectral_sparsifier}
\end{algorithm}

\begin{lemma}[see Lemma~5.2 in \cite{BDM+20}]
\label{lem:merge_and_reduce}
Suppose that $\mathbf{B}_0,\dots,\mathbf{B}_{i-1}$ are all empty while $\mathbf{B}_i$ is non-empty. Then $\mathbf{B}_i$ is a $(1 + \frac{\eps}{\log (n/\eps)})^i$-for-each spectral sparsifier for the last $2^{i - 1} \mathsf{m_{space}}$ edges. 
\end{lemma}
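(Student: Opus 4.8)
The plan is to prove the statement by induction on $i$, resting on two facts: a purely combinatorial invariant describing which blocks $\mathbf B_j$ are non-empty at any moment and which stretch of the (reweighted, post-\cref{lem:online_sampling}) edge sequence fed to the merge loop each is responsible for; and additivity of Laplacians over edge-disjoint subgraphs, which makes an edge-disjoint union of for-each spectral sparsifiers a for-each spectral sparsifier of the union of the underlying graphs.

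First I would isolate the combinatorial invariant, which is the standard behaviour of the merge-and-reduce scheme of \cite{BDM+20}. Partition the edge stream fed to the loop into consecutive windows of $\mathsf{m_{space}}$ edges; $\mathbf B_0$ always holds the current window verbatim, and among $\mathbf B_1,\mathbf B_2,\dots$ exactly those $\mathbf B_j$ are non-empty for which bit $j-1$ of the number $t$ of completed windows is set, $\mathbf B_j$ then being responsible for $2^{j-1}$ consecutive windows. An easy induction on the number of processed edges (this is the carry operation of a binary counter) shows that at the moment $\mathbf B_i$ is (re)assigned — equivalently, the moment $\mathbf B_0,\dots,\mathbf B_{i-1}$ are simultaneously empty and $\mathbf B_i$ has just become non-empty — the blocks $\mathbf B_0,\dots,\mathbf B_{i-1}$ hold disjoint consecutive windows $W_0,\dots,W_{i-1}$ (newest to oldest) of sizes $\mathsf{m_{space}},\mathsf{m_{space}},2\mathsf{m_{space}},\dots,2^{i-2}\mathsf{m_{space}}$, whose union is precisely the last $2^{i-1}\mathsf{m_{space}}$ edges. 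Setting $\delta:=\eps/\log(n/\eps)$, I would carry as the induction hypothesis both this invariant and the statement ``$\mathbf B_j$ is a $(1+\delta)^j$-for-each spectral sparsifier of $W_j$'' (true for $j=0$ trivially, since $\mathbf B_0$ stores $W_0$ exactly), so that it can be invoked on each $\mathbf B_j$, $j<i$, at the earlier step when that block was formed and $W_j$ was the then-most-recent $2^{j-1}\mathsf{m_{space}}$ edges.

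Granting this, the inductive step is short. The base case $i=1$ is immediate: $\mathbf B_1\gets\textsc{SpectralSketch}(\mathbf B_0,\delta)$ with $\mathbf B_0=W_0$ the last $\mathsf{m_{space}}=2^0\mathsf{m_{space}}$ edges, so \cref{lem:subtroutine} makes $\mathbf B_1$ a $(1+\delta)^1$-for-each spectral sparsifier of them. For $i\ge 2$, at the moment of assignment let $\mathbf M:=\mathbf B_0\cup\dots\cup\mathbf B_{i-1}$; by the hypothesis each $\mathbf B_j$ is a $(1+\delta)^j$-for-each sparsifier of $W_j$, and since the $W_j$ are edge-disjoint we have $L_{\mathbf M}=\sum_{j=0}^{i-1}L_{\mathbf B_j}$ while the true Laplacian of the last $2^{i-1}\mathsf{m_{space}}$ edges is $L_{G'}=\sum_{j=0}^{i-1}L_{G[W_j]}$. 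Fix any $x\in\mathbb R^n$: on the event that all $i$ sparsifications are accurate at $x$, each $x^\top L_{\mathbf B_j}x$ is within a factor $(1+\delta)^j$, hence within $(1+\delta)^{i-1}$, of the nonnegative quantity $x^\top L_{G[W_j]}x$, so summing shows $x^\top L_{\mathbf M}x$ is within $(1+\delta)^{i-1}$ of $x^\top L_{G'}x$; i.e., $\mathbf M$ is a $(1+\delta)^{i-1}$-for-each sparsifier of $G'$. Finally \cref{lem:subtroutine}, applied with fresh randomness to $\mathbf B_i\gets\textsc{SpectralSketch}(\mathbf M,\delta)$, makes $\mathbf B_i$ a $(1+\delta)$-for-each sparsifier of $\mathbf M$, and chaining the two multiplicative distortions gives that $\mathbf B_i$ is a $(1+\delta)^i$-for-each sparsifier of $G'$, closing the induction. (One also checks that every $\mathbf M$ has at most $O(\log(n/\eps))\cdot\mathsf{m_{space}}$ edges, so \cref{lem:subtroutine}'s hypotheses apply.)

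The main obstacle is the probability accounting. A for-each sparsifier guarantees per-query success only $2/3$ for a single \textsc{SpectralSketch} call, while the tree producing $\mathbf B_i$ has depth $\Theta(\log(n/\eps))$ and merges up to $\Theta(\log(n/\eps))$ children at a node, so a naive union bound over the constituents that must be accurate at a fixed $x$ compounds to as much as $2^{\Theta(\log(n/\eps))}=\mathrm{poly}(n/\eps)$ times the per-call failure — useless if that is a constant. I would fix this exactly as in the amplification remark following the for-each definition: run every \textsc{SpectralSketch} invocation at per-query failure $1/\mathrm{poly}(n)$, which costs only a $\mathrm{polylog}(n)$ factor in the number of edges and is absorbed into the $\Ot(n/\eps)$ bound. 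Then, for a fixed $x$, a union bound over the at most $\mathrm{poly}(n/\eps)$ \textsc{SpectralSketch} calls in the subtree below $\mathbf B_i$ keeps the probability that all of them — hence $\mathbf M$, hence $\mathbf B_i$ — are accurate at $x$ at $1-1/\mathrm{poly}(n)\ge 2/3$, while the $(1+\delta)^i$ distortion bound itself is deterministic once those estimates are accurate at $x$. The remaining pieces, verifying the binary-counter invariant and the edge-count bound on each $\mathbf M$, are routine.
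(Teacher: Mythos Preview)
Your proposal is correct and follows essentially the same route as the paper: induction on $i$, the observation that $\mathbf B_i$ is built from $\mathbf B_0,\dots,\mathbf B_{i-1}$ which by induction are $(1+\delta)^j$-for-each sparsifiers of disjoint windows summing to $2^{i-1}\mathsf{m_{space}}$ edges, mergeability via Laplacian additivity, and chaining the $(1+\delta)$ factor from the final \textsc{SpectralSketch} call. You are more explicit than the paper about the binary-counter invariant and, especially, about the probability accounting; the paper simply asserts ``mergeability'' and, in the surrounding proof of \cref{thm:sparsifier}, takes for granted that each \textsc{SpectralSketch} call already succeeds with probability $1-1/\mathrm{poly}(n)$, which is exactly your proposed amplification.
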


\begin{proof}
    We prove this by induction on $i\ge0$. Recall that $\mathbf{B}_i$ can only be non-empty if at some point $\mathbf{B}_0$ contains $\mathsf{m_{space}}$ and $\mathbf{B}_0, \mathbf{B}_1, \ldots, \mathbf{B}_{i - 1}$ are all non-empty. By induction, suppose that for every $1 \le j < i$, $\mathbf{B}_{j}$ is a $(1 + \frac{\eps}{\log(n/\eps)})^{j}$-for-each spectral sparsifier for $2^{j - 1}\mathsf{m_{space}}$ edges. Then we have that $\mathbf{B}_i$ is a $(1 + \frac{\eps}{\log(n/\eps)})$-for-each spectral sparsifier for $\mathbf{B}_0 \cup \mathbf{B}_1 \cup \ldots \cup \mathbf{B}_{i - 1}$. From the mergeability property of spectral sparsifier graphs, we get that $\mathbf{B}_i$ is a $(1 + \frac{\eps}{\log(n/\eps)}) (1 + \frac{\eps}{\log(n/\eps)})^{i - 1} = (1 + \frac{\eps}{\log(n/\eps)})^{i}$-for-each spectral sparsifier for the $\mathsf{m_{space}} + \sum_{j = 0}^{i - 2}2^j\mathsf{m_{space}} = 2^{i - 1}\mathsf{m_{space}}$ edges, as needed. 
\end{proof}

\begin{proof}[Proof of \cref{thm:sparsifier}]
    The success probability of each call to the subroutine $\textsc{SpectralSketch}$ is at least $1 - 1/\mathrm{poly}(n)$, hence we can assume that each call to this subroutine and the online leverage score sampling procedure is successful with high probability after taking a union bound. It then follows from \cref{lem:merge_and_reduce} that $\mathbf{B}_{\log (n/\eps)} \cup \cdots \cup \mathbf{B}_0$ is a $(1 + \frac{\eps}{\log (n/\eps)})^{\log (n/\eps)} \leq (1+\eps)$-for-each spectral sparsifier of the original graph $G$. Thus $\mathbf{B}$ is a $(1 + O(\eps))(1 +\eps) = (1 + O(\eps))$-for-each spectral sparsifier of the original graph $G$.

    \paragraph{Space Complexity.} 
    Next, we analyze the space complexity of our algorithm. As stated in \cref{lem:online_sampling}, the online leverage score sampling procedure can be implemented in $O(n\log^2 n)$ bits of working memory. We maintain $\log (n/\eps)$ blocks $\mathbf{B}_i$, each taking at most $O(\mathsf{m_{space}})$ words of space. From \cref{lem:spectral_sketch} we have that each call to the subroutine $\textsc{SpectralSketch}$ takes at most $\Ot(\mathsf{m_{space}} \cdot \log (n/\eps))$ words of space, as the total number of edges never exceeds $\mathsf{m_{space}} \cdot \log (n/\eps)$. Hence, the total space of the algorithm is 
    $\Ot(n \log^2 n + \mathsf{m_{space}} \log (n/\eps)) = \Ot(n/\eps)$.

    \paragraph{Time Complexity.} 
    We finally analyze the time complexity of our algorithm. As stated in \cref{lem:online_sampling}, the online leverage score sampling procedure can be done in time $\Ot(m)$, and after the sampling procedure, there are at most $O(n \log^2 n /\eps^2)$ edges. For the merge-and-reduce procedure, each edge participates in at most $\log (n/\eps)$ different spectral sparsifiers, one in block $\mathbf{B}_i$. Additionally, from \cref{lem:spectral_sketch} we have that $\textsc{SpectralSketch}$ runs in $m^{1+o(1)}$ time where $m = \Ot(n/\eps)$. We deduce that our total runtime is 
    $\Ot(m + (n\log^2 n/\eps^2)^{1 + o(1)}) = \Ot(m) + (n/ \eps^2)^{1 + o(1)}$.
\end{proof}

\subsection{Approximate Minimum Cut Streaming Algorithm}
\label{sec:min_cut}
Our main application of \cref{thm:sparsifier} is the one-pass streaming algorithm for a $(1 + \eps)$-approximation to the minimum cut. To achieve this,
first recall that we can get a $(1 + \eps)$ for-all sparsifier of $G$ in a one-pass stream in nearly-linear time, using $\widetilde{O}(n/\eps^2)$ bits of space (\cref{lem:for_all}).

The next lemma bounds the number of approximate minimum cuts.
\begin{lemma}[\cite{karger00}]
\label{lem:cut_number_constant}
For constant $\alpha > 0$, the number of $\alpha$-approximate minimum cuts is $O(n^{\lfloor 2 \alpha \rfloor})$.    
\end{lemma}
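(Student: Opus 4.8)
This is a classical counting result of Karger, and the plan is to recall its proof via the \emph{random contraction algorithm}. We may assume $\alpha\ge 1$, since otherwise there are no cuts of value at most $\alpha c$, where $c$ denotes the minimum cut value of $G$. The two structural facts we use are that every multigraph obtained from $G$ by contracting edges has minimum degree at least $c$ (each vertex induces a cut), so a contracted multigraph on $j$ vertices has at least $jc/2$ edges, and that contracting edges never decreases the minimum cut value. The core estimate is the following. Fix a cut $(S,V\setminus S)$ with $w_G(S,V\setminus S)\le \alpha c$. Run the contraction algorithm, contracting a uniformly random edge at each step, down to $r:=\lceil 2\alpha\rceil = O(1)$ vertices, and then output one of the $2^{r-1}-1$ nontrivial cuts of the resulting $r$-vertex multigraph $G_r$ uniformly at random. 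Since $r\ge 2\alpha$, every stage $j\in\{r+1,\dots,n\}$ has at least $jc/2 > w_G(S,V\setminus S)$ edges, so the probability that no edge crossing $(S,V\setminus S)$ is ever contracted is at least $\prod_{j=r+1}^{n}\bigl(1-\tfrac{w_G(S,V\setminus S)}{jc/2}\bigr)\ \ge\ \prod_{j=r+1}^{n}\bigl(1-\tfrac{2\alpha}{j}\bigr)$, and this product is a ratio of Gamma-function values whose $n$-dependence is $\Theta(n^{-2\alpha})$ with a constant depending only on $\alpha$. Conditioned on survival, $(S,V\setminus S)$ is one specific cut of $G_r$, hence is output with probability $\tfrac{1}{2^{r-1}-1}$. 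Therefore the probability $p$ that the procedure outputs exactly $(S,V\setminus S)$ is $\Omega(n^{-2\alpha})$, and this bound is \emph{uniform} over all cuts of value at most $\alpha c$ since we only used $w_G(S,V\setminus S)\le\alpha c$.

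The counting bound follows from a disjointness argument: the procedure outputs exactly one cut, so over all candidate cuts $(S,V\setminus S)$ the events ``the output equals $(S,V\setminus S)$'' are pairwise disjoint, each of probability at least $p$; summing, the number of cuts of value at most $\alpha c$ is at most $1/p=O(n^{2\alpha})$. For integer $2\alpha$ this already gives the claimed $O(n^{\lfloor 2\alpha\rfloor})$ bound; in particular, for exact minimum cuts ($\alpha=1$) one contracts all the way down to two vertices and recovers the classical bound $\binom{n}{2}$.

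The remaining, and more delicate, point is to replace the exponent $2\alpha$ by $\lfloor 2\alpha\rfloor$ when $2\alpha$ is not an integer — for instance to conclude that there are only $O(n^2)$ cuts of value at most $1.1\,c$, as used later in the paper. Plain contraction as above does not suffice here: a cut whose value is close to $\alpha c$ genuinely survives to any constant number of vertices with probability only $\Theta(n^{-2\alpha})$, so one cannot push the exponent down by merely stopping earlier. Karger's refinement instead terminates the contraction at $\lfloor 2\alpha\rfloor$ vertices and combines the survival estimate with a finer accounting at the last stages, exploiting that $w_G(S,V\setminus S) \le \alpha c < \tfrac{\lfloor 2\alpha\rfloor+1}{2}\,c$ and that once only $\lfloor 2\alpha\rfloor$ vertices remain there are just a constant number of distinct cuts to charge against; the details are in \cite{karger00}, which we use as a black box. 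I expect this last sharpening of $n^{2\alpha}$ to $n^{\lfloor 2\alpha\rfloor}$ to be the main obstacle, whereas the $O(n^{2\alpha})$ bound sketched above is routine.
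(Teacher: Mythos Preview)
The paper does not prove this lemma; it is quoted from \cite{karger00} and used entirely as a black box, so there is no proof in the paper to compare your attempt against.

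Your sketch of the $O(n^{2\alpha})$ bound via random contraction is the standard argument and is correct. You are also right that this argument alone cannot improve the exponent to $\lfloor 2\alpha\rfloor$ when $2\alpha$ is not an integer: a cut of value close to $\alpha c$ genuinely survives to any constant number of vertices with probability only $\Theta(n^{-2\alpha})$, regardless of where one stops. One caveat worth flagging: what \cite{karger00} actually proves is the $O(n^{2\alpha})$ bound, and the floor in the statement here appears to be a minor slip rather than a sharper theorem hiding a more delicate proof --- the ``refinement'' you attribute to Karger is not there. This is harmless for the paper, since every formal invocation of the lemma takes $2\alpha$ to be an integer ($\alpha=1.5$ in \cref{sec:min_cut} and $\alpha=2^i$ in \cref{sec:random_order}), where $n^{2\alpha}$ and $n^{\lfloor 2\alpha\rfloor}$ coincide. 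In short, the part of your proposal you called routine is the whole story, and the obstacle you anticipated does not actually need to be overcome.
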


Motivated by \cite{AndoniCKQWZ16}, we run our algorithm in \cref{thm:sparsifier} along with the streaming for-all sparsifier algorithm $\mathsf{ALG} $ in \cref{lem:for_all} with accuracy parameter $\eps' = O(1)$. Suppose that the output of $\mathsf{ALG}$ is $H$. We then use $H$ to find the cuts that have a size less than $1.5$ times that of the minimum cut value (which can be done in $\mathrm{poly}(n)$ time, see, e.g., \cite{karger00}). Then, since the number of these approximate minimum cuts is $O(n^3)$, using our $(1 + \eps)$ for-each sparsifier, we can get a $(1 + \eps)$-approximation to each of these cut values with high probability after taking a union bound. Picking the cut with the minimum value gives us a $(1 + \eps)$-approximate minimum cut. 

\paragraph{Faster Post-processing Time.} 
The above algorithm is simple but, unfortunately, takes $O(n^3)$ post-processing time. To achieve a faster post-processing time, we consider the recursive contraction algorithm introduced by~\cite{KS96}, which enumerates all $\alpha$-approximate minimum cuts in time $O(n^{2\alpha} \log^2 n)$. We combine this with a Johnson-Lindenstrauss sketch to achieve $\Ot(n^2/\eps^2)$ post-processing time, proving \cref{corollary:min_cut}. 

First, we state the lemma from \cite{KS96}:
\begin{lemma}
    \label{lem:rec_alg}
    There is an algorithm that can find all the $\alpha$-approximate minimum cuts with high probability in $O(n^{2\alpha} \log^2 n)$ time and $\Ot(m)$ space.
\end{lemma}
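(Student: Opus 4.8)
The plan is to invoke, and sketch a proof of, the recursive contraction algorithm of Karger and Stein~\cite{KS96}, adapted so that it enumerates all $\alpha$-approximate minimum cuts and runs in $\widetilde{O}(m)$ working memory. The engine is the standard contraction bound: if $G$ has minimum cut value $\lambda$ and we repeatedly contract an edge chosen with probability proportional to its weight, then going from $k$ to $k-1$ vertices contracts an edge crossing a \emph{fixed} cut $C$ of weight at most $\alpha\lambda$ with probability at most $\alpha\lambda/(k\lambda/2)=2\alpha/k$, since the total edge weight in a graph on $k$ vertices with min cut $\lambda$ is at least $k\lambda/2$. Hence $C$ survives contraction from $n$ down to $r>2\alpha$ vertices with probability at least $\prod_{k=r+1}^{n}\big(1-2\alpha/k\big)=\Omega\big((r/n)^{2\alpha}\big)$.

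First I would fix the shrink factor so that one contraction phase preserves a fixed $\alpha$-approximate cut with probability at least $1/2$: by the bound above, contracting from $n$ to $r=\lceil n\cdot 2^{-1/(2\alpha)}\rceil$ vertices suffices. The algorithm recurses on two independently contracted copies at size $r$, bottoming out at $O(\alpha)$ vertices, where all $2^{O(\alpha)}$ cuts are enumerated by brute force. A single run obeys $T(n)\le 2\,T(\lceil n\cdot 2^{-1/(2\alpha)}\rceil)+O(n^2)$; the recursion tree has depth $2\alpha\log_2 n+O(1)$ and $O(n^{2\alpha})$ leaves, and summing the $O\big((n\cdot 2^{-d/(2\alpha)})^2\big)$ work over the $2^d$ nodes at level $d$ gives $T(n)=O(n^{2\alpha}\log n)$ — the level sums are geometric and dominated by the leaves when $\alpha>1$, and equal $O(n^2\log n)$ when $\alpha=1$.

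Next I would bound the probability $p(n)$ that one run outputs a particular $\alpha$-approximate minimum cut $C$. Since $C$ is recovered whenever it survives into at least one of the two recursive copies and is then found recursively, $p(n)\ge 1-\big(1-\tfrac12 p(r)\big)^2$ with $r=\lceil n\cdot 2^{-1/(2\alpha)}\rceil$; unrolling this over its $O(\log n)$ levels (e.g.\ by tracking $1/p$, which grows by $O(1)$ per level) yields $p(n)=\Omega(1/\log n)$. Running the algorithm $\Theta(\log^2 n)$ times independently, $C$ is missed in every run with probability $(1-\Omega(1/\log n))^{\Theta(\log^2 n)}=n^{-\Omega(1)}$; taking the hidden constant large enough and union-bounding over the $O(n^{\lfloor 2\alpha\rfloor})$ approximate minimum cuts guaranteed by \cref{lem:cut_number_constant}, with high probability the union of the runs' outputs contains every $\alpha$-approximate minimum cut. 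The total running time is $O(n^{2\alpha}\log n)\cdot\Theta(\log^2 n)$, matching the stated bound up to polylogarithmic factors.

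For the space bound, the cuts are \emph{enumerated} — e.g.\ each candidate, determined by a leaf, is reported (or, in our applications, immediately processed) before the next is generated — so the working memory is just that of the recursion: along the single active root-to-leaf path we retain the contracted graph at each of the $O(\log n)$ levels (needed to launch the second recursive call after the first returns), each of size $O(\min(m,n^2))=O(m)$ edges, for $\widetilde{O}(m)$ total; a single contraction phase is itself implementable in $\widetilde{O}(m)$ space via union–find together with a weight-bucketed sampling structure over the surviving edges. The main obstacle I anticipate is not a single hard step but careful bookkeeping: making the survival bound and the $\Theta(1/\log n)$ success recurrence go through cleanly for non-integer $\alpha$ (so the exponent is $2\alpha$, not $2\lceil\alpha\rceil$, and the ceilings in the shrink schedule only cost constant/polylog factors), and verifying the space accounting when the contracted multigraphs are stored compactly.
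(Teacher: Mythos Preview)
Your proposal is correct and is essentially a detailed reconstruction of exactly what the paper invokes: the paper does not prove \cref{lem:rec_alg} but cites it directly from~\cite{KS96}, giving only the brief high-level overview (contract by a constant factor, branch twice, recurse) that you flesh out with the standard survival-probability and recursion-tree analysis. One small sharpening: for $\alpha>1$ the per-run geometric level sum is $O(n^{2\alpha})$ without the extra $\log n$ factor, so $\Theta(\log^2 n)$ independent runs already hit the stated $O(n^{2\alpha}\log^2 n)$ bound exactly rather than only up to polylogs.
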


Below, we first give a brief explanation of the algorithm. At a high level, in each step, the algorithm randomly samples an edge proportional to its weights and then contracts the two nodes of the edge. When there are only two remaining nodes, we get a partition of the nodes corresponding to a specific cut of the graph. The work of~\cite{KS96} shows that if we repeat the algorithm $\mathrm{poly}(n)$ times, we can finally enumerate all approximate minimum cuts with high probability. The above process is simple but needs $\mathrm{poly}(n)$ runtime. To make the algorithm faster,~\cite{KS96} then proposes a recursive implementation of the contraction process. Basically, at each level of the recursion, the algorithm reduces the number of nodes by contraction by a constant factor and repeats the algorithm twice in each recursion level if the number of remaining nodes is $\Omega(1)$. Then, each leaf node of the recursion tree corresponds to a specific cut of the graph.

The technical difficulty in our case is evaluating the value of the corresponding cut in a short time when we enter a leaf node in the above process. To achieve this, recall that for a given vertex set $S \subseteq V$, the cut value between $S$ and $V \setminus S$ is $x_S^\top L x_S = x_S^\top B^T B x_S = \|Bx_S\|_2^2$ where $x_S$ is the binary indicator vector of $S$ and $B$ is the edge-vertex matrix of the graph (see \cref{sec:online_sampling} for more details). The key observation here is since we only care about a $(1 + \eps)$-approximate value, we can apply a JL matrix $T$ to $Bx_S$, where the matrix $T$ has only $O(\log (n/\eps) / \eps^2)$ rows where for every $x \in \R^{O(n/\eps)}$, $\|Tx\|_2^2 \approx_\eps \|x\|_2^2$ with high probability.

Our algorithm procedure is described as follows. As before, after the stream, we first get a for-all sparsifier $K$ with the accuracy parameter $\eps' = 1/\log n$ (\cref{lem:for_all}) and a $(1 + \eps)$ for-each sparsifier $H$ of the original graph $G$ (\cref{thm:sparsifier}). We then use the algorithm in \cref{lem:rec_alg} to enumerate all the approximate minimum cuts of $K$ with $\alpha = 1 + c/\log n$. When maintaining the recursion process of the contraction algorithm, we also maintain a sketch of the columns of the edge-vertex matrix $B$ of $H$. Specifically, we initially compute a sketch of $TB$ where $T$ is a JL matrix with $O(\log n/\eps^2)$ rows. Then, in the contraction process, when we contract a node pair $(u, v)$, we also merge the $u, v$-th columns of the sketch with their sums. Note that in this procedure, when we enter a leaf node, there will only remain two columns of our sketch, and the two vectors are exactly $TBx_S$ and $TBx_{V \setminus S}$, for which we can directly compute the cut value in $O(\log n / \eps^2)$ time.

Next, we analyze the time and space complexity in the above procedure. Since the sketch has $O(\log n / \eps^2)$ rows at each recursion level, we need $\Ot(N\log n / \eps^2)$ words of space to save the sketch, where $N$ is the number of the remaining nodes in the current recursion level. Combining two rows also takes $\Ot(\log n / \eps^2)$ time. Recall that we need $O(n)$ space in each level and $O(N)$ time to contract two nodes in the original recursive contraction algorithm. This implies after the modification, the space complexity of the above algorithm procedure remains the same of $\Ot(m) = \Ot(n/\eps)$, and the time complexity becomes $\Ot(n^2 / \eps^2)$. Putting everything together, we get the correctness of our \cref{corollary:min_cut}.

\subsection{Approximate Minimum Cut Streaming Lower Bound} 
To prove this lower bound, we consider the $k$-edge-connectivity problem as follows. The \emph{edge-connectivity} of a graph is the minimum number of edges that need to be deleted to disconnect the graph. The $k$-edge-connectivity problem asks whether the edge connectivity of a graph is $<k$ or $\geq k$.
The work of \cite{SW15} shows that any deterministic algorithm that solves the $k$-edge-connectivity problem in insertion-only streams needs $\Omega(kn \log n)$ bits of space. We note that the proof in~\cite{SW15} also works for randomized algorithms with success probability at least $1 - 1/n$, which implies a lower bound of $\Omega(kn)$ bits of space for any randomized algorithm with constant probability of success.
\begin{claim}[\cite{SW15}]\label{prop:edge-connect}
    Any randomized algorithm solving the $k$-edge-connectivity problem in an insertion-only stream with probability at least $2/3$ requires $\Omega(kn)$ bits of space. Moreover, any deterministic algorithm solving the $k$-edge-connectivity problem in an insertion-only stream requires $\Omega(kn \log n)$ bits.
\end{claim}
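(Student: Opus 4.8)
The plan is to prove the space bound by a reduction to one-way two-party communication complexity. Fix a single-pass streaming algorithm using $S$ bits. Split the edge stream into a prefix $E_A$ held by Alice and a suffix $E_B$ held by Bob; since the algorithm's memory state after reading $E_A$ is an $S$-bit message that lets Bob finish the run and read off the answer, $S$ is at least the one-way communication complexity of deciding $k$-edge-connectivity of $E_A \cup E_B$. So it suffices to produce a hard distribution (resp.\ hard family) for this predicate. The intuition guiding the construction is that after any prefix the algorithm must implicitly carry a $k$-edge-connectivity certificate (a Nagamochi--Ibaraki-style sparse subgraph preserving every cut of value $<k$), and such a certificate has $\Theta(kn)$ edges.

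For the deterministic $\Omega(kn\log n)$ bound I would exhibit a \emph{fooling set}: a family $\mathcal F$ of $n$-vertex simple graphs, $|\mathcal F| = 2^{\Omega(kn\log n)}$, that is pairwise distinguishable, meaning that for every $G \neq G'$ in $\mathcal F$ there is an ``adversary continuation'' $H$ with exactly one of $G \cup H$, $G' \cup H$ being $k$-edge-connected; a deterministic one-way protocol must then send distinct messages on distinct members of $\mathcal F$, giving $S = \Omega(\log|\mathcal F|) = \Omega(kn\log n)$. A convenient way to build $\mathcal F$ is to partition $[n]$ into $\Theta(n/k)$ blocks of size $\Theta(k)$, place inside each block a ``barely $k$-edge-connected'' gadget drawn from a block-local family of size $2^{\Omega(k^2\log k)}$, and glue all blocks to a shared clique-like frame of edge-connectivity $>k$ through single portal vertices; one checks that the whole graph is $k$-edge-connected iff every block gadget is, so the blocks act independently and a continuation perturbing one block along a tight cut separates two members that differ in that block. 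The gadgets' edges are presented interleaved in the stream so no copy can be ``retired'' early. For randomized algorithms with constant error one instead puts the uniform distribution on $\mathcal F$ and argues, via Yao's principle and an information argument, that the message must reveal $\Omega(kn)$ bits about $G$, giving $S = \Omega(kn)$; amplifying the communication lower bound to error $o(1/n)$ yields the $1-1/n$-success version.

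The step I expect to be the main obstacle is the distinguishability claim — equivalently, proving the algorithm truly must remember (almost) a full certificate. A $k$-edge-connectivity query returns only a single bit, and one cannot in general recover an arbitrary edge of an arbitrary graph from one such query: both $K_n$ and $K_n$ with one edge removed are $k$-edge-connected, so no continuation has answer equal to ``is edge $e$ present?'' uniformly over all inputs, which is exactly why a naive \textsc{Index} reduction fails. The fix is to co-design $\mathcal F$ and the continuations, restricting $\mathcal F$ to graphs that are ``one controlled edge away'' from $k$-edge-connected — e.g., each a union of $k$ edge-disjoint spanning structures with a prescribed set of value-$k$ cuts — so that for any two members some fixed continuation turns exactly one of them into a graph with a cut of value $k-1$. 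The delicate points are (i) making $\mathcal F$ simultaneously large enough ($2^{\Omega(kn\log n)}$, or entropy $\Omega(kn)$) and restricted enough to be pairwise distinguishable, (ii) keeping every member a simple graph on only $n$ vertices, and (iii) verifying that the block-and-frame composition introduces no spurious cut of value $<k$ that would decouple the answer from the blocks. Once this is in place, the communication-to-streaming translation of the first paragraph finishes the proof.
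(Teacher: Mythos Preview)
The paper does not actually prove this claim: it is stated as a citation to \cite{SW15} and used as a black box, with only the parenthetical remark that the \cite{SW15} argument also yields the randomized $\Omega(kn)$ bound at constant success probability. So there is no in-paper proof to compare your proposal against; what you have written is a sketch of how one might reprove \cite{SW15}.

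That said, your deterministic construction has a quantitative gap that would prevent it from reaching $\Omega(kn\log n)$. You partition the vertex set into $\Theta(n/k)$ blocks of size $\Theta(k)$ and ask for, inside each block, a gadget family of size $2^{\Omega(k^2\log k)}$. But a simple graph on $\Theta(k)$ labeled vertices has only $\Theta(k^2)$ edge slots, so any block-local family has size at most $2^{O(k^2)}$; taking the product over $n/k$ independent blocks gives $|\mathcal F|\le 2^{O(kn)}$, hence only an $\Omega(kn)$ fooling-set bound. The missing $\log n$ factor cannot be generated by small local gadgets: it has to come from encoding \emph{which of $\Theta(n)$ vertices} an edge reaches. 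The \cite{SW15} construction is global in exactly this sense --- roughly, Alice holds $k$ edge-disjoint perfect matchings on $\Theta(n)$ vertices (so $((\Theta(n))!)^{\Theta(k)}=2^{\Omega(kn\log n)}$ configurations), and Bob's continuation forces a single matching edge to be the bottleneck of a cut of value $k-1$ or $k$. Your fooling-set/Yao framework and the streaming-to-one-way-CC translation are fine; it is the block decomposition that needs to be replaced by a permutation-style encoding.

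For the randomized $\Omega(kn)$ part your outline is closer to workable, since there the target is only $kn$ bits of entropy and a block construction of the kind you describe (with $2^{\Theta(k^2)}$ gadgets per block) can in principle supply that; the distinguishability verification (your points (i)--(iii)) is still the real work, but at least the counting is not already off.
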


We now prove the randomized and deterministic lower bounds of \cref{thm:lower_bound}.

\paragraph{Randomized Lower Bound.}
\begin{lemma}\label{lemma:randomized_lower}
    Fix $\eps > 1/n$. Any randomized algorithm that outputs a (1 + $\eps$)-approximation to the minimum cut of a simple, undirected graph in a single pass over a stream with probability at least 2/3 requires $\Omega(n/\eps)$ bits of space
\end{lemma}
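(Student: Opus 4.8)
The plan is to reduce the $k$-edge-connectivity problem to $(1+\eps)$-approximate minimum cut for a suitable $k = \Theta(1/\eps)$ and then invoke the $\Omega(kn)$-bit lower bound of \cref{prop:edge-connect}. Recall that for a simple unweighted graph the minimum cut value equals the edge connectivity $\lambda$. Suppose $\mathcal{A}$ is a one-pass insertion-only algorithm that, with probability at least $2/3$, outputs an estimate $\hat\lambda$ with $(1-\eps)\lambda \le \hat\lambda \le (1+\eps)\lambda$, using $s$ bits of space. Then on an instance with $\lambda \ge k$ we have $\hat\lambda \ge (1-\eps)k$, while on an instance with $\lambda \le k-1$ we have $\hat\lambda \le (1+\eps)(k-1)$; so if these two ranges are disjoint we can decide $k$-edge-connectivity from $\hat\lambda$.

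The key (and essentially only) calculation is to choose $k$ so that $(1+\eps)(k-1) < (1-\eps)k$, which rearranges to $k < \frac{1+\eps}{2\eps} = \frac{1}{2\eps} + \frac12$. Taking $k = \lceil \frac{1}{3\eps} \rceil$ satisfies this whenever $\eps$ is below an absolute constant (one checks $\lceil \frac{1}{3\eps}\rceil \le \frac{1}{3\eps} + 1 < \frac{1}{2\eps} + \frac12$ precisely when $\eps < \frac13$); for $\eps$ above that constant, $n/\eps = \Theta(n)$ and the $\Omega(n)$-bit lower bound for merely deciding connectivity already gives the claim. With this $k$, the reduction is: run $\mathcal{A}$ on the stream and report ``$\lambda \ge k$'' if $\hat\lambda$ exceeds the midpoint $\frac12\big((1+\eps)(k-1) + (1-\eps)k\big)$ of the gap, and ``$\lambda < k$'' otherwise. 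Whenever $\mathcal{A}$ succeeds this answer is correct, so the reduction solves $k$-edge-connectivity in a single insertion-only pass with probability at least $2/3$ using $s$ bits.

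By \cref{prop:edge-connect}, any such algorithm needs $\Omega(kn)$ bits, hence $s = \Omega(kn) = \Omega(n/\eps)$, as claimed. Two bookkeeping points remain to be confirmed: first, the hard instances underlying \cref{prop:edge-connect} are simple undirected graphs on $\Theta(n)$ vertices, so the reduced instances are of the form required by the statement; second, the hypothesis $\eps > 1/n$ guarantees $k = \Theta(1/\eps) = O(n)$, which keeps $k$ within the range for which \cref{prop:edge-connect} is stated (and ensures $k \ge 2$ so the instance family is nontrivial).

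The main obstacle is nothing more than pinning down the constant in the choice of $k$ so that the two approximation windows provably separate, and disposing of the boundary regime of large $\eps$ via the elementary $\Omega(n)$ connectivity bound; the rest is a direct substitution into \cref{prop:edge-connect}.
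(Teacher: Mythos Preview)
Your proposal is correct and follows essentially the same approach as the paper: reduce $k$-edge-connectivity to $(1+\eps)$-approximate minimum cut by choosing $k = \Theta(1/\eps)$ so that the approximation windows for $\lambda < k$ and $\lambda \ge k$ are disjoint, then invoke \cref{prop:edge-connect}. The paper simply takes $k = \frac{1}{10\eps}$ and does the analogous arithmetic, without separately treating the large-$\eps$ regime or the $k = O(n)$ bookkeeping that you include.
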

\begin{proof}
Consider the lower bound for $k$-edge-connectivity in \cite{SW15} with $k= \frac{1}{10 \eps}$ (\Cref{prop:edge-connect}). Distinguishing between whether the graph is $<k$-connected or $\geq k$-connected needs $\Omega(kn)=\Omega(n/\eps)$ space. This implies that deciding whether the min-cut has value $<k$ or $\geq k$ also requires $\Omega(kn)=\Omega(n/\eps)$ space. Suppose that there is a $(1+\eps)$-approximation algorithm for the minimum cut value, in the case where the edge connectivity $\lambda < k$, the approximate value is at most $(1+\eps) \lambda \leq (1+1/10k) \cdot (k-1) < k - 9/10$. In the case where the connectivity $\lambda \geq k$, the approximate value is at least $(1-\eps) \lambda \geq (1- 1/10k) \cdot k > k - 1/10$. This implies that an algorithm that gives a $(1+\eps)$-approximation to minimum cut value, solves $k$-edge-connectivity for $k= \frac{1}{10\eps}$. Thus, we obtain a lower bound of $\Omega(n/\eps)$ bits of space for $(1+\eps)$-approximating the value of the minimum cut.
\end{proof}
In addition to the proof above that considers the $k$-edge-connectivity problem, we also provide an alternative self-contained construction for the randomized result using the
Index communication problem in \Cref{sec:lb}.

\paragraph{Deterministic Lower Bound.}
If the minimum cut value algorithm is deterministic, then we deterministically solve $k$-edge-connectivity for $k= \frac{1}{10\eps}$. This gives us a lower bound of $\Omega(kn \log n) = \Omega(n \log n/\eps)$ bits of space for deterministic algorithms.
We show how to improve this lower bound to $\Omega(n/\eps^2)$ bits for $\eps \geq n^{-1/4}$.
Note that this assumption on $\eps$ is consistent with the previous work of \cite{carlson2019optimal}.
They also use this assumption on $\eps$ and prove a lower bound of $\Omega(n \log n/\eps^2)$ bits for cut sparsifiers.

\begin{lemma}\label{lemma:deterministic_lower}
    Fix $\eps > 1/n^{1/4}$. Any deterministic algorithm that outputs a (1 + $\eps$)-approximation to the minimum cut value of a simple, undirected graph in a single pass over a stream requires $\Omega(n/\eps^2)$ bits of space
\end{lemma}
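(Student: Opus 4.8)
The plan is to reduce from a deterministic one-way communication problem whose hard distribution forces a linear-algebraic encoding argument, mirroring the $\Omega(n\log n / \eps^2)$ bit lower bound for cut sparsifiers in \cite{carlson2019optimal} but adapted so that the gap a $(1+\eps)$-approximate minimum cut algorithm must detect is exactly the gap between two edge-connectivity values that differ by a $(1+\eps)$ factor. The key point exploited by the $\eps \ge n^{-1/4}$ regime is that one can pack $\Theta(1/\eps^2)$ near-min-cut ``slots'' around a single vertex (or a small gadget) rather than just $\Theta(1/\eps)$, so that a deterministic algorithm, which cannot rely on a union bound over a random subset and instead must succeed on \emph{every} input in a large family, is forced to essentially remember a constant-connectivity sparsifier-like object for each of $\Theta(n\eps)$ disjoint gadgets, each of which carries $\Theta(1/\eps^2)$ bits of ``which-edges-survive'' information that is not compressible without destroying some cut by more than a $(1+\eps)$ factor.

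Concretely, I would build a graph on $n$ vertices partitioned into $\Theta(n\eps)$ gadgets; in each gadget, place a central structure whose minimum cut is $\Theta(1/\eps^2)$ and whose set of $(1+\eps)$-approximate minimum cuts is in bijection with a large family $\mathcal{F}$ of subsets (or a code) of size $2^{\Omega(1/\eps^2)}$ such that any two distinct members differ in a cut value by strictly more than a $(1+\eps)$ factor — this is the deterministic analogue of the ``$\eps$-separated family of cuts'' used in \cite{carlson2019optimal, AndoniCKQWZ16}. Alice holds one member of $\mathcal{F}$ per gadget, inserts the corresponding edges, and passes the algorithm's memory state; Bob, for a queried gadget, inserts a gadget-specific completion so that the global minimum cut of the resulting graph equals the cut value in that gadget determined by Alice's choice. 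Since a $(1+\eps)$-approximate minimum cut algorithm must then distinguish all members of $\mathcal{F}$ in that gadget, and the gadgets are disjoint and the algorithm is deterministic (hence a single fixed map from streams to outputs that must be correct on all $|\mathcal{F}|^{\Theta(n\eps)}$ inputs), a standard encoding/incompressibility argument forces the memory to have at least $\log |\mathcal{F}|^{\Theta(n\eps)} = \Omega(n\eps \cdot 1/\eps^2) = \Omega(n/\eps^2)$ bits. The assumption $\eps \ge n^{-1/4}$ is what guarantees each gadget has enough vertices ($\Theta(1/\eps^2) = O(\sqrt{n})$, comfortably below $n/\Theta(n\eps)=\Theta(1/\eps)$... here I would actually size gadgets to have $\Theta(1/\eps)$ vertices and $\Theta(1/\eps^2)$ edges, which is exactly why $1/\eps \le n^{1/4} \cdot \text{(number of gadgets)}$-type bookkeeping needs $\eps \ge n^{-1/4}$) so that the construction fits on $n$ vertices as a \emph{simple} graph.

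The main obstacle I expect is constructing the gadget: I need a simple graph on $O(1/\eps)$ vertices whose minimum cut is $\Theta(1/\eps^2)$ and which admits $2^{\Omega(1/\eps^2)}$ distinct configurations that are pairwise $(1+\eps)$-separated in some cut value, while also ensuring Bob can ``select'' any one configuration's cut as the global minimum without the other gadgets or the selection edges interfering. The cleanest route is probably to reuse the bipartite-graph construction of \cite{carlson2019optimal} — where the presence/absence of each of $\Theta(1/\eps^2)$ possible edges between two sides of size $\Theta(1/\eps)$ is the hidden information, and a single vertex's cut isolates the relevant edge set — and verify that (i) the separation there is genuinely multiplicative by $(1+\eps)$ rather than merely additive, which requires the min-cut to be only $\Theta(1/\eps^2)$ and the perturbation to be $\Theta(1/\eps)$ edges, and (ii) the reduction goes through for the \emph{value} version of approximate min-cut, not just for reporting a cut. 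If the multiplicative separation is too weak in the off-the-shelf construction, the fallback is to scale down: make the min-cut $\Theta(1/\eps)$ (not $1/\eps^2$) and pack only $\Theta(1/\eps)$ bits per gadget but use $\Theta(n)$ gadgets — but that only recovers $\Omega(n/\eps)$, so getting the extra $1/\eps$ factor genuinely hinges on the deterministic-vs-randomized distinction allowing a per-gadget payload of $\Theta(1/\eps^2)$ incompressible bits, exactly as in the sparsifier lower bound of \cite{carlson2019optimal}. I would therefore structure the proof to invoke their gadget and separated-family lemma as a black box wherever possible, and only re-derive the parts of the reduction specific to ``minimum cut value'' as opposed to ``all cuts.''
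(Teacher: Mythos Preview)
Your proposal has a concrete arithmetic error and an inconsistent gadget, and it misses a much simpler route.

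First, the arithmetic: with $\Theta(n\eps)$ gadgets each carrying $\Theta(1/\eps^2)$ bits you get $\Theta(n\eps)\cdot\Theta(1/\eps^2)=\Theta(n/\eps)$, not $\Omega(n/\eps^2)$ as you wrote. Second, your gadget specification is infeasible: a \emph{simple} graph on $\Theta(1/\eps)$ vertices has minimum degree (hence minimum cut) at most $O(1/\eps)$, so it cannot have minimum cut $\Theta(1/\eps^2)$. To pack $\Theta(1/\eps^2)$ bits into a single gadget with cut values separated by a $(1+\eps)$ factor you need gadgets on $\Theta(1/\eps^2)$ vertices (as in the hard instance of \cite{AndoniCKQWZ16}), and then there are only $\Theta(\eps^2 n)$ gadgets; the $\Omega(n/\eps^2)$ total comes from $\Theta(1/\eps^4)$ bits per gadget, and this is exactly why the assumption $\eps\ge n^{-1/4}$ appears (so that cut values $\le 1/\eps^4\le n$). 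Your parameter bookkeeping does not reach this.

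More importantly, the paper's proof avoids the entire gadget-building program. It simply reduces from the existing $\Omega(n/\eps^2)$ lower bound for for-all cut sparsifiers in \cite{AndoniCKQWZ16}: run the deterministic min-cut algorithm $A$ on Alice's stream $G$; then, for \emph{any} cut $S$ you wish to query, append to the stream two fresh cliques $C_1,C_2$ on $2n$ vertices each and biclique $S$ to $C_1$ and $\bar S$ to $C_2$. In the resulting graph the unique minimum cut is $(C_1\cup S,\,C_2\cup\bar S)$ with value exactly $w_G(S,\bar S)$ (since $1/\eps^4\le n<2n$), so $A$'s output is a $(1+\eps)$-approximation to that cut value. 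Determinism is used precisely here: the memory state of $A$ after Alice's stream is a single fixed object, and Bob may append \emph{any} of the exponentially many completions to it, thereby turning $A$ into a $(1+\eps)$-cut-sparsifier data structure for all cuts of value $\le 1/\eps^4$, which is what the \cite{AndoniCKQWZ16} hard instance needs. This post-stream clique trick is the missing idea in your proposal; it lets you invoke the sparsifier lower bound as a black box instead of rebuilding it.
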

\begin{proof}
A randomized insertion-only streaming algorithm for a for-all cut sparsifier needs $\Omega(n/\eps^2)$ bits of space \cite{AndoniCKQWZ16} (since any data structure that stores this information needs that much space). If we carefully look at the proof of \cite{AndoniCKQWZ16} in section $3.1$, their hard instance is a disjoint union of $\eps^2 n/2$ bipartite graphs with $2/\eps^2$ vertices each. They use the cut sparsifier to query cuts that have vertices only in one of these disjoint graphs, which implies that the cut sparsifier only needs to preserve the cut values for cuts that have at most $1/\eps^4$ edges. Thus, a $(1+\eps)$ cut sparsifier that preserves the cut values of all cuts of size at most $1/\eps^4$ edges needs $\Omega(n/\eps^2)$ bits of space.

We will now show how to simulate a $(1+\eps)$ cut sparsifier that preserves the cut values of all cuts of size at most $1/\eps^4$ using an algorithm for $(1+\eps)$-approximate minimum cut. This will give us the desired lower bound.
Let $A$ be a deterministic insertion-only streaming algorithm for $(1+\eps)$-approximate minimum cut value we run during the stream for input graph $G$.

After the stream, consider any cut $S$ whose value we want to approximate within a $(1+\eps)$ factor (note that the value is at most $1/\eps^4$). We now add some extra edges to the stream for $A$. We first construct two cliques $C_1$ and $C_2$ on $2n$ vertices each and add their edges to the stream. We then add edges between all vertices of $S$ and $C_1$ and similarly for all vertices in $\bar{S}$ and $C_2$. Call this new graph $G'$.
We now end the stream and look at the output of $A$. We claim this is the $(1+\eps)$-approximate value of cut $S$ in the original graph.

To prove this, we must show that $S$ is the minimum cut in the modified graph $G'$. We know that the value of $S$ is at most $1/\eps^4 \leq n$ (since $\eps \geq n^{-1/4}$).
Consider any cut that separates vertices in $C_1 \cup S$. Such a cut has a size of at least $2n$. The same applies to any cut separating vertices in $C_2 \cup \bar{S}$. Thus, the minimum cut is $C_1 \cup S$. None of the new edges we added cross this cut, so the size of this cut remains unchanged in $G'$, implying that we get a $(1+\eps)$-approximation to the cut value in $G$. Note that the algorithm $A$ has to be deterministic for us to do this because we have to repeat this process for exponentially many cuts (potentially $2^n$).

The space taken to build the sparsifier for insertion-only streams is the space taken for $A$ when run on $5n$ vertices.
This implies that any deterministic insertion-only streaming algorithm for $(1+\eps)$-approximate minimum cut value needs $\Omega(n/\eps^2)$ bits of space, proving the lower bound for deterministic algorithms.
\end{proof}

\begin{proof}[Proof of \cref{thm:lower_bound}]
The proof of the theorem directly results from combining \cref{lemma:randomized_lower} and \cref{lemma:deterministic_lower}. Notably, this shows that the spectral sparsifier result in \cref{thm:sparsifier} is optimal in space complexity up to polylogarithmic factors. 
\end{proof}

\subsection{Application: All-Pairs Effective Resistances}
\label{sec:resistance}
Another application of our streaming algorithm from \cref{thm:sparsifier} is calculating all-pairs effective resistances in graph streams. Suppose we have a $(1 + \eps)$ for-each spectral sparsifier $H$ of the original graph $G$. Then, we may hope it will be helpful to approximate the pseudoinverse of $G$. In fact, in the work of~\cite{jambulapati2018efficient}, it has been shown that if we have a $(1 + \eps)$ for-each spectral sparsifier, we can use it to generate an effective resistance sketch in near-linear time, which can compute the all-pairs effective resistances in $\widetilde{O}(n^2 / \eps)$ additional time (Algorithm~8 in~\cite{jambulapati2018efficient}). From this, we immediately see the correctness of our \cref{corollary:efffective_resistance}.

\paragraph{Lower Bound.}
We next consider lower bounds. In \cref{thm:lower_bound_effective_resistance}, we show that if a sketch can approximate each effective resistance with constant probability, then such a sketch must have size at least $\Omega(n/\eps)$. We need the following lemmas about effective resistances and random graphs.

\begin{lemma}(see, e.g., \cite{Lec13})
    \label{lem:commute time}
    For a graph $G$ with unit weight for each edge, we have $C(s, t) = 2m r_{\mathrm{eff}}(s,t)$, where $C(s, t)$ is the commute time of $s$ and $t$, which is defined as the expected number of steps of a random walk to go from $s$ to $t$ and back.
\end{lemma}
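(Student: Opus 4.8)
The plan is to prove this classical identity of Chandra, Raghavan, Ruzzo, Smolensky and Tiwari via the electrical-network interpretation of the random walk. Regard $G$ as a network in which every edge is a unit resistor, so that by Ohm's law, forcing a current of $c$ units into $s$ and out of $t$ induces a potential difference $\phi_s - \phi_t = c\cdot r_{\mathrm{eff}}(s,t)$ across the terminals; this linear scaling is all we will use from the electrical side. Throughout we assume $s$ and $t$ lie in the same connected component of $G$ (otherwise both sides are infinite), so that effective resistances are finite and the relevant linear systems have unique solutions.

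The heart of the argument is a hitting-time lemma: fix a vertex $t$, inject $d_x$ units of current into each vertex $x\in V$, and extract $2m=\sum_x d_x$ units at $t$; let $\phi$ be the resulting potentials, normalized so $\phi_t=0$. Kirchhoff's current law at a vertex $u\neq t$ reads $\sum_{w\sim u}(\phi_u-\phi_w)=d_u$, equivalently $\phi_u = 1 + \frac{1}{d_u}\sum_{w\sim u}\phi_w$. This is precisely the recurrence obeyed by the hitting time $H(u,t)$ with boundary condition $H(t,t)=0$, so by uniqueness of the solution we conclude $\phi_u=H(u,t)$ for every $u$, and in particular $\phi_s-\phi_t=H(s,t)$.

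I would then finish by superposition. Let flow $\mathrm{I}$ be the flow above with target $t$, so its potentials $\phi$ satisfy $\phi_s-\phi_t=H(s,t)$. Let flow $\mathrm{II}$ inject $2m$ units at $s$ and extract $d_x$ units at every $x\in V$; this is the negative of flow $\mathrm{I}$ with $s$ playing the role of the target, so its potentials $\psi$ satisfy $\psi_u=-H(u,s)$ and hence $\psi_s-\psi_t=H(t,s)$. Adding the two flows, the $d_x$ injections cancel at every vertex other than $s$ and $t$, leaving a net current of $2m$ into $s$ and $2m$ out of $t$. By linearity the combined potential is $\phi+\psi$, so the terminal potential difference is $(\phi_s-\phi_t)+(\psi_s-\psi_t)=H(s,t)+H(t,s)=C(s,t)$; by Ohm's law it also equals $2m\cdot r_{\mathrm{eff}}(s,t)$, which gives the claim.

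The only genuinely delicate points are (i) justifying that the harmonic system defining $\phi$ has a unique solution and matches the hitting-time recurrence exactly — this is where connectivity of the component of $s$ and $t$ enters — and (ii) the bookkeeping in the superposition step, checking that the per-vertex injections cancel to leave exactly a $2m$-unit $s$--$t$ flow. Everything else is immediate from linearity of the resistor network. Alternatively, one can argue purely algebraically by expressing both $H(s,t)$ and $r_{\mathrm{eff}}(s,t)=\chi_{s,t}^\top L_G^\dagger \chi_{s,t}$ through the Laplacian pseudoinverse, but the electrical-flow proof is cleaner and is the one I would present (or simply cite, as in the statement).
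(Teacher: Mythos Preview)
Your proof is correct and is the standard electrical-network argument due to Chandra, Raghavan, Ruzzo, Smolensky, and Tiwari. The paper itself does not prove this lemma at all: it is stated with a citation and used as a black box, so there is no ``paper's own proof'' to compare against. Your write-up would serve perfectly well as a self-contained justification; the only remark is that since the paper explicitly defers to the literature here, a one-line citation would also have sufficed.
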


\begin{lemma}[\cite{ottolini2023concentration}]
\label{lem:random_graph}
    Let $G(n, p)$ be an Erdos-Renyi random graph. Then we have, with high probability that for any pair $w \ne v$,
    \begin{equation*}
        h(w, v) = \frac{2|E|}{\mathrm{deg}(v)} + \left\{\begin{aligned}
        &-1 \ & \text{if $(w, v) \in E$}\\
        &-1 + 1/p \ & \text{if $(w, v) \notin E$} 
    \end{aligned}\right\} + O\left(\frac{(\log n)^{3/2}}{\sqrt{n}}\right),
    \end{equation*}
    where $h(s, t)$ is the hitting time of $s, t$, defined as the expected number of steps of a random walk from $s$ to $t$. Note that we have $C(s, t) = h(s, t) + h(t, s)$.
\end{lemma}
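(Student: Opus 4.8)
\textbf{Proof plan for Lemma~\ref{lem:random_graph}.}

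The plan is to prove the hitting-time formula for $G(n,p)$ by setting up the standard first-step (harmonic) recurrence for hitting times and solving it up to lower-order error terms, using concentration of the degree sequence and of neighborhood sizes in $G(n,p)$. Fix the target vertex $v$ and write $h(w) := h(w,v)$ for every $w \ne v$, with $h(v) = 0$. By conditioning on the first step of the random walk from $w$, we get the exact identity $h(w) = 1 + \frac{1}{\deg(w)}\sum_{u \sim w, u \ne v} h(u)$, i.e. $h$ is harmonic off $\{v\}$ with respect to the random-walk Laplacian. The key quantitative input is that in $G(n,p)$, with high probability (over a range of $p$ bounded away from the connectivity threshold, say $p \ge C\log n / n$), every degree satisfies $\deg(w) = (n-1)p\,(1 + O(\sqrt{\log n/(np)}))$, and moreover for any two vertices $w, w'$ the number of common neighbors is $(n-1)p^2(1+o(1))$; these follow from Chernoff bounds and a union bound over the $O(n)$ or $O(n^2)$ relevant events.

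First I would establish a coarse estimate: $h(w) = n/(\text{something}) + (\text{error})$ by guessing that $h$ is nearly constant across $w$ (since the graph is a mild perturbation of the complete graph) and plugging a constant ansatz $h(w) \approx \bar h$ into the recurrence, which gives $\bar h \approx 1 + \bar h \cdot \frac{\deg(w) - \mathbbm{1}[w\sim v]}{\deg(w)}$, hence $\bar h \approx \deg(w)$ at leading order — and since all degrees are $\approx np$, $\bar h \approx np \approx 2|E|/\deg(v)$. Then I would refine: write $h(w) = \bar h + \delta(w)$, substitute into the exact recurrence, and track the correction. The term $\mathbbm{1}[(w,v)\in E]$ enters because when $w \sim v$ exactly one of $w$'s $\deg(w)$ neighbors (namely $v$) contributes $h(v) = 0$ rather than $\approx \bar h$ to the sum, costing a $-\bar h/\deg(w) \approx -1$ correction; when $w \not\sim v$, there is no such neighbor, so the analogous comparison must instead go through a neighbor-of-neighbor expansion, and the probability that a random two-step walk from $w$ reaches $v$ is $\approx \deg_{\text{common}}(w,v)/(\deg(w)\cdot \text{avg deg}) \approx p$, which after propagating through the recurrence yields the $+1/p$ term. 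Managing the error terms carefully — showing the $\delta(w)$ corrections are uniformly $O((\log n)^{3/2}/\sqrt n)$ after the $\mathbbm{1}$-corrections are extracted — is where the bookkeeping lives; one clean way is to iterate the recurrence a bounded number of times (the walk mixes in $O(1)$ steps) and bound the residual using the spectral gap of $G(n,p)$, which is $1 - O(1/\sqrt{np})$ with high probability.

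Alternatively, and perhaps more robustly, I would invoke the commute-time / effective-resistance identity $C(w,v) = 2|E|\,r_{\mathrm{eff}}(w,v)$ (this is Lemma~\ref{lem:commute time}) together with $h(w,v) = \tfrac12(C(w,v) + h(w,v) - h(v,w))$ and the fact that on $G(n,p)$ the effective resistance $r_{\mathrm{eff}}(w,v)$ concentrates around $\frac{1}{\deg(w)} + \frac{1}{\deg(v)}$ up to the stated error — the latter being a known consequence of the near-expander structure of $G(n,p)$, provable by comparing $L_G^\dagger$ to the pseudoinverse of the Laplacian of the complete graph plus the diagonal degree correction. However, since the statement of Lemma~\ref{lem:random_graph} is cited verbatim from \cite{ottolini2023concentration}, the cleanest route is simply to cite that paper: the ``proof'' here is to state that the formula is exactly Theorem~1 (or its display) of \cite{ottolini2023concentration}, specialized to the regime $p$ we need, and to note that the error term $O((\log n)^{3/2}/\sqrt n)$ is uniform over all pairs $w \ne v$ with high probability. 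The main obstacle, were one to reprove it from scratch, is the uniform (over all $\binom n2$ pairs) control of the harmonic correction $\delta(w)$: a naive union bound over pairs loses too much unless one exploits that the corrections are themselves averages over $\Theta(np)$ nearly-independent contributions, so the effective randomness is the degree sequence rather than individual edges.
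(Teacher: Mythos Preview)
The paper does not prove this lemma at all: it is stated with the citation \cite{ottolini2023concentration} in the header and used as a black box, with no accompanying proof. You correctly identify this yourself when you write that ``the cleanest route is simply to cite that paper''---that is exactly, and only, what the paper does. Your sketched first-principles argument (first-step recurrence, constant ansatz, correction via the $\mathbbm{1}[(w,v)\in E]$ term, error control through the spectral gap) is a reasonable outline of how one might reprove the result, but it is not what the paper does and is not needed here.
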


Our reduction is based on the following Index problem.

\begin{lemma}[Index Lower Bound \cite{kushilevitz_nisan_1996}]
\label{lem:index}
Suppose that Alice has a random string $u \in \{0, 1\}^n$ and Bob has a random index $i \in [n]$. If Alice sends a single message to Bob from which Bob can recover $u_{i}$ with probability at least $2/3$, where the probability is over both the randomness of the protocol and the input, then Alice must send $\Omega(n)$ bits to Bob.
\end{lemma}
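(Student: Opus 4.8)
The plan is to give the standard information-theoretic argument. Let $U=(U_1,\dots,U_n)$ be Alice's input, uniform on $\{0,1\}^n$, let $R$ be Alice's private randomness (independent of $U$), and let $M=M(U,R)$ be the message; by padding we may assume $M$ has a fixed length $c$, so that $H(M)\le c$ and hence $I(M;U)\le c$. When Bob gets index $i$ he outputs a guess $\hat U_i=g_i(M,S_i)$, where $S_i$ is Bob's private randomness, independent of $(U,R)$. Write $p_i=\Pr[\hat U_i=U_i]$ over $U,R,S_i$; the hypothesis is exactly $\frac1n\sum_{i=1}^n p_i=\Pr_{i,U,R,S}[\hat U_i=U_i]\ge \tfrac23$.

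The first step is to lower-bound $I(M;U)$ by $\sum_i I(M;U_i)$ (superadditivity of mutual information across independent coordinates): writing $U_{<i}=(U_1,\dots,U_{i-1})$, the chain rule gives $I(M;U)=\sum_i I(M;U_i\mid U_{<i})$, and since the $U_i$ are independent, $I(M;U_i\mid U_{<i})=H(U_i)-H(U_i\mid M,U_{<i})\ge H(U_i)-H(U_i\mid M)=I(M;U_i)$, using that conditioning does not increase entropy. The second step is a per-coordinate bound: for each fixed $i$, since $U_i\to M\to\hat U_i$ is a Markov chain (Bob's coins $S_i$ are independent of $(U_i,M)$), the data-processing inequality gives $I(M;U_i)\ge I(U_i;\hat U_i)=1-H(U_i\mid\hat U_i)$, and Fano's inequality for a binary alphabet gives $H(U_i\mid\hat U_i)\le H_2(1-p_i)$, where $H_2$ is the binary entropy function. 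Combining the two steps, $c\ge I(M;U)\ge\sum_{i=1}^n\bigl(1-H_2(1-p_i)\bigr)$.

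The last step is convexity. The map $t\mapsto 1-H_2(t)$ is convex, so by Jensen $\frac1n\sum_i(1-H_2(1-p_i))\ge 1-H_2\bigl(1-\frac1n\sum_i p_i\bigr)$, and since $1-\frac1n\sum_i p_i\le\tfrac13$ and $1-H_2$ is decreasing on $[0,\tfrac12]$, this is at least $1-H_2(\tfrac13)$, a positive absolute constant (equal to $\tfrac53-\log_2 3\approx 0.08$). Hence $c\ge\bigl(1-H_2(\tfrac13)\bigr)\,n=\Omega(n)$, which is the claim.

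I do not expect a serious obstacle here; the only things to be careful about are the bookkeeping of independence (so that the data-processing step remains valid in the presence of Bob's private coins, and so that the superadditivity step uses independence of the $U_i$ correctly) and the averaging over the random index $i$ interacting with the per-coordinate Fano bound. For completeness one can note that the statement as worded is the distributional version under the uniform input distribution; the worst-case randomized one-way lower bound follows from it by Yao's minimax principle, but it is the distributional version that the above proof establishes directly.
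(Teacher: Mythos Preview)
Your argument is the standard information-theoretic proof and is correct as written; the bookkeeping of independence, the data-processing step, the per-coordinate Fano bound, and the convexity/Jensen step are all handled properly, and the constant $1-H_2(1/3)=5/3-\log_2 3$ is right. There is nothing to compare against in the paper itself: the lemma is stated there as a cited result from \cite{kushilevitz_nisan_1996} and is not re-proved, so your proposal simply supplies a proof where the paper has none.
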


At a high level, we will reduce the Index problem to the effective resistance problem. Suppose Alice has a string $s \in \{0, 1\}^{\Theta(n/\eps)}$. We will construct a graph $G$ to encode $s$ such that Bob can recover $s_i$ with constant probability from a $(1\pm\eps)$ effective resistance sketch of $G$. By the communication complexity lower bound of the Index problem (\cref{lem:index}), the cut sketch must use  $\Omega(n/\eps)$ bits. 

\begin{proof}[Proof of \cref{thm:lower_bound_effective_resistance}] We first give our construction of the graph $G$.

\paragraph{Construction of $G$.}
We use a bipartite graph $G$ to encode $s$. Let $L$ and $R$ be the left and right nodes of $G$ where $|L| = |R| = n/2$. We partition $L$ into $O(n\eps)$ disjoint blocks $L_1, \ldots, L_{O(n\eps)}$ of the same size $\frac{1}{\eps}$, and similarly, we partition $R$ into $R_1, \ldots, R_{O(n\eps)}$. We divide $s$ into $n\eps$ disjoint strings $s_{i} \in \{0, 1\}^{(\frac{1}{\eps})^2}$ of the same length. We will encode $s_{i}$ using the edges from $L_i$ to $R_i$, where we use $G_i$ to denote the subgraph $(L_i, R_i)$. In particular, if the $s_{i}(j) = 1$, we form an edge that connects the corresponding node pair in $L_i$ and $R_i$ with unit weight. Note that each $G_i$ is disconnected from the other $G_j$ in this construction.

\paragraph{Recovering a bit in $s$ from an  effective resistance sketch of $G$.}
Suppose Bob wants to recover a specific bit of $s$, which belongs to the sub-string $z = s_{i}$ and has an index $t$ in $z$. This coordinate corresponds to whether there is an edge between $u \in L_i$ and $v \in R_i$.

Next, we consider the effective resistance between $u$ and $v$ for these two cases. Note that since $G_i$ is disconnected from the other $G_j$, we just need to consider the sub-graph $G_i$. By construction, $G_i$ is a random graph with $p = \frac{1}{2}$. From~\cref{lem:commute time} we get that we only need to show both the hitting time $h(u, v)$ and $h(v, u)$ has a $(1 + \eps)$-separation (then the $r_{\mathrm{eff}}(u, v)$) will also have a $(1 + \eps)$-separation. Then, from~\cref{lem:random_graph}, we get that both the $h(u, v)$ and $h(v, u)$ will have a $\Theta(1)$ gap for the two cases, and with high probability, we have that both of  $h(u, v)$ and $h(v, u)$ are $\Theta(1/\eps)$. From this, we can get that $r_{\mathrm{eff}}(u, v)$ will have a $(1 + \eps)$-separation for the two cases (note that Alice can also send the degree of each node and the number of edges in each $G_i$ to Bob, which only needs $O(n\log(1/\eps))$ bits), which yields an $\Omega(n / \eps)$ lower bound of the effective resistance sketch size.
\end{proof}
\section{Exact Minimum Cut in Random-Order Streams} \label{sec:random_order}

In this section, we give an $\widetilde{O}(n)$ algorithm that outputs the minimum cut in a simple, unweighted graph in a single-pass random-order stream. Recall that in a simple graph, there is at most one edge between any pair of vertices. For a clearer demonstration, we will present an initial construction for an algorithm with an $\widetilde{O}(n^2)$ update time when an edge arrives. Then, we shall show how to improve the update time to $\widetilde{O}(n)$, proving \cref{thm:random_order}.

\subsection{Initial Construction}
We will use the algorithm in \cref{lem:for_all} (for-all sparsifier in a stream) as a subroutine. The whole algorithm is given in \cref{alg:random_order}. Below, we first give a high-level explanation of our algorithm. We first consider the case when the minimum cut size is $s = \Omega(\log n)$. Since we are now considering the random-order model, a prefix of the edges gives us partial information about the cut sizes. Let $H$ be the subgraph of $G$ which is formed by a prefix of the edges of the stream where $|H| \approx \frac{|G| \log n}{s}$ and $|G|$, $|H|$ are the number of the edges in $G$ and $H$ respectively. By a Chernoff bound, one can show that with high probability, for every subset $S \subset V$, $w_H(S, V\setminus S)$ is a small constant approximation to $w_G(S, V\setminus S)$. 
$H$ might be too large to store in memory, so we store $H_1$, a $(1 + \eps)$-for-all sparsifier of $H$, during the stream (we will decide $\eps$ later).
Therefore, using the graph $H_1$, we know all cuts in $G$ whose cut size is within a factor of $1.1$ of the true minimum cut of $G$. After this step, when a new edge $e \in G \setminus H$ arrives, we can use graph $H_1$ to check whether $e$ belongs to some $1.1$-approximate minimum cut in $G$, and if so, we save this edge (from \cref{lem:edge_number} we know that the total number of these edges saved is at most $O(n)$). The next question is how to estimate the exact value of these $1.1$-approximate cuts in $H$. The crucial observation is since the minimum cut size of $H$ is an integer and is at most $\Theta(\log n)$, we can set the approximation parameter $\eps = 1/\log^2 n$ for $H_1$. This gives the exact value of $w_H(S, V \setminus S)$ if $S$ and $V \setminus S$ is a constant-approximate minimum cut in $G$ (since the error is at most $O(\log n) \cdot 1/\log^2 n < 1$). Putting these things together, after the stream, we can enumerate all $1.1$-approximate minimum cuts and obtain their exact values, thus getting the exact minimum cut value. 

The remaining case to handle is when the minimum cut size is $O(\log n)$. Similarly, in this case, setting $\eps = 1/\log^2 n$ will give us the exact value of the minimum cut.
Finally, we do not know the value of $s$, so we will try all powers of $2$ for it.
Note that we do not need the exact value; a $2$-approximation to $s$ will suffice.
The detailed description of the algorithm is as follows:
\begin{algorithm}[ht]
    \SetKwInOut{Input}{Input}
    \Input{Undirected and unweighted graph $G(V,E)$ in a random-order stream with $n$ vertices and $m$ edges.}
    $\mathsf{ALG_1}$ is an instance of \cref{lem:for_all} with $\eps = 1/\log^2 n$.
    
    Maintain the degree of each node $d_i$ during the stream.
    \caption{\textsc{MinCutRandomOrder}}
    \label{alg:random_order}
    \ForEach{edge $e$ in the graph stream} {
        Feed $e$ to $\mathsf{ALG_1}$ \; 
        \If{$e$ is the $2^i$-th edge for some $i \in \mathbb{N}$}{
        $H_1 \gets$ output of $\mathsf{ALG_1}$ (Here $H_1$ is a for-all sparsifier of the prefix graph $H$)\; 
            \If{the minimum cut size of $H_1$ is larger than $c \log n$}{
                Save $H_1$ and break the for loop.
            }
        }
    }
    $H_1 \gets$ output of $\mathsf{ALG_1}$ \; 
    \If{there is no new edge in the stream}{
    \Return{the minimum cut value of $H_1$}.
    }
    $T \gets \emptyset$.
    
    \ForEach{new edge $e$ during the stream} {
        \If{$e$ is the cut edge between some $S$ and $V \setminus S$ where $S$ and $V \setminus S$ is a non-singleton $1.1$-approximate minimum cut in $H_1$}{
        Add $e$ to $T$
        }
    }
    \ForEach{$S$ where $S$ and $V \setminus S$ is a non-singleton $1.1$-approximate minimum cut in $H_1$}{$v_S \gets w_{H_1}(S, V \setminus S) + |\{e \in T: e \text{ is a cut edge between $S$ and $V \setminus S$}\}|$ }
    \Return{the minimum value of $\min v_S$ and  $\min d_i$}
\end{algorithm}

To show the correctness of our algorithm, we first prove the following key lemma.

\begin{lemma}
\label{lem:cut_concentration}
    Suppose that $H$ is a subgraph of $G$ formed by a prefix of the edges in the random order model. We have that for every $S \subset V$, if $|H| = |G| \cdot \frac{\ell}{ w_G(S, V \setminus S)} $, then with probability at least $1 - 2e^{-c_2 \cdot \ell}$, $|w_H(S, V \setminus S) - \ell | \le 0.1 \ell $.
\end{lemma}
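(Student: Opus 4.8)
The plan is to prove \cref{lem:cut_concentration} by a direct Chernoff-bound argument on the number of cut edges that land in the prefix. Fix a cut $(S, V\setminus S)$ and let $w_G := w_G(S, V\setminus S)$ be its value in $G$, so $w_G$ is the total number of edges crossing the cut in the full graph. Since the stream is in random order, each of the $m = |G|$ edges of $G$ appears in the prefix $H$ independently\footnote{Strictly speaking the events are negatively correlated rather than independent — the prefix has a fixed size — so I would either invoke a standard Chernoff bound for sampling without replacement / negatively associated variables, or note that conditioning on $|H|$ the crossing edges form a hypergeometric sample, for which the same concentration holds.} of whether it crosses the cut, with the prefix containing a $\tfrac{|H|}{|G|}$ fraction of all edges. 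Hence the random variable $X := w_H(S, V\setminus S)$ is a sum of (essentially) independent indicator variables, one per crossing edge of $G$, each present with probability $p := |H|/|G|$. Its expectation is $\mathbb{E}[X] = p \cdot w_G = \tfrac{|H|}{|G|} \cdot w_G = \ell$, using the hypothesis $|H| = |G| \cdot \tfrac{\ell}{w_G}$.

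The core step is then a two-sided multiplicative Chernoff bound: for a sum $X$ of independent $[0,1]$ random variables with mean $\mu$, $\Pr[|X - \mu| \ge \delta \mu] \le 2\exp(-\delta^2 \mu / 3)$ for $\delta \in (0,1)$. Applying this with $\mu = \ell$ and $\delta = 0.1$ gives $\Pr[\,|w_H(S, V\setminus S) - \ell| \ge 0.1\,\ell\,] \le 2\exp(-\ell/300)$, which is exactly the claimed bound with $c_2 = 1/300$ (or any sufficiently small absolute constant). I would write out these two displayed lines — the expectation computation and the Chernoff application — and nothing more is really needed for the statement itself.

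The main obstacle is not any calculation but the independence subtlety: in a genuine random-order stream the prefix length is not determined by independent coin flips per edge, so the crossing-edge count is hypergeometric, not binomial. I would address this by citing the standard fact that Chernoff-type upper and lower tail bounds hold verbatim for hypergeometric distributions (Hoeffding's reduction, or Serfling's bound for sampling without replacement), so that the same inequality $\Pr[|X-\ell|\ge 0.1\ell]\le 2e^{-c_2\ell}$ goes through. An equivalent and perhaps cleaner route is to first condition on the exact value of $|H|$ and then observe that conditioned on this, the set of crossing edges in $H$ is a uniformly random size-$|H|$ subset's intersection with the crossing set, which is hypergeometric with mean $\ell$; the bound then holds conditionally and hence unconditionally. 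Either way, once the distributional fact is in hand the proof is a one-line Chernoff estimate, and the lemma follows. (A later union bound over all relevant cuts — of which there are $\mathrm{poly}(n)$ many of small value — will then need $\ell = \Omega(\log n)$, which is why the algorithm chooses the prefix length so that $\ell = \Theta(\log n)$; but that union bound is outside the scope of this lemma.)
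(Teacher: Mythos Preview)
Your proposal is correct and follows essentially the same approach as the paper: define indicator variables for each crossing edge, compute that their expected sum is $\ell$, and apply a standard multiplicative Chernoff bound with $\delta=0.1$. If anything you are more careful than the paper, which simply invokes a ``standard Chernoff bound'' without mentioning the hypergeometric/negative-association subtlety that you correctly flag and resolve.
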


\begin{proof}
    Consider each edge $e$ that is a cut edge between $S$ and $V \setminus S$ (let $T$ denote the set of such edges). Let $y_e$ denote the indicator random variable where $y_e = 1$ if $e \in H$ and $y_e = 0$ otherwise. Then we have that $\mathbb{E}[w_H(S, V \setminus S)] = \mathbb{E}[\sum_{e \in T} y_e] = \ell$ from the assumption in the random-order model. By a standard Chernoff bound, we have that 
    \[
    \Pr[|w_H(S, V \setminus S) - \ell |\ge 0.1 \ell] \le 2e^{-c_2 \cdot \ell}
    \]
    for some constant $c_2$.
\end{proof}

We also need the following structural result for the approximate minimum cuts of a graph.

\begin{lemma}[\cite{rsw18}]
    \label{lem:edge_number}
    For any simple, unweighted graph and any constant $\eps>0$, the total number of edges that participate in non-singleton $(2-\eps)$-approximate minimum cuts is at most $O(n)$.
\end{lemma}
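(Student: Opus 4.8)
The statement to prove is \cref{lem:edge_number}, quoted from \cite{rsw18}: for any simple, unweighted graph and any constant $\eps > 0$, the total number of edges participating in non-singleton $(2-\eps)$-approximate minimum cuts is $O(n)$.

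\medskip

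\textbf{Plan.} The strategy I would use is to appeal to the cactus representation of minimum cuts together with a counting argument based on the ``cut structure'' of near-minimum cuts. First, I recall that the family of all approximate minimum cuts has a compact combinatorial structure: Benczúr and Karger, and also Dinitz–Karzanov–Lomonosov for exact mincuts, show that $(1+\delta)$-approximate cuts for small enough $\delta$ form a \emph{laminar-like} family, and for $\alpha < 3/2$ there are only $O(n^2)$ such cuts (Karger, \cref{lem:cut_number_constant}). But $2-\eps$ is larger than $3/2$, so the cut family is no longer laminar and can have up to $O(n^2)$ cuts with more complex crossing structure. The key point, though, is not to bound the number of \emph{cuts} but the number of \emph{edges} appearing across at least one such cut. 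So the plan is: (i) fix a specific global minimum cut and root the analysis there; (ii) partition the vertex set according to the ``$\alpha$-cut structure'' — the components obtained by looking at all $(2-\eps)$-approximate cuts simultaneously — and argue these decompose $V$ into $O(n/k)$-ish pieces where $k$ is the mincut value; (iii) for each edge $e=(u,v)$ that crosses some approximate cut $(S,\bar S)$, charge $e$ to the structure, and show the total charge is $O(n)$.

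\medskip

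\textbf{Key steps in order.} I would proceed as follows. \emph{Step 1:} Let $k$ denote the minimum cut value. Since singleton cuts are excluded, every relevant cut $(S,\bar S)$ has $|S|, |\bar S| \ge 2$, and its value is in $[k, (2-\eps)k)$. \emph{Step 2:} Invoke the structural theorem that the union of all non-singleton $(2-\eps)$-approximate minimum cuts, viewed as a set system, has the property that the ``crossing graph'' is limited — concretely, I would use the fact (implicit in Karger's cut-counting and made explicit in recovery/sparsification papers) that one can contract all vertices not separated by any such cut, reducing to a graph on $O(n)$ vertices but where additionally the \emph{total weight} of edges incident to the relevant structure is bounded. \emph{Step 3:} The crucial quantitative input: for a simple unweighted graph, a theorem (this is essentially the content of \cite{rsw18} Lemma, building on Kawarabayashi–Thorup-style sparsification and the observation that near-mincut edges form few ``bundles'') states that the non-singleton approximate mincuts can be captured by contracting the graph down to $\tilde O(n/k)$ vertices with $O(n)$ inter-cluster edges, and every edge crossing a non-singleton $(2-\eps)$-approximate cut must be one of these $O(n)$ inter-cluster edges. \emph{Step 4:} Conclude by summing: each inter-cluster edge is counted once, giving $O(n)$ total.

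\medskip

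\textbf{Main obstacle.} The hard part is Step 3: showing that although there can be $\Theta(n^2)$ distinct non-singleton $(2-\eps)$-approximate minimum cuts, the edges crossing them are confined to an $O(n)$-size set. The naive bound (number of cuts times edges per cut) is hopelessly large. The right tool is the observation that in a \emph{simple} graph with mincut $k$, any non-singleton approximate mincut $(S,\bar S)$ with $|S| \le |\bar S|$ forces $|S| \le O(1)$ when $k$ is large — because a simple graph on $|S|$ vertices has at most $\binom{|S|}{2}$ internal edges, so to have small boundary $\le (2-\eps)k$ while every vertex has degree $\ge k$, we need $|S| \cdot k \le 2\binom{|S|}{2} + (2-\eps)k$, forcing $|S| < 4/\eps$ roughly. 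Thus each relevant cut has a ``small side'' of $O(1/\eps)$ vertices, and the small sides across different cuts are either nested or disjoint-ish by an uncrossing argument (non-singleton small sides of near-mincuts can't cross too much), yielding $O(n)$ such small sides total, each contributing $O(1/\eps) \cdot k$... which is too much unless we're more careful — so actually the final accounting needs the sharper fact that distinct small sides are vertex-disjoint or contained, giving at most $O(n)$ of them and bounding crossing edges by $\sum |S_i| \cdot \Delta$ replaced by a direct $O(n)$ count of boundary edges in the laminar decomposition of small sides. Pinning down this uncrossing/laminarity claim precisely for the $2-\eps$ regime is the delicate part, and I would either cite it directly from \cite{rsw18} (as the paper does) or reprove it via the submodularity of cuts restricted to near-minimum values.
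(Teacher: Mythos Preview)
The paper does not prove \cref{lem:edge_number} at all; it is stated as a black-box citation from \cite{rsw18} and used without further argument. So there is no ``paper's own proof'' to compare against --- any sketch you give goes beyond what the present paper does.

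That said, your sketch contains a concrete error at the crucial point. In Step~3 you derive (correctly) that for a non-singleton $(2-\eps)$-approximate minimum cut with small side $S$,
\[
|S|\cdot k \;\le\; |S|(|S|-1) + (2-\eps)k,
\qquad\text{i.e.}\qquad
k\bigl(|S|-2+\eps\bigr)\;\le\;|S|(|S|-1),
\]
but then conclude ``forcing $|S|<4/\eps$ roughly.'' This is backwards. The inequality gives an \emph{upper bound on $k$} for each fixed $|S|$ (e.g.\ $|S|=2$ forces $k\le 2/\eps$), or equivalently, for large $k$ it forces $|S|$ to be \emph{large} --- roughly $|S|\ge k-O(1)$ --- not small. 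So the picture is the opposite of what you describe: either the minimum cut value $k$ is $O(1/\eps)$, or every non-singleton near-minimum cut has \emph{both} sides of size $\Omega(k)$. Your subsequent ``small sides are laminar, $O(n)$ of them'' accounting is therefore built on the wrong structural fact and does not go through as written.

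The actual argument in \cite{rsw18} exploits the correct direction of this dichotomy (small $k$ versus large sides) together with a more delicate charging; you are right that pinning this down is the heart of the matter, but the version you sketched would need to be reworked from Step~3 onward. Since the present paper only \emph{uses} the lemma, citing \cite{rsw18} directly (as you also suggest at the end) is exactly what is done here.
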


We first consider the case when the minimum cut size $s < c \log n$. In this case, the minimum cut of $H$ will never exceed $c \log n$ since $H$ is a subgraph of $G$. Since the cut value is an integer, setting $\eps = 1/\log^2 n$ gives the exact value of the size of the minimum cut.

We next consider the other case when $s \ge c \log n$. We will first show that, as the number of edges in $H$ increases, the minimum cut size of $H$ will increase to $\Theta(\log n)$ when $|H| = \Theta\big(\frac{|G| \log n}{s}\big)$. We next analyze the value of $w_H(S, V \setminus S)$. Let $\mathcal{S}_i$ ($1 \le i \le \log n$) denote the set of nodes such that
\[
\mathcal{S}_i = \{S \subset V \ | \ s\cdot 2^{i - 1} \le w_G(S, V \setminus S) \le s \cdot 2^i \} \;
\]
For each $\mathcal{S}_i$, from \cref{lem:cut_number_constant} we know that $|\mathcal{S}_i| \le O(n^{2^{i + 1}})$ and for every $S \in \mathcal{S}_i$, from \cref{lem:cut_concentration} we know that with probability at least $1 - 2e^{-c_2 c \cdot \log n \cdot 2^{i - 1}} = 1 - 2n^{-c_2 c \cdot 2^{i - 1}}$, $w_H(S, V \setminus S)\cdot \frac{|G|}{|H|}$ is a $1.1$-approximation of $w_G(S, V \setminus S)$. Note that the constant $c$ here can be sufficiently large, and after taking a union bound over all $S \in \mathcal{S}_i$ and all $\log n$ $\mathcal{S}_i$ we get that with probability at least $9/10$, for every $S \subset V$, $w_H(S, V \setminus S)\cdot \frac{|G|}{|H|}$ is a $1.1$-approximation of $w_G(S, V \setminus S)$. From the above analysis we immediately know that when $\frac{|G|}{|H|} = \Theta\big(\frac{s}{\log n}\big)$ from the subgraph $H$, we can learn the set 
\[\mathcal{S} = \{S \subset V \ | \ s \le w_G(S, V \setminus S) \le 1.1^2 s \},\]
and the next step is choosing the true minimum cut among them. For every non-single node set $S \in \mathcal{S}$, as mentioned, since $w_H(S, V \setminus S)$ is $\Theta(\log n)$, setting $\eps = 1/\log^2 n$ we can get its exact value. The remaining step is to estimate the value of $w_{G \setminus H}(S, V \setminus S)$. As described in \cref{alg:random_order}, after the previous step, we save every edge $e \in G \setminus H$, where $e$ belongs to at least one of the non-singleton $1.1^2$-approximate minimum cuts in $\mathcal{S}$. We know the exact value of $w_{G \setminus H}(S, V \setminus S)$ from this edge set. Putting the two things together, for every $S \in \mathcal{S}$ and $|S| \ge 2$, we know the value of $w_G(S, V \setminus S)$ after taking a sum of the two parts. For every singleton cut, we simply maintain the degree of each node during the stream. Thus, after taking the minimum value of the two parts (singleton and non-singleton cuts), we get the exact value of the minimum cut.

\paragraph{Space Complexity.} In the first phase of the algorithm, we use one instance of the algorithm in \cref{lem:for_all} with $\eps = 1/ \log^2 n $, which takes space $\Ot (n)$. In the second phase of the algorithm, we save all the edges that belong to the approximate minimum cut in $H$. From \cref{lem:edge_number}, we know that there are at most $O(n)$ such edges, and hence this part takes $O(n)$ words of space. We also maintain the degree of each node during the stream, which takes $O(n)$ words of space. Putting everything together, the space usage of our algorithm is $\widetilde{O}(n)$.

\paragraph{Time Complexity.} In the first phase of the algorithm, when one edge $e$ comes, the update and query time of $\mathsf{ALG}_1$ is polynomial, and we can also find the minimum cut in polynomial time. Hence, the update time here is still polynomial. In the second phase, when one edge $e$ comes, we can enumerate all $1.1$-approximate minimum cuts to check whether $e$ belongs to one of the approximate minimum cuts (we can use $O(n^2)$ time to enumerate all approximate minimum cuts, see, e.g.,~\cite{karger00}). Hence, we can do the update of this step in polynomial time. After all edges come, we can similarly enumerate all minimum cuts, from which the overall algorithm can be implemented in polynomial time.

\subsection{Faster Update and Post-Processing Time} In the above algorithm, the runtime bottleneck is when a non-prefix edge $e$ comes; we need to enumerate all approximate minimum cuts in the prefix graph to check whether to keep this edge. To get a faster runtime, we instead do a check when we collect $n$ edges and consider a similar procedure to what we did in \cref{sec:min_cut} where we use the recursive contraction algorithm with parameter $\alpha = 1 + \frac{1}{\log n}$. The difference is when doing the recursive contraction algorithm, we maintain the sketch $SB$ where $S$ is a $k$-sparse recovery matrix (see, e.g., \cite{GLPS12}) with $k \log (n / k)$ rows where $k = O(\log n)$ and $B$ is the edge-vertex matrix of the $n$ edges we currently collected. Particularly, during the recursive contraction process, when we contract the nodes $u, v$, we merge the columns that $u, v$ corresponds to in the sketch $SB$ and replace them with their sum. Next, we consider each leaf node in the recursion, corresponding to one specific cut for the prefix graph. We can check whether it is a non-singleton and approximate minimum cut in $O(1)$ time from the information the algorithm keeps. Note that if without the sketch matrix $S$, for each edge $e$ we want to check, it belongs to this specific cut if and only if the $e$-th coordinates of the remaining two columns of the edge-vertex matrix are $1$ and $-1$, and otherwise these two coordinates are both $0$. Since the minimum cut of the prefix graph is $\Theta(\log n)$, we can get that both of the remaining columns are $O(\log n)$-sparse before multiplying the sketch matrix $S$. This means that a $k$-sparse approximate recovery algorithm with $k = O(\log n)$ is sufficient to recover the indices of the non-zero coordinates of the remaining columns, which helps us to find the corresponding edges.   

\paragraph{Time and Space Complexity}
We analyze the time and space complexity of the above procedure. At each level of the recursion, since the sketch has $k \log(n / k)$ rows, we need $\Ot(Nk)$ words of space to save the sketch, and it takes $\Ot(k)$ time to combine two rows, where $N$ is the number of the remaining nodes in the current level of the recursion. Recall that we need $O(n)$ space in each level and $O(N)$ time to contract two nodes in the original recursive contraction algorithm. This implies the modification only increases the time and space complexity by a factor of $\Ot(k)$. Since the decoding time of the $k$-sparse recovery sketch is $k \cdot \mathrm{polylog} (n)$ and $k = O(\log n)$, we have that the procedure has time complexity $\Ot(n^2)$ and space complexity $\Ot(n)$. Finally, note that when we collect a set of $n$ edges in the suffix graph, we can do our above checking procedure while collecting the following $n$ edges in the graph stream, which results in a strictly $O(n)$ update time of our algorithm. Similarly, suppose the minimum cut of the graph is $\Theta(c)$. When we do the post-processing, we can use the $c$-sparse recovery sketch when enumerating all the approximate minimum cuts, which results in a $\Ot(n^2 c)$ post-processing time. Putting everything together, we have the correctness of \cref{thm:random_order}.
\vspace{0.5cm}

Lastly, we note that our algorithm can not only return the exact min-cut value but also collect all the edges that participate in any of the minimum cuts. When computing the for-all sparsifier of the prefix graph, for an edge that participates in at least one approximate min-cut, the probability that it will be sampled is $\Omega(1)$ as otherwise, the estimated cut value in such a cut will have two different values each with a constant probability, contradicting the guarantees of the for-all sparsifier. Hence, we can collect all of these edges from oversampling by a constant factor, which results in finding all the minimum cuts of the input graph and the edges crossing them.

\section*{Acknowledgments}
The authors would like to thank the ITCS 2025 reviewers for their anonymous feedback. Alexandro Garces and Vihan Shah are extremely grateful to Sepehr Assadi for many helpful conversations throughout the project. They also thank the organizers of DIMACS REU in Summer 2023, in particular Lazaros Gallos, for initiating this collaboration and for all their help and encouragement along the way.

\bibliographystyle{alpha} 
\bibliography{citation}
 \newpage
\appendix
\section{Lower Bound: Approximate Minimum Cut}\label{sec:lb}
This section gives an alternate proof of the randomized algorithm lower bound for finding an approximate minimum cut (\cref{thm:lower_bound}). 

\vspace{-5pt}\paragraph{\cref{thm:lower_bound}.}
\emph{Fix $\eps > 1/n$. Any randomized algorithm that outputs a $(1+\eps)$-approximation to the minimum cut of a simple, undirected graph in a single pass over a stream with probability at least $2/3$ requires $\Omega(n/\eps)$ bits of space.}
\vspace{5pt}
 
Notably, this shows that the result in \cref{thm:sparsifier} is optimal in space complexity up to polylogarithmic factors. In~\Cref{sec:min_cut}, we have shown this lower bound for randomized algorithm using the $k$-edge-connectivity problem in~\cite{SW15}.
Below we also provide an alternative self-contained construction for the randomized result using the Index communication problem.

We also provide a self-contained construction of the randomized algorithm lower bound result using the Index communication problem. Existing work has shown an $\Omega(n^2)$ bits of space lower bound for one-pass streaming algorithms that estimate the exact value of the minimum cut on undirected graphs \cite{Zelke2011Intractability}. This is done via a standard reduction from the Index problem in communication complexity. We use an extension of the same technique to prove a tight lower bound on approximate minimum cut streaming algorithms.

\subsection{Warm up: Exact Minimum Cut}
For the purpose of demonstration, we first give an overview of the proof for the exact minimum cut lower bound given by Theorem 1 of \cite{Zelke2011Intractability}. Recall that it is a standard known result in communication complexity that any randomized protocol solving the following Index problem on a $n$-bit binary vector with probability at least 2/3 requires $\Omega(n)$ bits of memory \cite{kushilevitz_nisan_1996} (see also \cref{lem:index}). Suppose that there exists a one-pass streaming algorithm $\mathcal{A}$ which computes the exact minimum cut value of any arbitrary simple unweighted graph of $n$ vertices using $o(n^2)$ bits of space with probability at least $2/3$. The reduction is given as follows. Alice has a bit vector $x$ of length $(n^2-n)/2$, which she uses to represent the upper half of the adjacency
matrix of the graph $G$. She instantiates $\mathcal{A}$ with input graph $H$ with $7n+1$ vertices, where the edges of $H$ will be determined later. The edges between the first $n$ nodes are the same as those in $G$. After feeding
the edges of $G$ into $\mathcal{A}$, Alice sends the memory configuration of $\mathcal{A}$ to Bob. Alice also sends the degree of each of the $n$ vertices in $G$ to Bob. In total, this is $O(n\log n)+o(n^2) = o(n^2)$ bits which are sent to Bob.

Bob has an index $i\in [1, (n^2-n)/2]$, corresponding to an edge that he wishes to determine whether it is in $G$. We call this potential edge $(a,b)$. Bob then adds edges to $H$ to construct a new graph $H^+$ under the following rules. He first adds edges into $H$ to construct two disjoint cliques $S$ and $T$ of $3n$ vertices each (here, the nodes in $S$ and $T$ are disjoint with the first $n$ nodes in $H$). Bob next adds edges from all vertices in $S$ to $a$ and $b$, and edges from all vertices in $T$ to all vertices in $V\setminus\{a,b\}$. We denote sets $L = S\cup\{a,b\}$ and $R=T\cup V\setminus\{a,b\}$.

Bob then adds edges from the last remaining vertex, which we call $c$, to an arbitrary set of $\deg_G(a) + \deg_G(b) - 1$ vertices in $V\setminus\{a,b\}$ (see \cref{fig:exact} for a complete diagram of the construction). He then queries $\mathcal{A}$ about the minimum cut value. The observation is that there are two possible minimum cuts: $C_1=(L,R\cup\{c\})$ with size $\deg_G(a)+\deg_G(b)-2$ if $ab$ is an edge in $G$, or $C_2=(L\cup R,\{c\})$ with size $\deg_G(a)+\deg_G(b)-1$ if $ab$ is not an edge in $G$. From this, Bob then determines the value $x[i]$ with $o(n^2)$ communication with probability at least $2/3$, a contradiction.

\begin{figure}
\includegraphics[width=10cm]{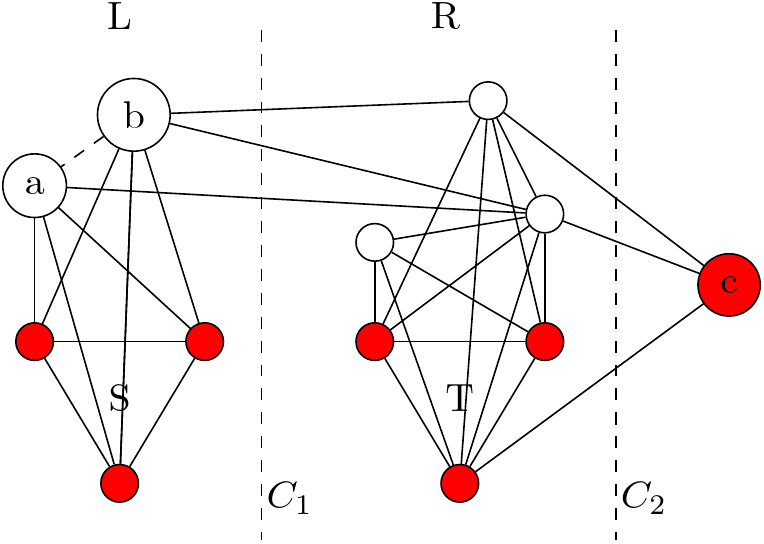}
\centering
\caption{Construction of graph $H^+$ for exact minimum cut from \cite{Zelke2011Intractability}. Red nodes are the ones that are added in addition to the original graph.}
\label{fig:exact}
\end{figure}

\subsection{Approximate Minimum Cut}
We are ready to give proof for \cref{thm:lower_bound}. First, we assume that $1/\eps < n$, as otherwise there already exists the known $\Omega(n^2)$ lower bound for minimum cut. Additionally, we assume $\eps = o(1)$, as a constant approximation for minimum cut answers whether a graph is connected or not, and there already exists a $\Omega(n)$ lower bound for connectivity \cite{feigenbaum2008}.

Suppose there exists a one-pass streaming algorithm $\mathcal{A}$ which computes a $(1+\eps)$-approximation to the minimum cut of any arbitrary simple unweighted graph on $n$ vertices using $o(n/\eps)$ bits of space and with probability at least $2/3$. Alice has a bit vector $x$ of length $s = \Theta(n/\eps)$, which she uses to represent the edges of a graph $G$, which is formed by $\eps n$ disjoint graphs each with $1/(4\eps)$ vertices. She instantiates $\mathcal{A}$ with an input graph $H$ with $(n + 6/\eps)/4 + 1$ vertices, in which the edges between the first $n$ nodes are the same as the edges in $G$. After feeding the edges of $G$ into $\mathcal{A}$, she then sends the memory configuration of $\mathcal{A}$ to Bob. Alice also sends the degree of each of the $n/4$ vertices in $G$ to Bob. In total this is $O(n\log (1/\eps))+o(n/\eps) = o(n/\eps)$ bits sent to Bob.

Bob has an index $i\in [1, s]$, which corresponds to an edge within $G$ that he wishes to determine whether it exists or not. Specifically, this edge belongs to one of the $\eps n$ disjoint subgraphs. Let us denote this subgraph as $G_i$.
Bob now constructs a new graph $H^+(\eps)$ under the following rules. He constructs two cliques $S$ and $T$, each of size $3/(4\eps)$. For subgraph $G_i$, Bob connects $S$, $T$, and vertex $c$ in the same manner as the exact minimum cut construction. For all other vertices in subgraphs other than $G_i$, Bob does not follow this construction and instead arbitrarily connects each one of them to all vertices in either $S$ or $T$. Denote the set of all subgraphs connected to $S$ as $G_S$, and $G_T$ similarly. See \cref{fig:approximate} for a complete diagram of the construction.

\begin{figure}
\includegraphics[width=10cm]{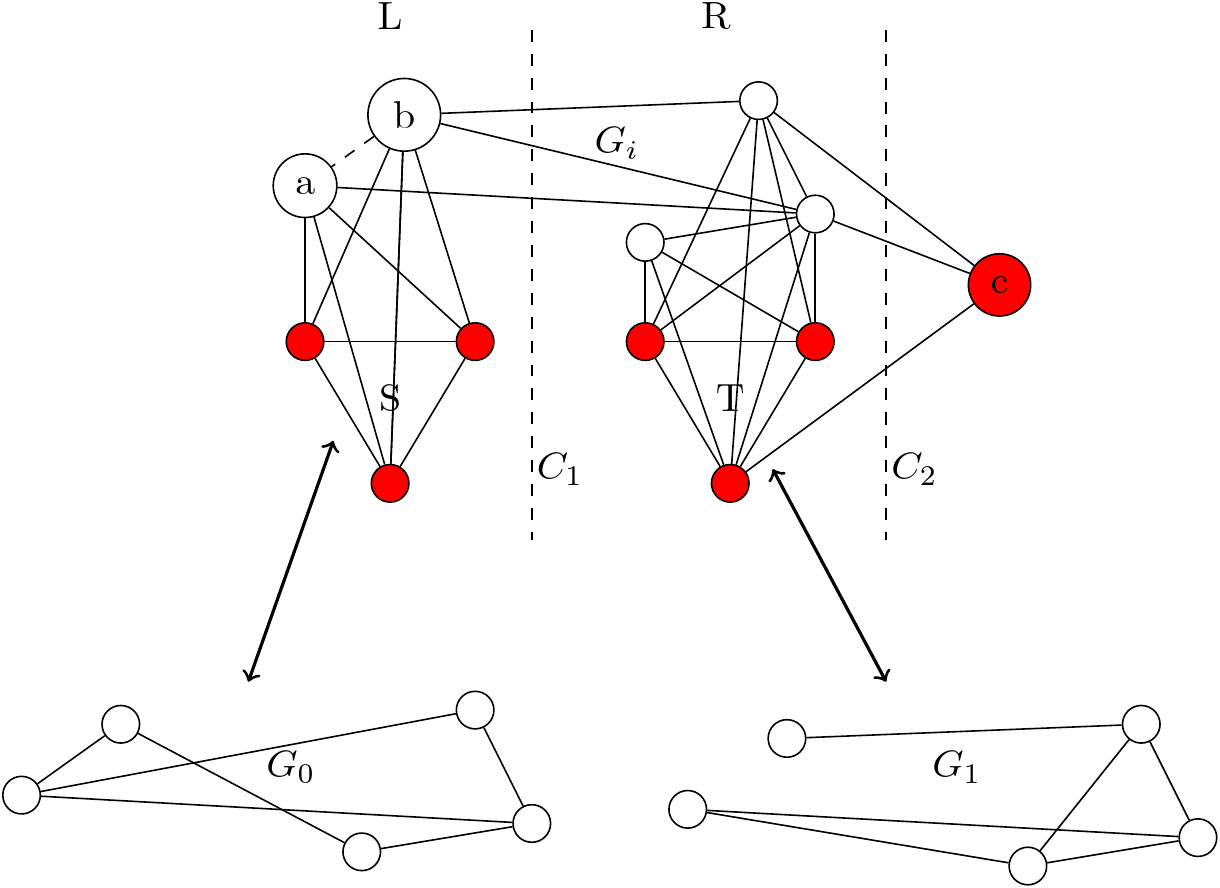}
\centering
\caption{Construction of graph $H^+(\eps)$ for approximate minimum cut. Red nodes are the ones that are added in addition to the original graph. $G_i$ represents the graph containing (potential) edge $ab$, and $G_0$ and $G_1$ represent the remaining graphs. The double arrow edges represent bicliques between graph $G_0\in G_S$ and $S$, and graph $G_1 \in G_T$ and $T$.}
\label{fig:approximate}
\end{figure}

Bob then queries $\mathcal{A}$ to get a $(1+\eps)$-approximation of the minimum cut value of $H^+(\eps)$. We denote sets $L = S\cup\{a,b\}$ and $R=T\cup V_i\setminus\{a,b\}$, where $V_i$ represents the vertices in graph $G_i$. Once again define the two candidate minimum cuts as $C_1=(L \cup G_S,R\cup G_T \cup \{c\})$ with size $\deg_G(a)+\deg_G(b)-2$ if $ab$ is an edge in $G$, and $C_2=(L\cup G_S \cup R \cup G_T,\{c\})$ with size $\deg_G(a)+\deg_G(b)-1$ if $ab$ is not an edge in $G$. The crucial observation here is that the two candidate minimum cuts are still $C_1$ and $C_2$, as shown by the following claim:
\begin{claim}
    All graph cuts in $H^+(\eps)$ besides $C_1$ and $C_2$ have size at least $3/(4\eps)-1$.
\end{claim}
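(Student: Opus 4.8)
The plan is to prove the contrapositive: every cut $(A,B)$ of $H^+(\eps)$ that crosses fewer than $3/(4\eps)-1$ edges is one of $C_1,C_2$. Two elementary facts do almost all the work. First, $S$ and $T$ are cliques on $3/(4\eps)$ vertices, so if a cut places $k$ vertices of such a clique on one side and the remaining $3/(4\eps)-k$ on the other (both $\ge 1$), it already cuts $k\bigl(3/(4\eps)-k\bigr)\ge 3/(4\eps)-1$ clique edges; hence a cut within budget must keep all of $S$ together and all of $T$ together. Second, any vertex adjacent to \emph{every} vertex of $S$ must lie on the same side as $S$, since otherwise it alone sends $|S|=3/(4\eps)$ edges across; likewise for $T$. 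By construction, the vertices adjacent to all of $S$ are precisely $a$, $b$, and all vertices of the subgraphs in $G_S$, and the vertices adjacent to all of $T$ are precisely $V_i\setminus\{a,b\}$ together with all vertices of the subgraphs in $G_T$.

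Given these, the case analysis is short. Assume without loss of generality $S\subseteq A$; the second fact then forces $G_S\cup\{a,b\}\subseteq A$. Now branch on where $T$ lands. If $T\subseteq A$, the second fact additionally forces $G_T\cup(V_i\setminus\{a,b\})\subseteq A$, so $A$ is all of $V(H^+(\eps))$ except possibly $c$; as $B\neq\varnothing$ we get $B=\{c\}$, which is exactly $C_2$. If instead $T\subseteq B$, we get $G_T\cup(V_i\setminus\{a,b\})\subseteq B$, so $A\supseteq L\cup G_S$ and $B\supseteq R\cup G_T$ with only $c$ left to place; putting $c$ with $B$ gives precisely $C_1$.

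The single leftover case — $S\subseteq A$, $T\subseteq B$, and $c\in A$ — is the crux, since it yields the ``twin'' partition $(L\cup G_S\cup\{c\},\,R\cup G_T)$, which is neither $C_1$ nor $C_2$ and is not ruled out by the clique/biclique argument. Here I would bound the crossing edges by hand: every neighbour of $c$ lies in $V_i\setminus\{a,b\}\subseteq B$, so all $\deg_{H^+}(c)$ edges at $c$ cross, and on top of that every edge from $\{a,b\}\subseteq A$ into $V_i\setminus\{a,b\}\subseteq B$ crosses, for a total of order $\deg_G(a)+\deg_G(b)$. To push this past $3/(4\eps)-1$ I would use the quantitative feature of the hard instance that $\deg_G(a)+\deg_G(b)=\Theta(1/\eps)$ — which the encoding is arranged to guarantee, and which is needed anyway, since $|C_1|$ and $|C_2|$ differ by only $O(1)$ and a $(1+\eps)$-approximation can separate them only when the cut value is $O(1/\eps)$. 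This last step is the one requiring genuine care: the clique bound disposes of every cut that perturbs $S$, $T$, or one of the bicliques essentially for free, but the twin of $C_1$ escapes that argument and must be handled through the numerical relationship between $c$'s degree and $1/\eps$ built into the construction.
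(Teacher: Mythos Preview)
Your reduction to a three-vertex picture is exactly the paper's: it too argues that any small cut keeps $L\cup G_S$ intact and $R\cup G_T$ intact, contracts each to a super-vertex, and is left with the triangle on $\{L\cup G_S,\;R\cup G_T,\;c\}$. The paper then dismisses the third cut $C_3=(L\cup G_S\cup\{c\},\,R\cup G_T)$ with ``it is easy to verify''; you are right to single it out as the only case needing separate care.

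Your proposed handling of $C_3$, however, does not close the gap. The construction places no lower bound on $\deg_G(a)+\deg_G(b)$: the string $x$ is arbitrary (uniform in the distributional version of Index), and even in the typical case $\deg_G(a),\deg_G(b)\approx 1/(8\eps)$ one gets $|C_3|\approx 1/(2\eps)$, which is still below $3/(4\eps)-1$. So invoking ``$\deg_G(a)+\deg_G(b)=\Theta(1/\eps)$'' cannot push $|C_3|$ past the stated threshold --- the constants do not cooperate, and for small-degree instances $|C_3|$ can be as small as $1$.

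The clean way out is structural rather than quantitative. Since every neighbour of $c$ lies in $R\cup G_T$, the contracted triangle has no edge between $c$ and the super-vertex $L\cup G_S$; hence $|C_3|=|C_1|+|C_2|$, and in particular $C_3$ is strictly larger than $\min(|C_1|,|C_2|)$. Thus the minimum cut of $H^+(\eps)$ is always one of $C_1,C_2$, which is the only fact the lower-bound argument actually uses. Read literally, the threshold $3/(4\eps)-1$ in the claim need not hold for $C_3$; what is true (and sufficient) is that no cut other than $C_1,C_2$ can be the minimum.
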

\begin{proof}
    We will prove our claim by describing all cuts with size $<3/(4\eps)-1$. First, any cut that separates any two vertices within $L$ (or $R$, respectively) must have size at least $3/(4\eps)-1$, from an identical argument used by the construction in \cite{Zelke2011Intractability}. Additionally, any cut that splits any vertex within $G_S$ from $S$ ($G_T$ from $T$ respectively) must also have size at least $3/(4\eps)-1$, as every vertex in $G_S$ connects to all vertices in $S$. 

    Therefore, any cut with size $<3/(4\eps)-1$ cannot split any part of either $L \cup G_S$ or $R\cup G_T$. We can thus contract both sets into a single super-vertex with parallel edges. From this, the only vertices that remain are $c$ and the super-vertices of $L \cup G_S$ and $R\cup G_T$, and it is easy to verify the only remaining cuts of size $<3/(4\eps)-1$ are $C_1$ and $C_2$.
\end{proof}

Both cuts $C_1$ and $C_2$ have a maximum value of $<\deg_G(a) + \deg_G(b) < 1/(2\eps)$. Hence a $(1+\eps)$ approximation gives us the cut value with an additive error strictly less than 0.5, which is enough to distinguish whether $ab$ is in graph $G$.

\end{document}